\DeclareMathOperator*{\argmax}{arg\,max}
\DeclareMathOperator*{\argmin}{arg\,min}
\theoremstyle{plain} 
\newtheorem{lemma}{Lemma}
\newtheorem{corollaire}[lemma]{Corollary}
\theoremstyle{plain} 
\newtheorem{definition}[lemma]{Definition}
\newtheorem{proposition}[lemma]{Proposition}
\newtheorem*{proposition sans num}{Proposition}
\newtheorem{remark}[lemma]{Remark}
\newtheorem{theorem}[lemma]{Theorem}
\newcommand{\bC}{\mathbbm{C}}
\newcommand{\bF}{\mathbbm{F}}
\newcommand{\bN}{\mathbbm{N}}
\newcommand{\bR}{\mathbbm{R}}
\newcommand{\cC}{\mathcal{C}}
\newcommand{\cD}{\mathcal{D}}
\newcommand{\cE}{\mathcal{E}}
\newcommand{\cF}{\mathcal{F}}
\newcommand{\cG}{\mathcal{G}}
\newcommand{\cO}{\mathcal{O}}
\newcommand{\cP}{\mathcal{P}}
\newcommand{\cQ}{\mathcal{Q}}
\newcommand{\cV}{\mathcal{V}}
\newcommand{\cX}{\mathcal{X}}
\newcommand{\maxDeg}{d_B(d_B + 2 d_A - 1)}
\newcommand\reducedProba{e^{-\Omega(\sqrt{n})}}
\newcommand\fullProbaii{C |\cV| \left(\frac{p}{p_{\textrm{th}}}\right)^{\alpha t}}
\DeclarePairedDelimiter\ceil{\lceil}{\rceil}
\title{Constant overhead quantum fault-tolerance with quantum expander codes}
\date{}
\author{Omar Fawzi\thanks{Univ Lyon, ENS de Lyon, CNRS, UCBL, LIP UMR 5668, F-69007 Lyon, France. omar.fawzi@ens-lyon.fr} \qquad Antoine Grospellier\thanks{Inria, France. antoine.grospellier@inria.fr, anthony.leverrier@inria.fr} \qquad Anthony Leverrier\footnotemark[2]}
\begin{document} 
\maketitle

\begin{abstract}
We prove that quantum expander codes can be combined with quantum fault-tolerance techniques to achieve constant overhead: the ratio between the total number of physical qubits required for a quantum computation with faulty hardware and the number of logical qubits involved in the ideal computation is asymptotically constant, and can even be taken arbitrarily close to 1 in the limit of small physical error rate.  
This improves on the polylogarithmic overhead promised by the standard threshold theorem. 

To achieve this, we exploit a framework introduced by Gottesman together with a family of constant rate quantum codes, \emph{quantum expander codes}. Our main technical contribution is to analyze an efficient decoding algorithm for these codes and prove that it remains robust in the presence of noisy syndrome measurements, a property which is crucial for fault-tolerant circuits. We also establish two additional features of the decoding algorithm that make it attractive for quantum computation: it can be parallelized to run in logarithmic depth, and is single-shot, meaning that it only requires a single round of noisy syndrome measurement.
\end{abstract}

\section{Introduction}

Quantum computers are expected to offer significant, sometimes exponential, speedups compared to classical computers. For this reason, building a large, universal computer, is a central objective of modern science. Despite two decades of effort, experimental progress has been somewhat slow and the largest computers available at the moment reach a few tens of physical qubits, still quite far from the numbers necessary to run ``interesting'' algorithms. A major source of difficulty is the inherent fragility of quantum information: storing a qubit is quite challenging, processing quantum information even more so. 

Any physical implementation of a quantum computer is necessarily imperfect because qubits are subject to decoherence and physical gates can only be approximately realized. In order to perform a correct computation on a faulty computer, techniques of fault-tolerant computation must be developed. One of the crowning achievements of the early days of quantum computing is the \emph{threshold theorem} which states that upon encoding the logical qubits within the appropriate quantum error correcting code, it is possible to perform arbitrary long computations on a faulty quantum computer, provided that the noise level is below some \emph{constant} threshold value \cite{aharonov2008fault}. This solution comes at a cost, however, since the fault-tolerant version of the circuit to be performed is usually larger that the initial version. In particular, a number of ancilla qubits is required and the space overhead, \textit{i.e.}, the ratio between the total number of physical qubits and the number of logical qubits, scales polylogarithmically with the number of gates involved in the original computation. Indeed, the main technique to achieve fault-tolerance is to protect the logical qubits with concatenated codes. In order to guarantee a final failure probability $\varepsilon$ for a circuit $\mathrm{C}$ acting on $k$ qubits with $|\mathrm{C}|$ locations\footnote{A location refers either to a quantum gate, the preparation of a qubit in a given state, a qubit measurement or a wait location if the qubit is not acted upon at a given time step.}, $\cO(\log \log (|\mathrm{C}|/\varepsilon))$ levels of encoding are needed, which translates into a $\mathrm{polylog}(|\mathrm{C}|/\varepsilon)$ space overhead.
While this might seem like a reasonably small overhead, this remains rather prohibitive in practice, and more importantly, it raises the question of whether this value is optimal. 
In this paper, we consider a realistic model for quantum computing where the quantum gates are noisy, but all classical computation is assumed to be fast and error-free. Note that if classical gates are also noisy, then it is known that classical fault-tolerance cannot be obtained with constant overhead \cite{reischuk1991reliable,gacs1994lower}.

In a breakthrough paper, Gottesman has shown that the polylogarithmic overhead was maybe not necessary after all, and that polynomial-time computations could be performed with a noisy circuit with only a \emph{constant} overhead \cite{gottesman2014fault}. In fact, the constant can even be taken arbitrarily close to 1 provided that the physical error is sufficiently rate small. 
In order to overcome the polylogarithmic barrier, Gottesman suggested to use quantum error correcting codes with \emph{constant rate}.  More precisely, the idea is to encode the logical qubits in large blocks, but still of size sub-linear in $k$. The encoding can still be made fault-tolerant thanks to concatenation, but this only yields an overhead polylogarithmic in the block size, and choosing a sufficiently small block size to a sub-linear overhead for encoding. Gates are then performed with Knill's technique by teleporting the appropriate encoded states. Overall, apart from the initial preparation and final measurement, the encoded circuit alternates between steps applying a gate of the original circuit on the encoded state with Knill's technique, and steps of error correction for the quantum code consisting of a measurement of the syndrome, running a (sufficiently fast) classical decoding algorithm and applying the necessary correction. For this to work out properly, and to keep a constant overhead, the syndrome measurement should be efficient and this will be the case if the quantum code is a low-density parity-check (LDPC) code. Indeed, in that case, the syndrome measurement circuit will be of constant depth and won't require any additional overhead. 
The last property needed for the scheme to work is the existence of an efficient classical decoding algorithm for the quantum code working even in the presence of noise on the syndrome measurement. 

The main result of Gottesman is as follows: provided that the right family of quantum codes exists, it is possible to perform fault-tolerant quantum computing with constant overhead. By right family, we mean a family of constant rate LDPC quantum codes with an efficient decoding algorithm robust against noisy syndrome measurements. More precisely, the decoding algorithm should correct stochastic errors of linear weight and fail with negligible probability. Ideally, we also want the algorithm to be sufficiently fast to avoid errors building up during the decoding. At the time, no such family of codes was known to exist. In fact, families of constant rate LDPC codes with unbounded minimum distance are quite difficult to construct, even forgetting about the decoding problem. 
Possible candidate families include surface codes \cite{freedman2002z2}, 4-dimensional topological codes \cite{guth2014quantum} and hypergraph product codes \cite{tillich2014quantum}. While surface codes come with an efficient decoding algorithm based on minimum weight matching \cite{edmonds1965maximum}, they only display a logarithmic minimum distance if they have constant rate \cite{delfosse2013tradeoffs}. Topological 4-D codes come with a much larger minimum distance, but the available efficient decoding algorithms are only known to perform well for errors of logarithmic weight \cite{hastings2013decoding,londe2017golden}. In both cases, this is insufficient to provide a universal threshold for the error rate, independent of the size of the quantum circuit to be performed. 
Finally, the family of hypergraph product codes yields the best minimum distance to date for constant rate LDPC codes: the minimum distance scales like the square-root of the block length. 
In general, however, we don't know of any efficient decoding algorithm for hypergraph product codes. 

The hypergraph product construction takes a classical code $[n,k,d_{\min}]$ as input an yields a quantum code $[[N, K, D_{\min}]]$ of length $N = \Theta(n^2)$, dimension $K = \Theta(k^2)$ and a minimum distance equal to that of the classical code. When applying this construction with a classical expander code \cite{sipser1996expander}, it yields a so-called \emph{quantum expander code} \cite{leverrier2015quantum}. In that case, one has $K = \Theta(N)$ and $D_{\min} = \Theta(\sqrt{N})$, and interestingly, one can take inspiration of the efficient bit-flip decoding algorithm for classical expander codes \cite{sipser1996expander} to design an efficient decoding algorithm for the quantum expander codes. Such an algorithm, the \emph{small-set-flip decoding algorithm}, was introduced in  \cite{leverrier2015quantum} where it was proved that it corrects arbitrary (adversarial) errors of weight $O(D_{\min})$ in linear time. More recently in~\cite{fawzi2018efficient}, we studied the behavior of this algorithm against stochastic noise and proved that it corrects random errors of linear weight, except with negligible probability.

In the present work, we extend the analysis significantly and show that the algorithm still works in the presence of a noisy syndrome. This was the missing condition to satisfy all the criteria required by Gottesman's construction. In other words, quantum expander codes can be exploited to obtain quantum fault-tolerance with a constant overhead. In addition, we establish two remarkable features of the decoding algorithm: first, it is \emph{single-shot} meaning that the syndrome measurement need not be repeated a polynomial number of times as in typical constructions \cite{bombin2015single}: one measurement suffices; second, the algorithm can be parallelized to run in logarithmic time instead of linear time. This second point is important since storage errors will always affect the qubits during the classical decoding step, meaning that it is crucial to reduce the necessary time as much as possible. We note, however, that for our main result below to hold, we need to assume that the error rate affecting the qubits during the decoding step is constant and doesn't depend on the size of the computation. Without this extra assumption in our model, it is implausible that true constant space overhead quantum fault-tolerance can be achieved. 

We obtain the following general result by using our analysis of quantum expander codes in Gottesman's generic construction~\cite{gottesman2014fault}. More details on the definition of circuits and the noise model are given later in Section~\ref{sec:FT}.
\begin{theorem}
For any $\eta > 1$ and $\varepsilon > 0$, there exists $p_{T}(\eta) > 0$ such that the following holds for sufficiently large $k$. Let $\mathrm{C}$ be a quantum circuit acting on $k$ qubits, and consisting of $f(k)$ locations for $f$ an arbitrary polynomial. There exists a circuit $\mathrm{\tilde{C}}$ using $\eta k$ physical qubits, depth $\cO(f(k))$ and number of locations $\cO(k f(k))$ that outputs a distribution which has total variation distance at most $\varepsilon$ from the output distribution of $\mathrm{C}$, even if the components of $\mathrm{\tilde{C}}$ are noisy with an error rate $p < p_{T}$.
\end{theorem}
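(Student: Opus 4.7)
The plan is to invoke the generic fault-tolerance reduction of Gottesman~\cite{gottesman2014fault} with the family of quantum expander codes. That reduction establishes the existence of a constant-overhead fault-tolerant simulation provided that one can supply a family of constant-rate LDPC quantum codes equipped with an efficient classical decoder that is robust against stochastic noise of linear weight \emph{even when the syndrome measurements are themselves noisy}. The bulk of the proof of the theorem therefore consists of checking that each of these ingredients is delivered by quantum expander codes together with the small-set-flip decoder whose analysis is carried out in the body of the paper.

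First, I would fix the block-length parameters of the quantum expander code. Given the target overhead $\eta>1$, I would pick a classical expander code whose hypergraph product yields a quantum code of rate $K/N \geq 1/\eta$, which is possible because the rate of quantum expander codes can be pushed arbitrarily close to the rate of the underlying classical expander code (and hence made to exceed $1/\eta$ for $\eta>1$). The $k$ logical qubits of $\mathrm{C}$ are partitioned into $\lceil k/K \rceil$ blocks, each encoded independently, yielding a total of at most $\eta k$ physical qubits for sufficiently large $k$. The LDPC property ensures that one round of syndrome measurement is a constant-depth circuit, so it contributes only $\mathcal{O}(1)$ to the depth per error-correction step.

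Second, I would assemble the simulated circuit $\mathrm{\tilde C}$ following the Knill teleportation scheme: encoded ancillas are prepared fault-tolerantly using the standard concatenation-based gadget on each code block (this preparation cost is absorbed in $\eta$ since it is polylogarithmic in $K$ which is a constant chosen once and for all); every logical gate of $\mathrm{C}$ is implemented by gate teleportation; after each such step one performs a single noisy syndrome extraction, runs the parallelized small-set-flip decoder, and applies the resulting Pauli correction. Since the decoder can be executed in depth $\mathcal{O}(\log N) = \mathcal{O}(1)$ (again $N$ is a constant) and each logical step consists of $\mathcal{O}(N)$ physical operations on each block, the total depth is $\mathcal{O}(f(k))$ and the total number of locations is $\mathcal{O}(k f(k))$.

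Third, I would invoke the single-shot noisy-syndrome analysis of the small-set-flip decoder established earlier in the paper: if the data error and the syndrome error are drawn from an i.i.d. (or local-stochastic) distribution at rate $p$ below some constant threshold $p_{\mathrm{th}}$, then the residual error after decoding has weight that is reduced with probability $1-e^{-\Omega(\sqrt{N})}$ and therefore does not propagate between successive error-correction cycles. Combining this with a union bound over the $\mathcal{O}(k f(k))/N$ error-correction rounds, and choosing $N$ (equivalently the constant inside $\eta$) large enough so that the per-cycle failure probability multiplied by the number of cycles is at most $\varepsilon$, yields the stated total-variation bound against an ideal execution of $\mathrm{C}$. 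The threshold $p_T(\eta)$ is obtained as the minimum of $p_{\mathrm{th}}$ and the threshold of the concatenated preparation gadget used inside each block. The delicate point, and the one that required the main technical effort of the paper, is precisely the robustness of small-set-flip against noisy syndromes invoked in this last step; once that is granted, the remainder is bookkeeping inside Gottesman's framework.
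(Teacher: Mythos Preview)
Your outline follows the right high-level route (Gottesman's reduction plus the noisy-syndrome analysis of small-set-flip), but it contains a genuine gap: you take the block parameters $K,N$ to be constants ``chosen once and for all''. With a fixed block size the per-block decoding failure probability $e^{-\Omega(\sqrt{N})}$ is itself a fixed constant, while the number of error-correction rounds is $\Theta(f(k))$ and grows polynomially in $k$. Your union bound step, ``choosing $N$ large enough so that the per-cycle failure probability multiplied by the number of cycles is at most $\varepsilon$'', therefore cannot succeed: no constant $N$ makes a constant times $f(k)$ bounded for all sufficiently large $k$. The same mis-scaling surfaces when you write $\mathcal{O}(\log N)=\mathcal{O}(1)$ for the decoder depth.

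In the paper (following \cite{gottesman2014fault}), the block size is \emph{not} constant: one partitions the $k$ logical qubits into $\ell=\log^{c}k$ blocks, so each block encodes $k'=k/\ell$ logical qubits in $n'=k'/R$ physical qubits. Then the per-block failure probability $e^{-\Omega(\sqrt{n'})}$ decays faster than any polynomial in $k$, so the union bound over the $\mathrm{poly}(k)$ rounds gives total failure at most $\varepsilon$. At the same time the ancilla cost of the concatenated state preparation in a simulation cycle is $\mathcal{O}\bigl(n'\,\mathrm{polylog}(n'f(k)/\varepsilon)\bigr)$, which is sublinear in $k$ for $c$ chosen large enough; this is what actually ``absorbs'' the polylog overhead into the constant $\eta$, not the fact that $K$ is constant. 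Finally, the $\mathcal{O}(\log n')=\mathcal{O}(\log k)$ classical decoding time is handled by the model assumption that classical computation is error-free and fast, not by $N$ being constant. Adjusting your argument so that the block size grows as $k/\mathrm{polylog}(k)$ fixes all three points and brings it in line with the paper's construction.
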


Before moving to the proof techniques, let us mention some limitations of the present work. For our analysis to apply, we need bipartite expander graphs with a large (vertex) expansion. A first issue is that there is no known efficient algorithm that can deterministically construct such graphs\footnote{While algorithms to construct graphs with large \textit{spectral} expansion are known, they do not imply a sufficient vertex expansion for our purpose.}. While random graphs will display the right expansion (provided their degree is large enough) with high probability, it is not known how to check efficiently that a given graph is indeed sufficiently expanding. 
The second issue is that we need graphs with a large (constant) degree, which will translate into significantly large quantum codes. In other words, one shouldn't expect the present analysis to be applicable to small size quantum codes that might be built in the near future. We note that Gottesman's analysis also required the initial circuit size to be large enough: this is necessary in order to make the contribution of additive terms sub-linear and therefore obtain a constant overhead. 
Another limitation of our work is the very small threshold value that it yields. While the threshold is usually expected to lie between $10^{-3}$ and $10^{-2}$ for the best constructions based on code concatenation, we expect our value to be several orders of magnitude smaller, as this was already the case in Gottesman's paper \cite{gottesman2014fault} and in our previous work with perfect syndrome measurement \cite{fawzi2018efficient}. Part of the explanation is due to the very crude bounds that we obtain \textit{via} percolation theory arguments. In this work, we haven't tried to optimize the value of the threshold and have instead tried to simplify the general scheme as much as possible. We expect that future work, in particular based on simulations, will help to better understand the true value of the threshold for fault-tolerance schemes with constant overhead. 
Finally, as already pointed out, we consider a model with error-free classical computation, and assume that the logarithmic-depth decoding algorithm can be performed in constant time.

\subsection*{Main result and proof techniques}
In this section, we provide an informal overview of the main result and the techniques used for the proofs. More formal definitions and proofs can be found in the following sections. When decoding a quantum error correcting code, two types of errors need to be taken into account: the $X$-type (or bit flip) errors and the $Z$-type (or phase flip) errors. They play a symmetric role for the codes we consider and it is therefore sufficient to focus on bit flips for instance. An $X$-type error is described by a subset $E$ of the qubits to which the bit flip operator $X$ was applied. To decode a quantum error correcting code, we start by performing a measurement that returns the syndrome $\sigma = \sigma(E)$ which only depends on the error. The objective of the decoding algorithm is given $\sigma$ to output an error $\hat{E}$ which is the same as $E$. More precisely, it is sufficient for the errors $E$ and $\hat{E}$ to be \emph{equivalent} in the sense that the error $E \oplus \hat{E}$ acts trivially on every codeword (for a stabilizer code, this simply means that $E \oplus \hat{E}$ belongs to the stabilizer group). As previously mentioned, our main contribution is to analyze the small-set-flip decoding algorithm in the setting where the syndrome measurement is noisy, \textit{i.e.}, the decoding algorithm takes as input $(\sigma(E) \oplus D)$ instead of just $\sigma(E)$, where $D$ represents the syndrome measurement error. The objective of the decoding algorithm is then not to recover the error exactly (which will not be possible) but rather to control the size of remaining error $E \oplus \hat{E}$. In the context of quantum fault-tolerance, the relevant error model for the pair $(E, D)$ is the \emph{local stochastic noise model} with parameters $(p, q)$ defined by requiring that for any $F$ and $G$, the probability that $F$ and $G$ are part of the qubit and syndrome errors, respectively, is bounded as follows, $\mathbb{P}[F \subseteq E, G \subseteq D] \leq p^{|F|} q^{|G|}$.

\begin{theorem}[Informal]
  There exist constants $p_0 > 0, p_1 > 0$ such that the following holds. Consider a bipartite graph with sufficiently good expansion and the corresponding quantum expander code. Consider random errors $(E, D)$ satisfying a local stochastic noise model with parameter $(p_{\mathrm{phys}}, p_{\mathrm{synd}})$ with $p_{\mathrm{phys}} < p_0$ and $p_{\mathrm{synd}} < p_1$. Let $\hat{E}$ be the output of the small-set-flip decoding algorithm on the observed syndrome. Then, except for a failure probability of $\reducedProba$, $E \oplus \hat{E}$ is equivalent to $E_{\textrm{ls}}$ that has a local stochastic distribution with parameter $p_{\mathrm{synd}}^{\Omega(1)}$. In addition, the small-set-flip algorithm can be parallelized to run in $\cO(\log n)$ depth.
\end{theorem}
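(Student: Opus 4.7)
The plan is to reduce the noisy-syndrome problem to the perfect-syndrome setting of \cite{fawzi2018efficient}, then use a percolation-style argument to control the residual error, and finally parallelize the algorithm via a local-conflict argument. To absorb the syndrome error into the code, attach to each check node $c$ a virtual degree-one qubit whose $X$-error indicates a flipped syndrome bit at $c$; the observed syndrome $\sigma(E) \oplus D$ is then the exact syndrome of the combined error $E \cup D$ in the augmented Tanner graph. The augmented graph inherits good expansion on small sets, the virtual qubits being trivially expanding because of their unique neighbour. Under the local stochastic noise model with parameters $(p_{\mathrm{phys}}, p_{\mathrm{synd}})$, the combined error $E \cup D$ is local stochastic on the augmented graph with parameter $\max(p_{\mathrm{phys}}, p_{\mathrm{synd}})$, so running small-set-flip on the observed syndrome is formally identical to running it on a perfect syndrome in the augmented code.

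Next I would lift the percolation argument of \cite{fawzi2018efficient}: the augmented Tanner graph has bounded degree, so a branching-process bound shows that, with probability $1 - \reducedProba$, the support of $E \cup D$ decomposes into connected clusters each of size below the adversarial decoding threshold $\Theta(D_{\min})$. For each such cluster the deterministic guarantee of \cite{leverrier2015quantum} produces an output that is stabilizer-equivalent (in the augmented code) to the true cluster restricted to that component. Projecting back to the original code, the residual qubit error $E \oplus \hat E$ can only come from the projection of the syndrome clusters, since qubit-only clusters are corrected exactly up to stabilizers.

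The most delicate step is to upgrade this to the statement that the residual error is itself local stochastic with parameter $p_{\mathrm{synd}}^{\Omega(1)}$. For any fixed set $F$ of qubits, I would construct a deterministic map assigning to the event $F \subseteq E \oplus \hat E$ a witness cluster of $D$ whose neighbourhood reaches $F$, each vertex of the witness contributing a factor $p_{\mathrm{synd}}$ to the probability. A Peierls-type enumeration of connected clusters of a given size in the bounded-degree augmented graph then gives a bound of the form $p_{\mathrm{synd}}^{c|F|}$ for some constant $c > 0$. The hard part is to choose the witness map carefully enough that the final exponent depends only on $p_{\mathrm{synd}}$ and not on $p_{\mathrm{phys}}$; this single-shot feature requires showing that qubit-only components of $E \cup D$, even when large, leave no trace in the residual error once decoded, so that the exponential tail inherits the $p_{\mathrm{synd}}$ parameter of the witness alone.

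Finally, for the parallel implementation, I would observe that whether a small set is flippable and the weight decrease it induces depend only on the $\cO(1)$-neighbourhood of its support in the Tanner graph, so two small sets with disjoint neighbourhoods can be flipped simultaneously without altering each other's scores. The conflict graph on candidate flips therefore has bounded degree and admits an $\cO(1)$-colouring, giving a parallel round that flips an $\Omega(1)$ fraction of the currently available flips. A potential argument on the total syndrome weight, combined with the linear convergence already established for the sequential version, then shows that $\cO(\log n)$ parallel rounds suffice until no flippable set remains.
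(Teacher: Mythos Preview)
The augmented-graph reduction does not actually reduce the problem to the perfect-syndrome setting, and this is where the argument breaks. The small-set-flip algorithm only ever flips subsets of $\Gamma_Z(g)$ for $g \in C_Z$ in the \emph{original} code; the virtual degree-one qubits you attach lie in no such support, so the algorithm never touches them. Consequently the output $\hat{E}$ cannot cancel the virtual part of the augmented error, and the perfect-syndrome conclusion ``$E' \oplus \hat{E}$ is a stabilizer'' is impossible from the outset. More fundamentally, the guarantees of \cite{leverrier2015quantum,fawzi2018efficient} that you invoke on each cluster rely on the hypergraph-product structure (the critical-generator lemma), not merely on bipartite vertex expansion; the augmented graph is not a hypergraph product, so those results do not transfer. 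Locality does let you treat qubit-only clusters with the perfect-syndrome theorem on the original code, but for mixed clusters you are left with exactly the question you started with: when the algorithm stalls on a small error with a noisy syndrome, how large can the residual qubit error be relative to the syndrome noise? Your third paragraph correctly identifies this as the crux but offers no mechanism to bound it.

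What the paper supplies here, and what your sketch is missing, is a \emph{quantitative} strengthening of the critical-generator argument: for a small reduced error $E$ one decomposes $E = \biguplus_i F_i$ with each $F_i$ a valid flip, and shows that the sets $\sigma_X(F_i)$ are nearly disjoint, so that $\sum_i |\sigma_X(F_i)| \leq (1+o_\delta(1))\,|\sigma_X(E)|$. This means there is not just one valid flip but a linear-in-$|\sigma_X(E)|$ supply of them, and the syndrome noise $D$ can block all of them only if $|D \cap \sigma_X(E)|$ is already a constant fraction of $|\sigma_X(E)|$. Together with the robustness bound $|\sigma_X(E)| \gtrsim |E|$, this gives $|E \oplus \hat{E}| \leq c_0\,|D \cap \sigma_X(E \oplus \hat{E})|$ on each cluster, which is precisely the linear witness bound your Peierls enumeration needs. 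The same abundance-of-flips estimate is also what drives the parallel analysis: it shows that over one pass through the $\chi$ colours the syndrome weight drops by a constant factor whenever it still dominates $|D|$, yielding the $\cO(\log n)$ depth. Without that estimate your coloring argument gives parallel rounds but no bound on how many are required.
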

In the special case where the syndrome measurements are perfect, \textit{i.e.}, $p_{\mathrm{synd}} = 0$, the statement guarantees that for a typical error of size at most $p_0 n$, the small-set-flip algorithm finds an error that is equivalent to the error that occurred. If the syndrome measurements are noisy, then we cannot hope to recover an equivalent error exactly, but instead we can control the size of the remaining error $E \oplus \hat{E}$ by the amount of noise in the syndrome measurements. In particular, for any qubit error rate below $p_0$, the decoding operation reduces this error rate to be $p_{\mathrm{synd}}^{\Omega(1)}$ (our choice of $p_0$ will be such that $p_{\mathrm{synd}}^{\Omega(1)} \ll p_0$). This criterion is sufficient for fault-tolerant schemes as it ensures that the size of the qubit errors stay bounded throughout the execution of the circuit.
The proof of this theorem consists of two main parts: analyzing arbitrary low weight errors below the minimum distance (\Cref{prop:error_qubits_syn}) and exploiting percolation theory to analyze stochastic errors of linear weight (\Cref{thm:correction}). 

The small-set-flip decoding algorithm proceeds by trying to flip small sets of qubits so as to decrease the weight of the syndrome, and the main challenge in its analysis is to prove the existence of such a small set $F$. In the case where the observed syndrome is error free, Refs \cite{leverrier2015quantum} and \cite{fawzi2018efficient} relied on the existence of a ``critical generator'' to exhibit such a set of qubits. This approach, however, only yields a \emph{single} such set $F$, and when the syndrome becomes noisy, nothing guarantees anymore that flipping the qubits in $F$ will result in a decrease of the syndrome weight and it becomes unclear whether the decoding algorithm can continue. Instead, in order to take into account the errors on the syndrome measurements, we would like to show that there are \emph{many} possible sets of qubits $F$ that decrease the syndrome weight. In order to establish this point, we consider an error $E$ of size below the minimum distance and we imagine running the (sequential) decoding algorithm \cite{leverrier2015quantum} without errors on the syndrome. The algorithm gives a sequence of small sets $\{F_i\}$ to flip successively in order to correct the error. In other words, we obtain the following decomposition of the error, $E = \oplus_i F_i$ (note that the sets $F_i$ might overlap). The expansion properties of the graph guarantee that there are very few intersections between the syndromes $\sigma(F_i)$ (see the proof of \Cref{prop:error_qubits_syn}). In particular in the case of noiseless syndrome, a linear number of these $F_i$ can be flipped to decrease the syndrome weight. There are two consequences to this result. First, it is possible to parallelize the decoding algorithm by flipping multiple $F_i$ in each round (\Cref{sec:parallel}) and this decreases the syndrome weight by a constant factor, thereby correcting the error after a logarithmic number of rounds. Second, even when the syndrome is noisy there will remain some $F_i$ that can be flipped in order to decrease the syndrome weight and finally, the size of the error $E \oplus \hat{E}$ can be upper bounded with a linear function of the syndrome error size (\Cref{prop:error_qubits_syn}).

In order to analyze random errors of linear weight, we show using percolation theory that, with high probability, the error forms clusters in the sense of connected $\alpha$-subsets (\Cref{def:alpha} and \Cref{lem:perco}). This is similar to the analysis in~\cite{kovalev2013fault,fawzi2018efficient}, except that we use the syndrome adjacency graph of the code (as in~\cite{gottesman2014fault}) to establish the ``locality'' of the decoding algorithm, implying that each cluster of the error is corrected independently of the other ones (\Cref{lem:local}). Using the fact that clusters are of size bounded by the minimum distance of the code, the result on low weight errors shows that the size of $E \oplus \hat{E}$ is controlled by the syndrome error size. In order to show that the error after correction is local stochastic, we introduce the notion of \emph{witness} (\Cref{def:witness}). The basic idea is to find a syndrome error in the neighborhood of a given qubit error $S$. However, a qubit error can be the consequence of a distant syndrome error. This is why a witness is defined as a set of qubit errors $W$ but is potentially larger than $S$. The previously mentioned results show that witnesses exist for $E \oplus \hat{E}$ and we conclude our proof using an upper bound on the probability that a witness exists. 

A remarkable feature of our analysis is that it shows that the small-set-flip decoding algorithm only uses a single noisy syndrome measurement and outputs an error with controlled weight. Note that this is in contrast to decoding algorithms for many other codes such as the toric code for which such a repetition is necessary. This property is called \emph{single-shot} in the fault-tolerant quantum computation literature \cite{bombin2015single,campbell2018theory}.

\subsection*{Organization}

We start in Section \ref{sec:prelim} with notations and preliminaries to recall the construction and main properties of quantum expander codes, their efficient decoding algorithm. We also introduce the relevant noise model and give an overview of the fault-tolerant scheme of~\cite{gottesman2014fault} applied to quantum expander codes. In Section~\ref{sec:analysis}, we establish our main technical result showing that the small-set-flip decoding algorithm for quantum expander codes is robust against local stochastic noise in the syndrome measurement. In Section~\ref{sec:parallel}, we prove that the algorithm can be efficiently parallelized and works in logarithmic time.

\section{Preliminaries}
\label{sec:prelim}

In this section, we first review the construction of classical and quantum expander codes. 
We then discuss models of noise which are relevant in the context of quantum fault-tolerance. Finally, we describe the fault-tolerant construction of Gottesman applied to quantum expander codes.

\subsection{Classical expander codes}
\label{subsec:classical}

A linear classical error correcting code $\cC$ of dimension $k$ and length $n$ is a subspace of $\bF_2^n$ of dimension $k$. Mathematically, it can be defined as the kernel of an $(n-k) \times n$ matrix $H$, called the parity-check matrix of the code: $\cC = \{ x \in \bF_2^n \: : \: Hx = 0\}$. The minimum distance $d_{\min}$ of the code is the minimum Hamming weight of a nonzero codeword: $d_{\min} = \min \{ |x| \: : \: x \in \cC, x\ne 0\}$. Such a linear code is often denoted as $[n,k,d_{\min}]$. 
It is natural to consider families of codes, instead of single instances, and study the dependence between the parameters $n, k$ and $d_{\min}$. In particular, a family of codes has \emph{constant rate} if $k = \Theta(n)$. 
Another property of interest of a linear code is the weight of the rows and columns of the parity-check matrix $H$. If these weights are upper bounded by a constant, then we say that the code is a \emph{low-density parity-check} (LDPC) code~\cite{gallager1962low}. This property is particularly attractive because it allows for efficient decoding algorithms, based on message passing for instance. 

An alternative description of a linear code is \textit{via} a bipartite graph known as its \emph{factor graph}. Let $G =(V \cup C, \cE)$ be a bipartite graph, with $|V|=n_V$ and $|C|=n_C$. With such a graph, we associate the $n_C \times n_V$ matrix $H$, whose rows are indexed by the vertices of $C$, whose columns are indexed by the vertices of $V$, and such that $H_{cv} = 1$ if $v$ and $c$ are adjacent in $G$ and $H_{cv}=0$ otherwise. The binary linear code $\cC_G$ associated with $G$ is the code with parity-check matrix $H$. The graph $G$ is the \emph{factor graph} of the code $\cC_G$ , $V$ is the set of \emph{bits} and $C$ is the set of \emph{check-nodes}.

It will sometimes be convenient to describe codewords and error patterns as subsets of $V$: the binary word $e \in \bF_2^{n_V}$ is described by a subset $E \subseteq V$ whose indicator vector is $e$. Similarly we define the \emph{syndrome} of a binary word either as a binary vector of length $n_C$ or as a subset of $C$:
  \begin{align*}
    \sigma(e) := H e \in \bF_2^{n_C},
    && \sigma(E) := \bigoplus_{v \in E} \Gamma(v) \subseteq C,
  \end{align*}
  where $\Gamma(v) \subseteq C$ is the set of neighbors of $v$.
  In this paper, the operator $\oplus$ is interpreted either as the symmetric difference of sets or as the bit-wise exclusive disjunction depending on whether errors and syndromes are interpreted as sets or as binary vectors.

A family of codes that will be central in this work are those associated to so-called \emph{expander graphs}, that were first considered by Sipser and Spielman in \cite{sipser1996expander}.

\begin{definition}[Expander graph]
  \label{def:exp}
  Let $G = (V \cup C, \cE)$ be a bipartite graph with left and right degrees bounded by $d_V$ and $d_C$ respectively. Let $|V| = n_V$ and $|C| = n_C$. We say that $G$ is $(\gamma, \delta)$-\emph{left-expanding} for some constants $\gamma, \delta >0$, if for any subset $S \subseteq V$ with $|S| \leq \gamma n_V$, the neighborhood $\Gamma(S)$ of $S$ in the graph $G$ satisfies $|\Gamma(S)| \geq (1-\delta) d_V |S|$. Similarly, we say that $G$ is $(\gamma, \delta)$-\emph{right-expanding} if for any subset $S \subseteq C$ with $|S| \leq \gamma n_C$, we have $|\Gamma(S)| \geq (1-\delta) d_C |S|$. Finally, the graph $G$ is said $(\gamma, \delta)$-\emph{expanding} if it is both $(\gamma, \delta)$-left expanding and $(\gamma, \delta)$-right expanding.  
\end{definition}

Sipser and Spielman introduced \emph{expander codes}, which are the linear codes associated with (left-)expander graphs. Remarkably these codes come with an efficient decoding algorithm that can correct \emph{arbitrary} errors of weight $\Omega(n)$ \cite{sipser1996expander}.
\begin{theorem}[Sipser, Spielman \cite{sipser1996expander}]
  \label{thm:SS}
  Let $G = (V \cup C, \cE)$ be a $(\gamma, \delta)$-left expander graph with $\delta < 1/4$. There exists an efficient decoding algorithm for the associated code $\cC_G$ that corrects all error patterns $E \subseteq V$ such that $|E| \leq \gamma (1-2\delta) |V|$.
\end{theorem}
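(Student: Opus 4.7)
The plan is to establish the classical Sipser--Spielman bit-flip (or ``flipping'') algorithm and analyze it by a potential-function argument driven by the unique-neighbor property that follows from left expansion. Given a received word $y = x \oplus e$ with $x \in \cC_G$ and $|e| \leq \gamma(1-2\delta) n_V$, the algorithm repeatedly looks for a bit $v \in V$ that is adjacent to \emph{strictly more unsatisfied than satisfied} check nodes (with respect to the current word), flips it, and halts when no such bit exists.

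The central lemma will be a unique-neighbor bound: for any nonempty set $S \subseteq V$ with $|S| \leq \gamma n_V$, if $U(S) \subseteq \Gamma(S)$ denotes the set of check nodes having exactly one neighbor in $S$, then $|U(S)| \geq (1-2\delta)\, d_V\, |S|$. This follows directly from $(\gamma,\delta)$-left expansion and double counting edges between $S$ and $\Gamma(S)$: the total number of edges from $S$ is $d_V |S|$, while the left-expansion lower bound $|\Gamma(S)| \geq (1-\delta) d_V |S|$ forces at most $2\delta d_V |S|$ check nodes to have more than one neighbor in $S$, giving the stated bound on unique neighbors. Since $\delta < 1/4$, we have $1-2\delta > 1/2$, so by averaging there must exist a vertex $v \in S$ adjacent to at least $(1-2\delta) d_V > d_V/2$ unique neighbors, hence to more unsatisfied than satisfied checks (note each unique neighbor of $S$ is an unsatisfied check if $S$ is precisely the current error, and in general remains unsatisfied because its unique connection to $S$ determines its syndrome value).

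Applying this lemma to the current error $S = e_t$ (the symmetric difference of the initial error and the bits flipped so far) yields the correctness of the flipping step. I would then track two quantities: the syndrome weight $|\sigma(e_t)|$ and the error weight $|e_t|$. Each flip strictly decreases the syndrome weight by at least $1$ (since the chosen bit has a majority of unsatisfied neighbors), so the algorithm terminates after at most $|\sigma(e_0)| \leq d_V |e_0|$ iterations, giving an efficient (in fact linear-time with appropriate data structures) procedure. The key invariant to maintain is that $|e_t| \leq \gamma n_V$ throughout, so that the unique-neighbor lemma remains applicable; this is where the stronger assumption $|e_0| \leq \gamma(1-2\delta) n_V$ is used.

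The subtle step, and the one I expect to be the main obstacle, is precisely maintaining this invariant. The argument goes by contradiction: suppose $t^* $ is the first time $|e_{t^*}| > \gamma n_V$. Then just before, $|e_{t^*-1}| \leq \gamma n_V$ and since each step flips one bit, $|e_{t^*}| \leq \gamma n_V + 1$. Look at the initial segment of the trajectory up to some intermediate time $t$ where $|e_t|$ is maximal but still $\leq \gamma n_V$; then the syndrome weight has only decreased since time $0$, so $|\sigma(e_t)| \leq |\sigma(e_0)| \leq d_V |e_0| \leq d_V \gamma (1-2\delta) n_V$. On the other hand, applying the unique-neighbor lemma to $e_t$ itself yields $|\sigma(e_t)| \geq |U(e_t)| \geq (1-2\delta) d_V |e_t|$, so $|e_t| \leq |e_0|$, contradicting that $|e_t|$ grew past $\gamma(1-2\delta) n_V$. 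Thus the invariant holds, the algorithm terminates at a word with empty syndrome, and since the final error still satisfies $|e_{\mathrm{final}}| \leq \gamma n_V < d_{\min}$ (the expansion also implies a minimum distance lower bound), it must equal $0$, so decoding succeeds.
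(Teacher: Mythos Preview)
The paper does not prove this theorem; it is quoted from Sipser--Spielman and only the informal description of the bit-flip algorithm is given afterwards. Your proposal reproduces exactly the classical Sipser--Spielman argument (unique-neighbor bound from expansion, existence of a flippable bit, syndrome as a strictly decreasing potential, invariant $|e_t|\leq\gamma n_V$), so in spirit it matches what the cited reference does.

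There is one slip worth fixing. From $(1-2\delta)d_V|e_t|\leq|\sigma(e_t)|\leq|\sigma(e_0)|\leq d_V|e_0|$ you only obtain $|e_t|\leq |e_0|/(1-2\delta)$, not $|e_t|\leq|e_0|$. Fortunately this corrected bound is precisely $\leq\gamma n_V$ under the hypothesis $|e_0|\leq\gamma(1-2\delta)n_V$, so the invariant argument still closes (up to a harmless integrality issue at the boundary). Also, the final appeal to $d_{\min}$ is not needed: once the invariant holds and the algorithm halts with no flippable bit, the unique-neighbor lemma applied to $e_{\mathrm{final}}$ (which has $|e_{\mathrm{final}}|\leq\gamma n_V$) directly forces $e_{\mathrm{final}}=\varnothing$, since otherwise some $v\in e_{\mathrm{final}}$ would have more than $d_V/2$ unique (hence unsatisfied) neighbors and be flippable.
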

Note that the relevant notion of expansion here is that of \emph{vertex expansion}, which differs from \emph{spectral expansion} (defined as a property of the adjacency matrix of the graph). While the two notions are related, we emphasize that even optimal spectral expansion (as in Ramanujan graphs) doesn't imply values of $\delta$ below $1/2$ and are therefore insufficient for our purpose.

The decoding algorithm called the ``bit-flip'' algorithm is very simple: one simply cycles through the bits in $V$ and flip them if this operation leads to a reduction of the syndrome weight. Sipser and Spielman showed that provided the expansion is sufficient, such an algorithm will always succeed in identifying the error if its weight is below $\gamma(1-2\delta)|V|$. 
In this paper, however,  we will be interested in the decoding of \emph{quantum} expander codes, that we will review next.

Before that, let us mention for completeness that although finding explicit constructions of highly-expanding graphs is a hard problem, such graphs can nevertheless be found efficiently by probabilistic techniques. Verifying that a given graph is expanding is a hard task, however.
\begin{theorem}[Theorem 8.7 of \cite{richardson2008modern}]
  \label{thm:exist}
  Let $\delta$ be a positive constant. For integers $d_A > 1/\delta$ and $d_B > 1/\delta$, a graph $G = (A \cup B, \cE)$ with left-degree bounded by $d_A$ and right-degree bounded by $d_B$ chosen at random according to some distribution is $(\gamma, \delta)$-expanding for $\gamma = \Omega(1)$ with high probability.
\end{theorem}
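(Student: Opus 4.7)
The plan is to follow the standard first-moment argument for random bipartite graphs, specializing it to the simultaneous left- and right-expansion required here. Concretely, I would choose a natural random model, for instance the configuration model where each left-vertex has exactly $d_A$ half-edges and each right-vertex has exactly $d_B$ half-edges (with $n_A d_A = n_B d_B$), and the half-edges are matched by a uniformly random perfect matching. One then bounds the probability of the bad event for a fixed subset $S \subseteq A$, and union-bounds over all subsets up to a size $\gamma n_A$ to be determined.

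The key estimate: for a fixed $S \subseteq A$ with $|S| = s$, the event that $|\Gamma(S)| \leq (1-\delta) d_A s =: t$ means all $d_A s$ half-edges emanating from $S$ land in some fixed set $T \subseteq B$ of size $t$. Using $\binom{n}{k} \leq (en/k)^k$ and the fact that in the configuration model this probability is of order $(t d_B / (n_B d_B))^{d_A s} = (t/n_B)^{d_A s}$ up to subleading factors, the union bound over $S$ and $T$ yields
\begin{equation*}
\Pr[\text{left-expansion fails at size } s] \;\leq\; \binom{n_A}{s}\binom{n_B}{t}\left(\frac{t}{n_B}\right)^{d_A s} \leq \left[\frac{e^{1+(1-\delta)d_A}}{\gamma}\cdot \left(\frac{(1-\delta)d_A \, s/n_B}{\,}\right)^{\delta d_A}\right]^s.
\end{equation*}
After writing $x = s/n_A$ and absorbing constants that depend only on $d_A, d_B, \delta$, this bracketed expression has the form $K \cdot x^{\delta d_A - 1}$. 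This is where the hypothesis $d_A > 1/\delta$ enters decisively: the exponent $\delta d_A - 1 > 0$ makes the bracket strictly less than $1$ for all sufficiently small $x$, so one can choose $\gamma > 0$ small but independent of $n_A$ such that the bracket is bounded by some $\rho < 1$ on the whole range $s \in \{1, \ldots, \gamma n_A\}$.

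Summing the geometric series in $s$, and handling separately the very small values of $s$ where the entropy estimate $\binom{n_A}{s} \leq (en_A/s)^s$ is loose (one can absorb $s \leq s_0$ for some constant $s_0$ by observing that few edges go anywhere, or by using a tighter bound for tiny $s$), one obtains that left-expansion with parameters $(\gamma,\delta)$ fails with probability $o(1)$ as $n_A \to \infty$. The argument is entirely symmetric in the two sides of the graph: repeating the same calculation with $A$ and $B$ swapped and using $d_B > 1/\delta$ bounds the probability of failure of right-expansion. A final union bound over the two events gives $(\gamma,\delta)$-expansion with high probability.

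The main technical obstacle, and the reason I would be careful about the probabilistic model, is that the ``probability'' $(t/n_B)^{d_A s}$ above hides dependencies between the half-edges (they are matched without replacement). In the configuration model, the correct estimate involves falling factorials rather than powers, but the ratio is $1 + o(1)$ as long as $d_A s \ll n_B$, which is automatic for $s \leq \gamma n_A$ with $\gamma$ small. The only other subtlety is converting from the configuration model (which may produce multi-edges) to a simple bipartite graph: standard arguments show that the probability of being simple is bounded away from zero for constant degrees, so any high-probability statement in the configuration model transfers to a high-probability statement conditioned on simplicity. All other steps are routine entropy and geometric-sum manipulations.
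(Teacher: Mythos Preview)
The paper does not supply its own proof of this statement: it is quoted verbatim as Theorem~8.7 of \cite{richardson2008modern} and used as a black box to assert the existence of the required expander graphs. So there is nothing to compare your argument against in the paper itself.

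That said, your sketch is exactly the standard first-moment argument that appears in Richardson--Urbanke (and in many other places), and it is correct in outline. The identification of the crucial exponent $\delta d_A - 1$ is precisely where the hypothesis $d_A > 1/\delta$ is used, and the symmetric application with $d_B > 1/\delta$ for right-expansion is the right way to obtain two-sided expansion. Two minor remarks: (i) the concern about ``very small values of $s$'' is misplaced---the bound $[K x^{\delta d_A -1}]^s$ with $x = s/n_A$ is actually \emph{strongest} when $s$ is small, so no separate treatment is needed there; (ii) the displayed inequality has a dangling fraction with an empty denominator, which you should clean up, but the intended estimate $[K\,x^{\delta d_A - 1}]^s$ is correct once the three factors $\binom{n_A}{s}$, $\binom{n_B}{t}$, and $(t/n_B)^{d_A s}$ are combined.
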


\subsection{Quantum error correcting codes}

A quantum code encoding $k$ logical qubits into $n$ physical qubits is a subspace of $(\bC^2)^{\otimes n}$ of dimension $2^k$. A quantum \emph{stabilizer code} is described by a stabilizer, that is an Abelian group of $n$-qubit Pauli operators (tensor products of single-qubit Pauli operators $X = \left( \begin{smallmatrix} 0 & 1\\1&0 \end{smallmatrix}\right), Y=ZX, Z=\left( \begin{smallmatrix} 1 & 0\\0&-1 \end{smallmatrix}\right)$ and $I$ with an overall phase of $\pm1$ or $\pm i$) that does not contain $- I$. The code is defined as the eigenspace of the stabilizer with eigenvalue $+1$ \cite{gottesman1997stabilizer}.
A stabilizer code of dimension $k$ can be described by a set of $n-k$ generators of its stabilizer group.

A particularly nice construction of stabilizer codes is given by the CSS construction \cite{calderbank1996good}, \cite{steane1996error}, where the stabilizer generators are either products of single-qubit $X$-Pauli matrices or products of $Z$-Pauli matrices. The condition that the stabilizer group is Abelian therefore only needs to be enforced between $X$-type generators (corresponding to products of Pauli $X$-operators) and $Z$-type generators. 
More precisely, consider two classical linear codes $\cC_X$ and $\cC_Z$  of length $n$ satisfying $\cC_Z^\perp \subseteq \cC_X$ (or equivalently, $\cC_X^\perp \subseteq \cC_Z$), where the dual code $\cC_X^\perp$ to $\cC_X$ consists of the words which are orthogonal to all the words in $\cC_X$. This condition also reads $H_X \cdot H_Z^T=0$, if $H_X$ and $H_Z$ denote the respective parity-check matrices of $\cC_X$ and $\cC_Z$.
The quantum code $CSS(\cC_X, \cC_Z)$ associated with $\cC_X$ (used to correct $X$-type errors and corresponding to $Z$-type stabilizer generators) and $\cC_Z$ (used to correct $Z$-type errors and corresponding to $X$-type stabilizer generators) has length $n$ and is defined as the linear span of $\left\{ \sum_{z \in \cC_Z^\perp} |x +z \rangle \: : \: x \in \cC_X\right\}$, where $\{ |x\rangle \: : \: x\in \bF_2^n\}$ is the canonical basis of $(\bC^2)^{\otimes n}$.
In particular, two states differing by an element of the stabilizer group are equivalent. The dimension of the CSS code is given by $k = \dim (\cC_X/ \cC_Z^\perp) = \dim (\cC_Z/\cC_X^\perp) = \dim \cC_X+ \dim \cC_Z - n$. Its minimum distance is defined in analogy with the classical case as the minimum number of single-qubit Pauli operators needed to map a codeword to an orthogonal one. For the code $CSS(\cC_X, \cC_Z)$, one has $d_{\min} = \min(d_X,d_Z)$ where $d_X = \min \{|E|: E \in \cC_X \setminus {\cC_Z}^{\bot} \}$ and $d_Z = \min \{|E|: E \in \cC_Z \setminus {\cC_X}^{\bot} \}$. We say that  $CSS(\cC_X, \cC_Z)$ is a $[[n,k,d_{\min}]]$ quantum code. In the following, it will be convenient to consider the factor graph $G_X = (V \cup C_X, \cE_X)$ (resp.~$G_Z$) of $\cC_X$ (resp.~of $\cC_Z$). We will denote by  $\Gamma_X$ (resp.~$\Gamma_Z$) the neighborhood in $G_X$ (resp.~$G_Z$). For instance, if $g \in C_Z$ is an $X$-type generator, that is a product of Pauli $X$ operators, then $\Gamma_Z(g)$ is the set of qubits (indexed by $V$) on which the generator acts non-trivially.

Among stabilizer codes, and CSS codes, the class of quantum LDPC codes stands out for practical reasons: these are the codes for which one can find \emph{sparse} parity-check matrices $H_X$ and $H_Z$. More precisely, such matrices are assumed to have constant row weight and constant column weight. Physically, this means that each generator of the stabilizer acts at most on a constant number of qubits, and that each qubit is acted upon by a constant number of generators. Note, however, that while surface codes exhibit in addition spatial locality in the sense that interactions only involve spatially close qubits (for an explicit layout of the qubits in Euclidean space), we do not require this for general LDPC codes. This means that generators might involve long-range interactions. This seems necessary in order to find constant rate quantum codes with growing minimum distance~\cite{bravyi2009no}.

A natural noise model is the so-called \emph{Pauli-type noise}, mapping a qubit $\rho$ to $p_{\mathbbm{1}} \rho + p_X X \rho X + p_Y Y \rho Y + p_Z Z \rho Z$, for some $p_{\mathbbm{1}}, p_X, p_Y, p_Z$. Such a noise model is particularly convenient since one can interpret the action of the noise as applying a given Pauli error with some probability. As usual, it is sufficient to deal with both $X$ and $Z$-type errors in order to correct Pauli-type errors, and one can therefore define an error by the locations of the Pauli $X$ and Pauli $Z$ errors.
An \emph{error pattern} is a pair $(E_X, E_Z)$ of $n$-bit strings, which describe the locations of the Pauli $X$ errors, and Pauli $Z$ errors respectively. The syndrome associated with $(E_X, E_Z)$ for the code $CSS(\cC_X, \cC_Z)$ consists of $\sigma_X = \sigma_X(E_X) := H_X E_X$ and $\sigma_Z = \sigma_Z(E_Z) := H_Z E_Z$. 
A decoder is given the pair $(\sigma_X, \sigma_Z)$ of syndromes and should return a pair of errors $(\hat{E}_X, \hat{E}_Z)$ such that $E_X + \hat{E}_X \in \cC_Z^\perp$ and $E_Z + \hat{E}_Z \in \cC_X^\perp$. In that case, the decoder outputs an error equivalent to $(E_X, E_Z)$, and we say that it succeeds.
\\Similarly as in the classical case, it will be convenient to describe $X$-type error patterns and $X$-type syndromes as subsets of the vertices of the factor graph $G_X=(V \cup C_X, \cE_X)$. The error pattern is then described by a subset $E_X \subseteq V$ whose syndrome is the subset $\sigma_X(E_X) \subseteq C_X$ defined by $\sigma_X(E_X) := \bigoplus_{v \in E_X} \Gamma_X(v)$. One describes $Z$-type error patterns and $Z$-type syndromes in the same fashion using the factor graph $G_Z$.

In this paper, we consider Algorithm \ref{algo decodage qec} which tries to recover $E_{X}$ and $E_{Z}$ independently. More precisely, the algorithm is given by an $X$-decoding algorithm that takes as input $\sigma_{X}$ and returns $\hat{E}_X$ such that $\sigma_X(\hat{E}_X) = \sigma_X$, and a $Z$-decoding algorithm that takes as input $\sigma_{Z}$ and returns $\hat{E}_Z$ such that $\sigma_Z(\hat{E}_Z) = \sigma_Z$. Here the two algorithms are identical upon exchanging the roles of $X$ and $Z$. We note that this kind of decoding algorithm might achieve sub-optimal error probabilities for some error models. In fact, if there are correlations between $X$ and $Z$ errors (for instance in the case of the depolarizing channel where $p_X = p_Y =p_Z$), one can decrease the error probability by trying to recover $E_{X}$ by using both $\sigma_{X}$ and $\sigma_{Z}$.

Let us conclude this section by mentioning the generic (but inefficient) algorithm which returns an error $(\hat{E}_X, \hat{E}_Z)$ of minimum Hamming weight with the appropriate syndrome \footnote{For degenerate quantum codes, this might differ from the output of \emph{maximum likelihood decoding algorithm} which performs an optimization over the probability of equivalent errors.}, that is:
\begin{align*}
  \hat{E}_X = \argmin_{\sigma_X(F_X) = \sigma_X} |F_X|, \quad \hat{E}_Z = \argmin_{ \sigma_Z(F_Z) = \sigma_Z} |F_Z|.
\end{align*}
This algorithm always succeeds provided that the error weights satisfy $|E_X| \leq \left\lfloor (d_X-1)/2\right\rfloor$ and $|E_Z| \leq \left\lfloor (d_Z-1)/2\right\rfloor$.

\subsection{Quantum expander codes}\label{subs:qec}

In this work, we are particularly interested in a family of LDPC CSS codes that features a constant rate and a minimum distance $\Theta(\sqrt{n})$ obtained by applying the hypergraph product construction of Tillich and Z\'emor to classical expander codes. If these expander codes have sufficient expansion, the corresponding quantum code is called \emph{quantum expander code} and comes with an efficient decoding algorithm  \cite{leverrier2015quantum}. 

The construction is as follows. Let $G = (A \cup B, \cE)$ be a biregular  $(\gamma, \delta)$-expanding graph with $\delta$ sufficiently small \footnote{The existence of an efficient algorithm that corrects arbitrary errors of size $O(\sqrt{n})$ is guaranteed as soon as $\delta < 1/6$ \cite{leverrier2015quantum}. The same algorithm corrects random errors of linear weight except with negligible probability as soon as $\delta < 1/8$ \cite{fawzi2018efficient} and in the present paper, we will require the more stringent condition  $\delta < 1/16$.}, and constant left and right degrees denoted $d_A$ and $d_B$ with $d_A \leq d_B$. Let us also denote $n_A = |A|$ and $n_B = |B|$ with $n_B \leq n_A$. According to Theorem \ref{thm:exist}, such graphs can be found in a probabilistic fashion provided that $d_A \geq \lceil \delta^{-1} \rceil$.
Let $\cC$ be the classical code associated with $G$, let $d_{\min}(\cC)$ be the minimal distance of $\cC$ and let $H$ be its parity-check matrix (that we assume to be full rank) corresponding to the factor graph $G$. In particular, the weights of rows and columns of $H$ are $d_A$ and $d_B$, respectively. The hypergraph product code of $\cC$ with itself admits the following parity check matrices:
\begin{align*}
  H_X &= \left( I_{n_A} \otimes H, H^T \otimes I_{n_B}\right),\\
  H_Z &= \left( H \otimes I_{n_A}, I_{n_B} \otimes H^T \right).
\end{align*}
It is immediate to check that this defines a legitimate CSS code since 
\begin{align*}
  H_X H_Z^T &=  I_{n_A} \otimes H \cdot ( H \otimes I_{n_A})^T + H^T \otimes I_{n_B} \cdot (I_{n_B} \otimes H^T)^T \\
  &= H^T \otimes  H+ H^T \otimes H =0. 
\end{align*}
Moreover, the code is LDPC with generators of weight $d_A + d_B$ and qubits involved in at most $2 d_B$ generators.
\\We can describe the factor graphs $G_X$ and $G_Z$ as follows: the set of qubits is indexed by $V := A^2 \cup B^2$, the set of $Z$-type generators is indexed by $C_X := A \times B$ and the set of $X$-type generators is indexed by $C_Z := B \times A$. The bipartite graph $G_X$ has left vertices $V$, right vertices $C_X$ and there is an edge between a vertex $(\alpha,a) \in A^2$ (resp. $(b,\beta) \in B^2$) and a vertex $(\alpha,\beta) \in A \times B$ when $a$ (resp. $b$) is in the neighborhood of $\beta$ (resp. $\alpha$) in $G$. The bipartite graph $G_Z$ has left vertices $V$, right vertices $C_Z$ and there is an edge between a vertex $(\alpha,a) \in A^2$ (resp. $(b,\beta) \in B^2$) and a vertex $(b,a) \in B \times A$ when $\alpha$ (resp. $\beta$) is in the neighborhood of $b$ (resp. $a$) in $G$.
\\The following theorem summarizes the main properties of this quantum code.
\begin{theorem}[Tillich, Z\'emor \cite{tillich2014quantum}]\label{thm:TZ}
  The CSS code defined above is LDPC with parameters $\left[\left[ n, k, d_{\min}\right]\right]$, where $n = {n_A}^2 + {n_B}^2, k \geq (n_A - n_B)^2$ and $d_{\min} = d_{\min}(\cC)$.
\end{theorem}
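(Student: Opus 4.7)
The plan is to verify the three parameters in turn: the length $n$, the dimension $k$, and the minimum distance $d_{\min}$. The length is immediate from the definition of the qubit index set $V = A^2 \cup B^2$, giving $n = n_A^2 + n_B^2$.

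For the dimension, I would use the CSS formula $k = n - \mathrm{rank}(H_X) - \mathrm{rank}(H_Z)$, which is valid because $H_X H_Z^T = 0$ has already been verified in the excerpt. To compute $\mathrm{rank}(H_X)$, observe that the left block $I_{n_A} \otimes H$ has rank $n_A \cdot \mathrm{rank}(H) = n_A n_B$ since $H$ is assumed full rank. This block already contributes $n_A n_B$ linearly independent rows, and $H_X$ itself has only $n_A n_B$ rows, so $\mathrm{rank}(H_X) = n_A n_B$. The symmetric argument applied to $H_Z$ yields $\mathrm{rank}(H_Z) = n_A n_B$ as well, hence $k = n_A^2 + n_B^2 - 2 n_A n_B = (n_A - n_B)^2$. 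The inequality form "$k \geq (n_A - n_B)^2$" in the statement then follows (and is in fact an equality under the full-rank assumption).

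For the minimum distance, I would establish matching upper and lower bounds. For the upper bound, given any minimum-weight codeword $c \in \cC$, I would exhibit a logical $X$-type representative of weight $|c|$ by setting $E \in \bF_2^{A \times A}$ to $E_{\alpha,a} = c_\alpha \cdot \mathbb{1}_{a = a_0}$ for some fixed $a_0 \in A$, extended by zero on $B^2$. One checks that $\sigma_X(E) = 0$ using $Hc = 0$, while $E \notin \mathrm{Im}(H_Z^T)$ because projecting onto the column indexed by $a_0$ would force $c$ to lie in the row space of $H$, contradicting $c \in \ker H$ nonzero (here the assumption that $H$ is full rank, i.e.\ $\ker H^T = \{0\}$, is crucial). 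A symmetric construction on $B^2$ gives the analogous $Z$-type bound, yielding $d_{\min} \leq d_{\min}(\cC)$. For the lower bound, I would decompose any error $E$ supported on $V = A^2 \cup B^2$ into its restrictions $E_{AA}$ and $E_{BB}$, rewrite the parity-check conditions blockwise, and use the tensor-product structure together with $\ker H^T = 0$ to reduce any nontrivial $E \in \ker H_X \setminus \mathrm{Im} H_Z^T$ to an object containing a full nonzero codeword of $\cC$ along some row or column, forcing $|E| \geq d_{\min}(\cC)$.

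The main obstacle is the lower bound on $d_{\min}$: tracking the coset structure $\cC_X / \cC_Z^\perp$ through the tensor-product algebra and showing that every nontrivial coset representative has weight at least $d_{\min}(\cC)$ requires careful bookkeeping. The simplification that makes it tractable here is the full-rank assumption on $H$, which kills the "transpose" code $\ker H^T$ and ensures that no logical operators of smaller weight can arise from that sector. The dimension and length computations, by contrast, are essentially linear algebra on tensor products and should be straightforward.
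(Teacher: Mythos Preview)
The paper does not give a proof of this theorem: it is stated with attribution to Tillich and Z\'emor \cite{tillich2014quantum}, and the text moves directly on to the decoding discussion. There is therefore no proof in the paper to compare your proposal against.

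For what it is worth, your outline matches the standard argument in \cite{tillich2014quantum}. The length and dimension computations are exactly as you describe, using the full-rank assumption on $H$ to get $\mathrm{rank}(H_X)=\mathrm{rank}(H_Z)=n_An_B$ and hence $k=(n_A-n_B)^2$. One small slip: with the paper's convention $H_X=(I_{n_A}\otimes H,\;H^T\otimes I_{n_B})$, the block $I_{n_A}\otimes H$ applies $H$ to the \emph{second} $A$-coordinate, so for $\sigma_X(E)=0$ you need each row $E_{\alpha,\cdot}$ to lie in $\ker H$. Your choice $E_{\alpha,a}=c_\alpha\mathbb{1}_{a=a_0}$ has row $\alpha$ equal to $c_\alpha e_{a_0}\notin\ker H$; the intended construction is $E_{\alpha,a}=\mathbb{1}_{\alpha=\alpha_0}\,c_a$, which does satisfy $\sigma_X(E)=0$ and has weight $|c|$. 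With that correction, the upper bound goes through, and the lower bound argument you sketch (using $\ker H^T=0$ to rule out short logicals coming from the transpose sector) is indeed the crux of the Tillich--Z\'emor proof.
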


A natural approach to perform error correction would be to directly mimic the classical bit-flip decoding algorithm analyzed by Sipser and Spielman, that is try to apply $X$-type (or $Z$-type) correction to qubits when it leads to a decrease of the syndrome weight. Unfortunately, in that case, there are error configurations of constant weight that couldn't be corrected: these correspond to error patterns consisting of half a generator where exactly half of its qubits in $A\times A$ and half of its qubits in $B\times B$ are in error. It is easy to check that flipping any single qubit of that generator leaves the syndrome weight invariant, equal to $d_A d_B/2$. On the other hand, removing all the errors from that generator decreases the syndrome weight by $d_A d_B/2$. This suggests the ``small-set-flip'' strategy that we describe next.

Focusing on $X$-type errors for instance, and assuming that the syndrome $\sigma = H_X E$ is known, the algorithm cycles through all the $X$-type generators of the stabilizer group (\textit{i.e.} the rows of $H_Z$), and for each one of them, determines whether there is an error pattern contained in the generator that decreases the syndrome weight.
Assuming that this is the case, the algorithm applies the error pattern (choosing the one maximizing the ratio between the syndrome weight decrease and the pattern weight), if there are several). The algorithm then proceeds by examining the next generator. Since the generators have (constant) weight $d_A+d_B$, there are $2^{d_A + d_B} = \mathcal{O}(1)$ possible patterns to examine for each generator. 
 
Before describing the algorithm more precisely, let us introduce some additional notations. 
Let $\cX$ be the set of subsets of $V$ corresponding to $X$-type generators: $\cX = \{\Gamma_Z(g): g \in C_Z\} \subseteq \cP(V)$, where $\cP(V)$ is the power set of $V$. The indicator vectors of the elements of $\cX$ span the dual code $\cC_Z^\perp$. 
  The condition for successful decoding of the $X$ error $E$ then asks that there exists a subset $X \subset \cX$ such that
  \begin{align*}
    E \oplus \hat{E} = \bigoplus_{x \in X} x, 
  \end{align*}
  where $\hat{E}$ is the output of the decoding algorithm. This means that the remaining error after decoding is trivial, that is equal to a sum of generators.
    At each step, the small-set-flip algorithm tries to flip a subset of $\Gamma_Z(g)$ for some generator $g \in C_Z$ which decreases the syndrome weight $|\sigma|$. In other words, it tries to flip some element $F \in \cF_0$ such that $\Delta(\sigma, F) > 0$ where:
    \begin{align}\label{eq:notations1}
      \cF_0 := \{F \subseteq \Gamma_Z(g): g \in C_Z\},
      && \Delta(\sigma, F) := |\sigma| - |\sigma \oplus \sigma_X(F)|.
    \end{align}

The decoding algorithm consists of two iterations of Algorithm \ref{algo decodage qec0} below: it first tries to correct $X$-type errors, then it is applied a second time (exchanging the roles of $X$ and $Z$) to correct $Z$-type errors.
  \begin{algorithm}[H]
    \caption{(Ref.~\cite{leverrier2015quantum}): Small-set-flip decoding algorithm for quantum expander codes
    }\label{algo decodage qec0}
      {\bf INPUT:} $\sigma \subseteq {C}_X$, a syndrome where $\sigma = \sigma_X(E)$ with $E \subseteq {V}$ an error
      \\{\bf OUTPUT:} $\hat{E}\subseteq {V}$, a guess for the error pattern (alternatively, a set of qubits to correct)
      \\{\bf SUCCESS:} if $E \oplus \hat{E} = \bigoplus_{x \in X} x$ for $X \subseteq \cX$, \textit{i.e.} $E$ and $\hat{E}$ are equivalent errors
      
    \hrule
      \begin{algorithmic}
        \State{$\hat{E}_0 = 0$ ; $\sigma_0 = \sigma$ ; $i = 0$}
        \While{$\displaystyle \left(\exists F \in \cF_0:  \Delta(\sigma_i, F) > 0\right)$}
        \\\State{$\displaystyle F_i = \argmax_{F \in \cF_0} \frac{\Delta(\sigma_i, F)}{|F|} \qquad$ // pick an arbitrary one if there are several choices}
        \State{$\hat{E}_{i+1} = \hat{E}_i \oplus F_i$}
        \State{$\sigma_{i+1} = \sigma_i \oplus \sigma_X (F_i)$ \qquad // $\sigma_{i+1} = \sigma_X (E \oplus \hat{E}_{i+1})$}
        \State{$i = i+1$}
        \EndWhile
        \State{\Return{$\hat{E}_i$}}
      \end{algorithmic}
  \end{algorithm}
 
It was proven in Ref.~\cite{leverrier2015quantum} that this algorithm corrects arbitrary errors of size $O(\sqrt{n})$ provided that the expansion of the graph satisfies $\delta < 1/6$.

\begin{theorem}[Leverrier, Tillich, Z\'emor \cite{leverrier2015quantum}]\label{thm:LTZ}
  Let $G = (A \cup B, \cE)$ be a $(d_A,d_B)$-biregular  $(\gamma, \delta)$-expanding graph with $\delta < 1/6$. Letting $d_A$ and $d_B$ be fixed and allowing $n_A, n_B$ to grow, the small-set-flip decoding algorithm (\Cref{algo decodage qec0}) runs in time linear in the code length $n=n_A^2 + n_B^2$, and decodes any quantum error pattern of weight less than
  \begin{align}
    \label{eq:w0}
    w_0 = \frac{\gamma n_B}{3(1+d_B)}.
  \end{align}
\end{theorem}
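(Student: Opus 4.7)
By the symmetric roles of $\cC_X$ and $\cC_Z$ in the hypergraph product, it suffices to analyze the $X$-correction phase, i.e., one run of \Cref{algo decodage qec0}. The key quantity to control is the residual error $e_i := E \oplus \hat{E}_i$, whose syndrome is exactly $\sigma_i$. The algorithm strictly decreases $|\sigma_i|$ at each iteration, so it necessarily halts. The content of the theorem is then: when it halts, $e_i$ lies in $\cC_Z^\perp$ (equivalently, is a sum of elements of $\cX$). I plan to establish this via the contrapositive statement: \emph{as long as $e_i$ is non-trivial modulo $\cC_Z^\perp$ and $|e_i|$ lies below a controlled threshold, there exists $F \in \cF_0$ with $\Delta(\sigma_i, F) > 0$}. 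This is the ``critical-generator'' lemma, and it is the core technical step.

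\textbf{Critical-generator lemma via expansion.} The heart of the proof is to use $(\gamma, \delta)$-expansion of $G$ (with $\delta < 1/6$) to show that for any non-trivial $e \subseteq V$ with $|e| \le 3 w_0$, some $X$-type generator $g \in C_Z$ admits a subset $F \subseteq \Gamma_Z(g)$ whose flip strictly decreases the syndrome weight. The plan is: first, by a standard reduction using the bipartite structure $V = A^2 \cup B^2$ and the hypergraph-product form of $H_X, H_Z$, lift the left-expansion of $G$ to an expansion statement for the factor graphs $G_X$ and $G_Z$ restricted to small subsets of $V$. Second, invoke the classical unique-neighbor argument on the image of $e$ under $H_Z$: a non-trivial configuration must contain a row (generator) $g$ whose support $\Gamma_Z(g)$ intersects $e$ on a subset $F$ that has strictly more ``unique'' neighbors in $\sigma_i$ (bits where flipping $F$ removes a syndrome bit) than ``collisions'' (bits where flipping $F$ creates one). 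The threshold $\delta < 1/6$ is exactly what makes the relevant inequality between unique neighbors and shared neighbors hold after the two-level reduction. Triviality (i.e., $e \in \cC_Z^\perp$) is the only obstruction, which recovers the successful termination condition.

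\textbf{Weight control and termination.} To close the loop, I must ensure $|e_i|$ stays below $3 w_0$ throughout the run, so that the critical-generator lemma remains applicable. This follows from the greedy choice $F_i = \argmax_F \Delta(\sigma_i, F)/|F|$ in \Cref{algo decodage qec0}. A global accounting argument compares the sum $\sum_i |F_i|$ to the total syndrome drop $|\sigma_0| - |\sigma_{\text{final}}|$: the existence of a critical generator guarantees a lower bound on the best ratio $\Delta/|F|$ at each step (depending only on $\gamma, \delta, d_A, d_B$), which in turn bounds $\sum_i |F_i|$ linearly in $|\sigma_0|$ and hence linearly in $|E|$. With the initial bound $|E| < w_0$ and $|\sigma_0| \le (d_A + d_B) |E|$, tuning the factor $3(1+d_B)$ in the denominator of $w_0$ then yields $|e_i| \le |E| + \sum_j |F_j| \le 3 w_0$. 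Termination is automatic since $|\sigma_i|$ strictly decreases and is a non-negative integer bounded initially by $O(\sqrt{n})$.

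\textbf{Runtime.} A naive implementation costs $O(n)$ per iteration, giving $O(n^{3/2})$. To obtain linear time, maintain a pool of ``flippable'' generators together with their best $F$; flipping $F_i$ changes the $\Delta$-values only of the $O(1)$ generators sharing a qubit with $F_i$ (by the LDPC property: each qubit sits in at most $2 d_B$ generators, each generator touches $d_A + d_B$ qubits). Updating the pool is thus $O(1)$ per iteration, the number of iterations is $O(|\sigma_0|) = O(\sqrt{n})$, and initialization is $O(n)$. The dominant cost is therefore $O(n)$. The expansion-based critical-generator analysis is where all the technical difficulty lies; the weight-control and data-structure arguments are routine once that lemma is in hand.
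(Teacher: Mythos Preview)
The paper does not prove \Cref{thm:LTZ}; it is cited from~\cite{leverrier2015quantum} and stated without proof. The only trace of the argument that appears in the present paper is \Cref{existance of a critical generator and error}, labeled ``Lemma~8 of~\cite{leverrier2015quantum} revisited,'' which encapsulates the critical-generator step. So there is no paper-side proof to compare against beyond that lemma.

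Your outline matches the strategy of~\cite{leverrier2015quantum}: critical-generator existence, inductive weight control, and an amortized data-structure argument for linear time. Two places in your sketch are doing real work that you have not yet pinned down. First, you assert that ``the threshold $\delta<1/6$ is exactly what makes the relevant inequality\dots hold,'' but you never say which inequality. In the LTZ analysis the critical generator yields an $F\subseteq\Gamma_Z(g)$ with a \emph{quantitative} lower bound on $\Delta(\sigma,F)/|F|$ (not merely $\Delta>0$); the constant $1/6$ is what makes that ratio bounded away from zero, and without it your weight-control step collapses. Your plan to ``lift left-expansion of $G$ to $G_X,G_Z$'' and then run a unique-neighbor count is the right direction, but the actual computation that produces $1/6$ lives in the bipartite decomposition $\Gamma_Z(g)=x_a\uplus\bar x_a\uplus\chi_a\uplus x_b\uplus\bar x_b\uplus\chi_b$ (as in \Cref{existance of a critical generator and error}) and you should plan to carry it out rather than cite it as ``standard.'' Second, the phrase ``tuning the factor $3(1+d_B)$'' hides the entire bookkeeping: the $3$ comes from the ratio bound on $\Delta/|F|$ and the $(1+d_B)$ from $|\sigma_0|\le d_B|E|$ combined with $|e_i|\le|E|+\sum_j|F_j|$. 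This is where the specific $w_0$ is produced, so it is not a tuning step but the computation itself.

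Your runtime argument is fine and is the standard one.
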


In a recent work, the analysis was extended to the case of random errors (either independent and identically distributed, or local stochastic) provided that the syndrome extraction is performed perfectly and under a stricter condition on the expansion of the graph \cite{fawzi2018efficient}.

  \begin{theorem}[Fawzi, Grospellier, Leverrier \cite{fawzi2018efficient}] \label{main theorem}
     Let $G = (A \cup B, \cE)$ be a $(d_A,d_B)$-biregular  $(\gamma, \delta)$-expanding graph with $\delta < 1/8$. Then there exists a probability $p_0 > 0$ and constants $C,C'$ such that if the noise parameter on the qubits satisfies $p < p_0$, the small-set-flip decoding algorithm (\Cref{algo decodage qec0}) runs in time linear in the code length and corrects a random error with probability at least $1 - C n\left(\frac{p}{p_0}\right)^{C' \sqrt{n}}$.
  \end{theorem}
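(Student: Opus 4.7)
The natural strategy is to reduce the random-error statement to the adversarial guarantee of \Cref{thm:LTZ} via a percolation argument. \Cref{thm:LTZ} already gives correction of any error of weight up to $w_0 = \Theta(\sqrt{n})$, so if one can show that, for a random local stochastic error with sufficiently small rate, the error decomposes with overwhelming probability into mutually ``independent'' pieces each of size at most $w_0$, and that the small-set-flip algorithm acts on each piece independently, then the theorem follows. The failure probability ends up being that of a percolation cluster exceeding a linear threshold length scale, which yields exactly the $\reducedProba$-type tail claimed.

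Concretely, I would first define a graph structure on $V$ in which two qubits $u,v$ are declared adjacent whenever there exists $F \in \cF_0$ containing both $u$ and $v$, or equivalently whenever they share a generator in $\cX$. Call this the \emph{syndrome adjacency graph}; its degree is bounded by a constant depending only on $d_A, d_B$. For any error $E$, consider the partition of $E$ into connected components $E = E_1 \sqcup \cdots \sqcup E_r$ with respect to this graph. The main structural lemma I would prove is a locality lemma: if each $E_j$ has $|E_j| \le w_0$, then running \Cref{algo decodage qec0} on $\sigma_X(E)$ produces an output $\hat E$ that decomposes as $\hat E = \hat E_1 \sqcup \cdots \sqcup \hat E_r$ where each $\hat E_j$ is an output the algorithm could produce on input $\sigma_X(E_j)$ alone, and hence $E_j \oplus \hat E_j$ lies in $\cC_Z^\perp$ by \Cref{thm:LTZ}. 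The proof proceeds by induction on the number of small-set-flip moves: any candidate $F \in \cF_0$ with $\Delta(\sigma,F) > 0$ necessarily meets the support of exactly one component's current residual error, because the supports of the residual syndromes $\sigma_X(E_j \oplus \hat E_j^{(i)})$ live in disjoint parts of $C_X$ by the component structure, so the $\Delta$ function decouples additively across components.

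The second ingredient is the percolation bound. Since the syndrome adjacency graph has bounded degree $\Delta_0 \le 2d_B(d_B + 2d_A - 1)$, the number of connected subsets of size $s$ containing a fixed vertex is bounded by $(e\Delta_0)^s$ via the standard Kesten/Peierls enumeration. Under the local stochastic assumption with rate $p$, the probability that any fixed set of size $s$ is contained in $E$ is at most $p^s$. Union-bounding over all vertices and all connected sets of size $s \ge w_0$ yields
\begin{equation*}
\Pr[\text{some component of } E \text{ has size} \ge w_0]
\;\le\; |V| \sum_{s \ge w_0} (e\Delta_0 \, p)^s
\;\le\; C\, n \left(\frac{p}{p_0}\right)^{C' \sqrt{n}}
\end{equation*}
provided $p < p_0 := 1/(e\Delta_0)$, where $C, C'$ depend only on $d_A, d_B$. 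Combined with the locality lemma, this bounds the failure probability for the $X$-decoding; the same argument for $Z$ yields the full result up to a factor of $2$ absorbed into $C$.

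The main obstacle is the locality lemma. The greedy choice in \Cref{algo decodage qec0} picks the $F$ maximizing $\Delta(\sigma,F)/|F|$ globally, so one must carefully argue that such a maximizer is always supported in a single component's ``influence region,'' and that flipping it neither creates new syndrome bits in other components nor invalidates the independent correctness analysis per component. This is where the strengthened expansion $\delta < 1/8$ (rather than $\delta < 1/6$) is used: it gives enough slack to guarantee that the sequence of small-set-flips generated while processing one component never drifts into another component's support, even when several components have comparable $\Delta/|F|$ ratios. The original proof in \cite{leverrier2015quantum} establishes the per-component guarantee under weaker expansion, so the work here is purely in ruling out cross-component interference, which is exactly what the tighter expansion margin provides.
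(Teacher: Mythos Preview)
Your high-level strategy---locality plus percolation---is correct and matches the approach of \cite{fawzi2018efficient} (whose machinery is reproduced in \Cref{sec:analysis} of this paper). But the locality step as written does not go through, and the missing idea is the main content of the proof. The object to decompose is not $E$ but the \emph{execution support} $U = E \cup F_0 \cup \dots \cup F_{f-1}$ (cf.\ \Cref{lem:local}). A flip $F_i$ need not lie inside $E$ or even be adjacent to a single component of $E$: the condition $\Delta(\sigma_i,F_i)>0$ only forces $\sigma_X(F_i)$ to meet the current residual syndrome, and two components of $E$ at graph-distance~$2$ can be bridged by a single such flip. So your inductive claim that each candidate $F$ meets exactly one component is false in general. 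The correct locality (\Cref{lem:local}) is phrased for connected components of $U$, and one then bounds those via the $\alpha$-subset notion (\Cref{def:alpha}, \Cref{linear number of flips synd error}): each connected component of $U$ is a connected $\alpha$-subset of $E$, and the percolation bound needed is on $\textrm{MaxConn}_\alpha(E)$ (\Cref{lem:perco}), not on the components of $E$ themselves. The Peierls enumeration must then count connected sets $X$ with $|X\cap E|\ge\alpha|X|$; this still yields a tail of the claimed form but with different constants and threshold.

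Two smaller corrections. Your adjacency graph links qubits only when they share an element of $\cX$ (an $X$-type generator); it must also link qubits sharing a $Z$-type check in $C_X$, since that is how flipping one qubit alters the syndrome contribution of another---this is precisely the graph $\cG$ defined above \Cref{lem:local}. And the role of $\delta<1/8$ is not to prevent cross-component drift: per \Cref{rk:algo} it guarantees that the argmax flip of \Cref{algo decodage qec0} automatically satisfies $\Delta\ge(1-8\delta)\,|\sigma_X(F)|$, and it is this lower bound on $\Delta/|\sigma_X(F)|$ that produces a positive constant $\alpha$ in the $\alpha$-subset bound of \Cref{linear number of flips synd error}.
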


The caveat of this result is that it only applies in absence of errors for the syndrome extraction. The main technical contributions of this paper are to establish that the same algorithm still works in presence of noise on the syndrome, and to show that the decoding algorithm can be parallelized to run in logarithmic time.

\subsection{Noise models}
\label{sec:noise}

In the context of quantum fault-tolerance, we are interested in modeling noise occurring during a quantum computation. We refer the reader to the introduction on the topic by Gottesman for a thorough description of noise models for fault-tolerance \cite{gottesman2009introduction}.
In the circuit model of quantum computation, the effect of noise is to cause faults occurring at different locations of the circuit: either on the initial state and ancillas, on gates (either active gates or storage gates) or on measurement gates. We refer to this model as \emph{basic model} for fault-tolerance. The main idea to perform a computation in a fault-tolerant manner is then to encode the logical qubits with a quantum error correcting code, replace the locations of the original circuit by gadgets applying the corresponding gate on the encoded qubits, and interleave the steps of the computation with error correction steps. 
In general, it is convenient to abstract away the details of the implementation and consider a \emph{simplified model} of fault-tolerance where one is concerned with only two types of errors: errors occurring at each time step on the physical qubits, and errors on the results of the syndrome measurement. The link between the basic and the simplified models for fault-tolerance can be made once a specific choice of gate set and gadgets for each gate is made. This is done for instance in Section 7 of Ref.~\cite{gottesman2014fault}.
In other words, the simplified model of fault-tolerance allows us to work with quantum error correcting codes where both the physical qubits and the check nodes are affected by errors. 

As usual in the context of quantum error correction, we restrict our attention to Pauli-type errors since the ability to correct all Pauli errors of weight $t$ implies that arbitrary errors of weight $t$ can be corrected. In particular, one only needs to address $X$ and $Z$-type errors since a $Y$-error corresponds to simultaneous $X$ and $Z$-errors. Therefore, we think of an error pattern on the qubits as a pair $(E_X, E_Z)$ of subsets of the set of qubits $V$. This should be interpreted as Pauli error $X$ on all qubits in $E_X \setminus E_Z$, error $Y$ on $E_X \cap E_Z$ and error $Z$ on $E_Z \setminus E_X$.
Similarly, the error on the syndrome consists of two classical strings $(D_X, D_Z)$ which are subsets of the sets $C_X$ and $C_Z$ of check nodes, whose values have been flipped.
This means that the syndromes that are provided as the input of the decoding algorithm are 
\begin{align}
\sigma_X := \sigma_X(E_X) \oplus D_X, \quad \sigma_Z := \sigma_Z(E_Z) \oplus D_Z.
\end{align}

The algorithm we will consider in this work treat $X$ errors and $Z$ errors in a symmetric fashion. More precisely, the decoding algorithm first tries to recover $E_X$ from $\sigma_X$, then proceeds in a similar way to try to recover $E_Z$ from $\sigma_Z$, without exploiting any information about $E_X$ or $\sigma_X$. Said otherwise, the algorithm tries to recover both $E_X$ and $E_Z$ independently. 
For this reason, it will be convenient to restrict our attention to $X$-type errors in the following since $Z$-type error would be treated in the same way. 
In particular, an error will correspond to two sets: a subset $E$ of the qubits and a subset $D$ of the check nodes. 

While considering independent errors is natural in the context of quantum error correction, assuming independence isn't really justified for fault-tolerance since errors will tend to propagate through the circuit for instance, and will therefore likely be correlated. For this reason, the usage is to consider a weaker model of stochastic, but locally decaying noise when studying quantum fault-tolerance: this is the local stochastic error model. 
\begin{definition}[Local stochastic error model]\label{model erreur lc}\
  \\Let $V$ be the set of qubits and $C$ be the set of check nodes. A random error $(E, D)$ with $E \subseteq V$ and $D \subseteq C$ satisfies the local stochastic error model with parameters $(p,q)$ if for all $S \subseteq V$ and $T \subseteq C$, we have
  \begin{align}
  \mathbb{P}[S \subseteq E, T \subseteq D] \leq p^{|S|} q^{|T|}. 
  \end{align}
  If $q=0$, \textit{i.e.}, there are no errors on the syndrome, then we talk of a local stochastic model of parameter $p$.
 In other words, the location of the errors is arbitrary but the probability of a given error decays exponentially with its weight.
\end{definition}

  In this paper, we study a variant of \Cref{algo decodage qec0} that allows us to deal with syndrome errors. This is \Cref{algo decodage qec} below (see \cref{eq:notations1,eq:notations2} for notations). The three differences with \Cref{algo decodage qec0} are the input $\sigma = \sigma_X(E) \oplus D$ instead of $\sigma = \sigma_X(E)$, the while loop condition $\Delta(\sigma_i, F_i) \geq \beta |\sigma_X(F_i)|$ instead of $\Delta(\sigma_i, F_i) > 0$ and the use of $\cF$ instead of $\cF_0$ as set of possible flips (see \Cref{rk:algo} for a discussion about these changes):
  \begin{align}\label{eq:notations2}
    \cF := \left\{F \subseteq \Gamma_Z(g): g \in C_Z \text{ and } |\sigma_X(F)| \geq \frac{d_A}{2} |F|\right\}.
  \end{align}
\begin{algorithm}[H]
  \caption{: Small-set-flip decoding algorithm for quantum expander codes of parameter $\beta \in (0; 1]$
  }\label{algo decodage qec}
    {\bf INPUT:} $\sigma \subseteq {C}_X$ a syndrome such that $\sigma = \sigma_X(E) \oplus D$ for some (unknown) $E \subseteq {V}$ and $D \subseteq C_X$ 
    \\{\bf OUTPUT:} $\hat{E}\subseteq {V}$, a guess for the error pattern (alternatively, a set of qubits to correct)
    \hrule
    \begin{algorithmic}
      \State{$\hat{E}_0 = 0$ ; $\sigma_0 = \sigma$ ; $i = 0$}
      \While{$\displaystyle \exists F_i \in \cF:  \Delta(\sigma_i, F_i) \geq \beta |\sigma_X(F_i)|$}
      \State{$\hat{E}_{i+1} = \hat{E}_i \oplus F_i$}
      \State{$\sigma_{i+1} = \sigma_i \oplus \sigma_X (F_i)$ \qquad // $\sigma_{i+1} = \sigma_X (E \oplus \hat{E}_{i+1}) \oplus D$}
      \State{$i = i+1$}
      \EndWhile
      \State{\Return{$\hat{E}_i$}}
    \end{algorithmic}
\end{algorithm}

\begin{remark}\label{rk:algo}
  In order to simplify our discussion in the paper, we will say that the input of \Cref{algo decodage qec} is $(E,D)$ when its input is $\sigma_X(E) \oplus D$, we will call $\hat{E}$ the output, we will call $E \oplus \hat{E}$ the remaining error, we will denote by $f$ the number of steps and we will call $U = E \cup F_0 \cup \ldots \cup F_{f-1}$ the execution support.
  
  Using $\cF$ instead of $\cF_0$ in \Cref{algo decodage qec} is not restrictive because if the condition $|\sigma_X(F)| \geq \frac{d_A}{2} |F|$ is not satisfied for some $F \subseteq \Gamma_Z(g)$ then this condition is satisfied by $F' = \Gamma_Z(g) \setminus F$ (see the proof of \Cref{existance of a critical generator and error}).

    In \Cref{algo decodage qec0}, the weaker ``while loop condition'' $\Delta(\sigma_i, F_i) > 0$ was used, but it turns out that if $D = \varnothing$ then with high probability on the choice of $E$, the condition $\Delta(\sigma_i, F_i) \geq (1 - 8 \delta) |\sigma_X(F_i)|$ is automatically satisfied at each step of \Cref{algo decodage qec0} (this property was used in the proof of \Cref{main theorem}). On the other hand when $D \neq \varnothing$, requiring $\Delta(\sigma_i, F_i) \geq \beta |\sigma_X(F_i)|$ with $\beta$ close to $1$ makes \Cref{algo decodage qec} more robust against syndrome errors.

\end{remark}

The behavior of \Cref{algo decodage qec} in the particular case where $D = \varnothing$ could be studied following the proof of \cite{fawzi2018efficient}: given $\cQ_G$ a quantum expander code constructed using $G$ some bipartite $(\gamma, \delta)$-expander graph with $\delta < 1/8$, it is possible to prove that a random error $E$ is corrected with high probability by the small-set-flip algorithm of parameter $\beta_0$ ($\beta_0$ as defined in \Cref{subsec:notations}). In the noisy case $D \neq \varnothing$, we cannot hope to entirely correct the error because any single qubit error cannot be distinguished from a well-chosen constant weight syndrome bit error. But we will prove in \Cref{thm:correction} that when $\delta < 1/16$, the correction provided by the small-set-flip algorithm of parameter $\beta < \beta_1$ ($\beta_1$ as defined in \Cref{subsec:notations}) leads to a residual error that is local stochastic with controlled parameter.

\subsection{Fault-tolerant protocol with quantum expander codes}
\label{sec:FT}

The objective of this section is to provide a brief description of Gottesman's framework applied to quantum expander codes. A fault-tolerant protocol is a procedure that transforms an ideal circuit $\mathrm{C}$ into a fault-tolerant one $\mathrm{\tilde{C}}$ in a way that ensures that even if the individual components of $\mathrm{\tilde{C}}$ are noisy, the output of the $\mathrm{\tilde{C}}$ can be used to reproduce the output of $\mathrm{C}$ (in an approximate sense we will specify shortly). 

\paragraph{Definition of a circuit} We start by defining a circuit in this context. A circuit $\mathrm{C}$ can be described by local operations acting on a collection of (at most) $k$ wires, each containing a qubit. More precisely, a circuit of depth $d$ has $d$ time steps and in each time step an operation may be applied to each one of the wires. Each such operation is called a \emph{location} and it is even useful to also consider the operation of not doing anything on a wire as a ``wait'' location (this is used to model memory errors). As such, our circuit $\mathrm{C}$ will have the following types of locations acting on one of the $k$ wires: wait, preparation of $\ket{0}$ (for an inactive wire), Hadamard gate, $R_{\pi/4}$ gate, $R_{\pi/8}$ gate where $R_{\phi} = \begin{pmatrix} e^{-i\phi} & 0 \\ 0 & e^{i \phi} \end{pmatrix}$, a measurement in the computational eigenbasis of $Z$ (the corresponding wire then becomes inactive and the outcome is stored in a classical wire). In order to do useful computations, a two-qubit gate is needed, the controlled-not (CNOT) gate is a location acting on two of the wires. Another item is needed to describe the behavior of the circuit: classical computations on the classical wires of the circuit. As we are going to assume that the classical computation is not subject to faults, we will not describe explicitly the classical computations in terms of a circuit but rather it will be considered as a computation that takes place in between time steps. For this reason, it is important that the classical computations are fast, and this will be the case for us as all the classical computations can be done in parallel in $\cO(\log k)$ time. The location present at any point in the circuit can be controlled by the classical wires. Note that circuit $\mathrm{C}$ does not take any input and all quantum wires are initialized with a state preparation $\ket{0}$, and the output is composed of some classical wires and some quantum wires. As we are most often interested in a classical result, we will assume the circuits ends with a measurement of all the qubits, and hence the output of the circuit is described a random variable $Y \in \{0,1\}^m$, with a probability distribution $P_{Y}$.

\paragraph{Noise model} An ideal circuit will behave exactly as prescribed. We now would like to define a noise model, called \emph{local stochastic model}, for a circuit. The subset of locations that are faulty is a random variable $F$ with a distribution restricted by a parameter $p$. Namely, the set $F$ of faulty locations has an arbitrary distribution that satisfies the property for any subset $S$ of locations $\mathbb{P}[S \subseteq F] \leq p^{|S|}$. Then, the faulty locations are replaced by arbitrary quantum operations with the same input and output space as the original location. We also highlight some important assumptions that are implicit in our definition of circuit. Namely, it is possible to act on all the qubits in parallel. In addition, it is possible to introduce fresh ancilla qubits at any point during the computation, \textit{via} a prepare $\ket{0}$ location. Another  assumption that was already mentioned is that classical computation is not subject to faults and is fast enough so that it can be done between the time steps of the main circuit.

If the noise model we just described is applied to a circuit $\mathrm{C}$, in general, its output will have little to do with the ideal output $P_{Y}$ of $\mathrm{C}$. Our objective is to build a new circuit $\mathrm{\tilde{C}}$ which acts as a robust version of $\mathrm{C}$. In particular, the output of the circuit $\mathrm{\tilde{C}}$ should have as output a random variable $\tilde{Y} \in \{0,1\}^{m'}$ that should behave as an \emph{encoded} version of the ideal outcome $Y$ in some error correcting code. More precisely, we want that there exists a function $\mathrm{D}$ (which should be efficiently computable) such that the distributions $P_{\mathrm{D}(\tilde{Y})}$ and $P_{Y}$ are $\varepsilon$-close in total variation distance. For example, if we want to solve a decision problem and $Y \in \{0,1\}$, then the output $\tilde{Y}$ of the robust circuit will correspond to the ideal result except with probability at most $\varepsilon$.

\paragraph{Circuit transformation overview}
We start with a circuit $\mathrm{C}$ on $k$ qubits with $f(k)$ locations, and we will assume that the circuit $\mathrm{C}$ is sequential, \textit{i.e.}, at any time step, there is only one location which is not a wait location. Note that this can always be done at the cost of increasing the number of locations to at most $k f(k)$. The main error correcting code we use to construct $\mathrm{\tilde{C}}$ is a quantum expander code $\cQ$ with parameters $[[n',k']]$ with $k'$ chosen later and $n' = \frac{k'}{R}$, where $R$ is the rate of the code (in the notation of Theorem 1, $R$ should be chosen to be basically $1/\eta$). The general structure of the circuit $\mathrm{\tilde{C}}$ is as follows. The $k$ qubits of $\mathrm{C}$ are partitioned into blocks, $B_1, \dots, B_{\ell}$ with each block containing at most $k' = \frac{k}{\ell}$ with $\ell = \log^c k$ and $c$ a well-chosen constant (for simplicity, we assume that $k$ is divisible by $\ell$). The circuit $\mathrm{\tilde{C}}$ is going to contain blocks $\tilde{B}_{j}$ each having $n'$ physical qubits and corresponding to the block $B_j$ encoded using the code $\cQ$. The circuit $\mathrm{\tilde{C}}$ is going to alternate between two types of cycles: a simulation cycle and an error correction cycle. In an error correction cycle, we perform a measurement of the syndromes for the blocks $\tilde{B}_j$ for all $j$ in parallel, then use the small-set-flip decoding algorithm to determine the error and apply the corresponding correction to each block. 
In a simulation cycle, we simulate an action of the circuit $\mathrm{C}$. Recall that we assumed that at each time step of $\mathrm{C}$, there is only one location which is not a wait location. The objective will be to simulate this location. Note that in a simulation cycle, only one block $\tilde{B}_j$ will be involved if there are no locations that are across different blocks or two blocks $\tilde{B}_{j_1}$ and $\tilde{B}_{j_2}$ if the location is a CNOT between two qubits, one in the block $B_{j_1}$ and one in $B_{j_2}$. Recall that our objective is to obtain that under the noise model described above, the output is $\varepsilon$-close to an ideal output. For this reason, we define $\varepsilon_0 = \frac{\varepsilon}{f(k)}$ which can be interpreted as an allowed failure rate per location, for a logical operation.

In the following paragraphs, we give a bit more details on how these cycles are performed. We refer the reader to~\cite{gottesman2014fault} for the analysis (and other variants that could be used).

\subsubsection*{Error correction cycle}

For each block, we perform a measurement of the syndrome. For this, we use $n' - k'$ new qubits, one for each generator, prepared using a ``prepare $\ket{0}$'' location. For a given generator (which has a constant weight at most $r$), we start by applying a Hadamard gate, then we apply controlled-X (\textit{i.e.}, CNOT) and controlled-Z (\textit{i.e.}, CNOT conjugated by Hadamard on the target qubit) gates for each non-identity element of the generator, then a Hadamard gate and a measurement. If the gates and measurements are perfect, this is a measurement of the generator. In the presence of noise, one can still bound the propagation of errors because each generator measurement acts on a constant number of qubits and each qubit is involved in a constant number of generators, thanks to the LDPC character of the quantum code. This measurement is performed for each generator of the code $\cQ$. We will perform in parallel measurements for generators that act on disjoint qubits (for example using the partition described in Section~\ref{sec:parallel}). And all the blocks $\tilde{B}_j$ are treated in parallel.

Once we obtain the results of the measurements, we then apply the small-set-flip decoding algorithm (which is a classical algorithm) to determine the error pattern on the physical qubits and then apply the corresponding corrections.

\paragraph{Cost of an error correction cycle} In terms of memory, note that the number of new qubits used for this step is $(n' - k') \ell = (\frac{1}{R} - 1) k$. We remark that we could perform the error correction less often in order to reduce the memory overhead. For example, for an integer $s$, at the $p$-th error correction cycle we could only perform error correction for the blocks $\tilde{B}_j$ where $j = p \mod s$. In this case, the number of qubits used for the measurement is at most $\ceil*{(\frac{1}{R} - 1) \frac{k}{s}}$. This will be at the cost of decreasing the allowed probability of error per location, but for any constant $s$, the allowed error probability per location will remain constant.

In terms of time, the number of time steps (or depth) in the quantum circuit $\mathrm{\tilde{C}}$ is constant. In addition, the classical computation time for the small-set-flip decoding algorithm can be done in time $\cO(\log k)$ (see Section~\ref{sec:parallel}).

\subsubsection*{Simulation cycle}

For a simulation cycle, the structure of the operations we will apply is as follows. We use new qubits to create a well-chosen ancilla state (depending on the type of location we want to simulate), then perform a Bell measurement between the relevant block(s) and the ancilla state and then perform some corrections. In order to construct the well-chosen ancilla state $\ket{\Psi}$, we use the following lemma, which is proved in~\cite{gottesman2014fault}.
\begin{lemma}\label{lem:concat}
  Given $\cD$ a quantum circuit with output given by the $m$-qubit state $\ket{\Psi}$ and any $\delta>0$, there exists a quantum circuit $\cD'$ satisfying the following properties.
  \begin{itemize}
  \item If the circuit undergoes local stochastic noise with sufficiently small parameter $p$, then there exists a failure event that has probability at most $\delta$ and conditioned on the failure event not happening, the output of the circuit is described by $\ket{\Psi}$ to which a local stochastic error is applied.
    \item The number of qubits in the circuit $\cD'$ is at most $c_0 m \log^{c_1} (|\cD|/\delta))$ where $|\cD|$ is the number of locations in $\cD$, for some constants $c_0, c_1$.
  \item When $\cD$ is of depth $d$ and the classical computation takes $t$ time steps, the depth of $\cD'$ is $\cO(d + \log(\log(|\cD|/\delta)))$ and the classical computation time is $\cO(t + \log(\log(|\cD|/\delta)))$.
  \end{itemize}
\end{lemma}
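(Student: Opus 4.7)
My approach is the standard concatenated-codes threshold construction, specialized to the preparation of a fixed ancilla state. Fix a small stabilizer code with a fault-tolerant universal gate set and constant threshold $p_{\mathrm{th}}$ (e.g.\ the $[[7,1,3]]$ Steane code). Replace each location of $\cD$ by its fault-tolerant gadget at concatenation level $L$. By the standard recursion for concatenated fault-tolerance, each level-$L$ gadget fails with probability at most $(p/p_{\mathrm{th}})^{2^L}$ when the physical error rate $p$ is below threshold. To guarantee a total logical failure probability $\leq \delta$ on the circuit $\cD$, I pick $L$ so that $|\cD|\cdot (p/p_{\mathrm{th}})^{2^L}\leq \delta$, which requires $L=\cO(\log\log(|\cD|/\delta))$. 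A union bound over all gadgets then defines the ``failure event'' of the lemma, and conditioned on its complement each gadget acts as prescribed up to a tree-shaped local fault pattern.

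The qubit count follows: each logical qubit is encoded in $n_0^L$ physical qubits, where $n_0$ is the base code length, so $n_0^L = (\log(|\cD|/\delta))^{\log_2 n_0}$, and each gadget uses only a constant number of extra ancillas per level; adding these contributions gives the $c_0 m\log^{c_1}(|\cD|/\delta)$ bound.

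To produce $\ket{\Psi}$ in \emph{unencoded} form with a local stochastic error (rather than the encoded form used during the simulation of $\cD$), I append a decoding stage: at each concatenation level, from the top down, measure the stabilizers fault-tolerantly using Steane/Knill-style extraction, correct, and then coherently undo the encoding isometry of the base code, releasing the ancillas. Because the base code has constant size and its encoding isometry has constant depth, each level of decoding contributes only a constant depth and propagates errors by at most a constant factor within each block. The residual noise on the $m$ output qubits is then controlled by the same recursion as above, so for any $S\subseteq [m]$, the probability that $S$ is contained in the output error set is bounded by $((p/p_{\mathrm{th}})^{2^L})^{|S|}$, which yields the desired local stochastic property with a parameter that can be made arbitrarily small.

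For the depth and classical time bounds, note that level-by-level fault-tolerant simulation of $\cD$ multiplies depth by a constant per level, and there are $L=\cO(\log\log(|\cD|/\delta))$ levels; combining the depth of $\cD$ itself and the decoding tail gives the stated depth, and the same analysis applies to the classical processing of syndromes at each level, which can be done in parallel across blocks. The main obstacle, and the step requiring the most care, is the locality of the output error distribution: one must verify that the concatenation tree, together with the decoding stage, never produces long-range correlations in the output errors, i.e.\ that faults in different subtrees remain independent in the sense of \Cref{model erreur lc}. This is precisely where the constant-depth, constant-fanout structure of the base-code gadgets is used, together with the tree factorization of the local stochastic fault distribution across concatenation levels.
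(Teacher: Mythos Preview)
The paper does not give its own proof of this lemma; it simply states that it ``is proved in~\cite{gottesman2014fault}''. Your concatenated-code sketch is the standard construction behind that reference, so the high-level strategy is correct and there is nothing in the present paper to compare against beyond the citation.

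That said, your depth analysis contains a concrete gap. You write that each concatenation level multiplies the depth by a constant, and that $L=\cO(\log\log(|\cD|/\delta))$. But a constant multiplicative factor $c$ per level applied $L$ times gives a total depth blowup of $c^{L}=(\log(|\cD|/\delta))^{\Theta(1)}$, i.e.\ a \emph{polylogarithmic multiplicative} overhead, so what your argument actually yields is depth $\cO\bigl(d\cdot\mathrm{polylog}(|\cD|/\delta)\bigr)$, not the additive $\cO(d+\log\log(|\cD|/\delta))$ stated in the lemma. The sentence ``combining the depth of $\cD$ itself and the decoding tail gives the stated depth'' therefore does not follow from your own premise; either the lemma's depth bound is stated loosely in the paper (it is not used quantitatively later), or meeting it requires a more refined argument than straightforward level-by-level gadget replacement. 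The same caveat applies verbatim to the classical-time bound. Your identification of the local-stochastic property of the output error as the most delicate point is accurate, and your outline of why it should hold (tree structure of concatenation, constant fanout of the base gadgets) is the right intuition, though a complete argument requires the detailed analysis carried out in~\cite{gottesman2014fault}.
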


For each time step in $\mathrm{C}$, we simulate a single location which is not a wait location. If this location affects one qubit, we assume for simplicity of notation that it acts on the first qubit of block $B_j$. Note that we always take $\delta = \varepsilon_0 := \frac{\varepsilon}{f(k)}$ when applying Lemma~\ref{lem:concat}. We consider the different types of location case by case.
\begin{itemize}
\item To simulate a preparation of $\ket{0}$, if the block $\tilde{B}_j$ was already created, then there is nothing we need to do. If the block $\tilde{B}_j$ does not exist yet, we will create it prepared in the state $\ket{0}^{\otimes k'}$ encoded in the code $\cQ$. In order to do this, we apply Lemma~\ref{lem:concat} with the circuit $\cD$ being an encoding circuit for the code $\cQ$ initialized in the state $\ket{0}^{\otimes k'}$. 
\item To simulate a Hadamard (an $R_{\pi/4}$ gate is similar), we will apply gate teleportation by preparing an ancilla state as an encoded entangled pair to which $H$ is applied. More precisely, we apply Lemma~\ref{lem:concat} with the circuit $\cD$ being two parallel encoding circuits for the code $\cQ$ with input initialized to $(I \otimes H) \frac{1}{\sqrt{2}} (\ket{00} + \ket{11}) \otimes (\frac{1}{\sqrt{2}} (\ket{00} + \ket{11}))^{\otimes (k'-1)}$. As a result, we obtain a state $\ket{\Psi} \in (\mathbb{C}^2)^{\otimes 2n'}$. We call $D_1$ and $D_2$ the two blocks of $n'$ qubits. We then perform $n'$ Bell measurements in parallel on physical qubits of blocks $\tilde{B}_j$ and $D_1$. This allows us to deduce the outcome of the Bell measurement on the first logical qubit encoded in $\tilde{B}_j$ together with the first logical qubit encoded in $D_1$. Depending on the outcome, a Pauli correction should be on the first logical qubit of $D_2$, which can be done transversally by applying some Pauli operators on each qubit of $D_2$. The block $D_2$ then takes the place of the block $\tilde{B}_j$.
\item To simulate a $R_{\pi/8}$-gate, we similarly apply Lemma~\ref{lem:concat} with the circuit $\cD$ being two parallel encoding circuits for the code $\cQ$ with input initialized to $(I \otimes R_{\pi/8}) \frac{1}{\sqrt{2}} (\ket{00} + \ket{11}) \otimes (\frac{1}{\sqrt{2}} (\ket{00} + \ket{11}))^{\otimes (k'-1)}$. We similarly apply the Bell measurement between the blocks $\tilde{B}_j$ and $D_1$, but the difficulty now is that the possible correction operations are $Z$ and $X R_{\pi/4}$. In order to apply a logical $R_{\pi/4}$ gate, we prepare another pair of ancilla blocks $D_3$ and $D_4$ and use the same procedure as described above to prepare it and apply it.

\item If the location is a CNOT gate between qubits in blocks $B_{j_1}$ and $B_{j_2}$, assuming for simplicity of notation that the CNOT gate is between the first qubit of $B_{j_1}$ and the first qubit of $B_{j_2}$, we will create four blocks $D_1,D_2,D_3,D_4$, each containing $n'$ physical qubits and use Lemma~\ref{lem:concat} to prepare the following state: $(I_{D_1D_2} \otimes \mathrm{CNOT}_{D_3D_4}) \frac{1}{2} (\ket{00}_{D_1D_3} + \ket{11}_{D_1D_3}) \otimes (\ket{00}_{D_2D_4} + \ket{11}_{D_2D_4})\otimes (\frac{1}{\sqrt{2}} (\ket{00} + \ket{11}))^{\otimes (k'-2)}$. Then, as before, we perform a measurement in the Bell basis corresponding to logical qubit 1 of $B_{j_1}$ with logical qubit 1 of $D_1$ and another in the Bell basis corresponding to logical qubit 1 of $B_{j_2}$ with logical qubit 1 of $D_2$ and apply the Pauli corrections to the blocks $D_3$ and $D_4$. The blocks $D_3$ and $D_4$ then play the role of $\tilde{D}_{j_1}$ and $\tilde{D}_{j_2}$. For a CNOT gate between qubits within the same block $B_{j}$, the construction is similar but slightly simpler as we only need two ancillary blocks.
\item To simulate a measurement, we prepare an ancilla state in the encoded $\ket{0}^{\otimes k'}$ state, apply a logical CNOT between the first logical qubit of $\tilde{B}_j$ and the first logical qubit of the ancilla block (using the method described above), then measure all the physical qubits of the ancilla block getting a bit-string of length $n'$, and finally use the small-set-flip decoding algorithm on the $Z$-type generators (this corresponds to a classical parity check matrix) on the classical outcome to get an encoded bit-string. The first logical bit encoded in this bit-string corresponds to the outcome of the measurement we are simulating.
\end{itemize}

\paragraph{Cost of a simulation cycle} In terms of memory, for all the location types, we prepare a state $\ket{\Psi}$ on $\cO(n')$ qubits, and so $\cO(n' \log^{c_1} (n'/\varepsilon_0))$ qubits. By the choice of $k'$, this quantity is sub-linear in the number of qubits $k$. 

In terms of time, the number of time steps added to the quantum circuit $\mathrm{\tilde{C}}$ is $\cO(\log \log f(k))$ and the classical time is $\cO(\log k)$.

\section{Analysis of \Cref{algo decodage qec}}
\label{sec:analysis}

\subsection{Notations}
\label{subsec:notations}

The algorithms of \Cref{sec:analysis} and \Cref{sec:parallel} depend on three parameters $\delta, \beta \in (0;1)$ and $c \in \bR^*_+$. Note that the parameter $c$ and the constants $\chi, c_3, \eta$ and $f_0$ defined below are used in \Cref{sec:parallel} for the analysis of the parallel version of the algorithm but they are not used in \Cref{sec:analysis}.
\\We consider $G = (A \cup B, \cE)$ a $(d_A, d_B)$-biregular $(\gamma, \delta)$-expander graph with $\gamma > 0$, we denote by $\cQ$ the quantum expander code associated to $G$ (see \Cref{subs:qec}) and by $V, C_X, C_Z$ and $n := |V|$ respectively the set of qubits, the set of $Z$-type stabilizer generators, the set of $X$-type stabilizer generators and the number of physical qubits of $\cQ$. We will also use $\Gamma_X$ and $\Gamma_Z$ the neighborhoods in the graphs $G_X$ ans $G_Z$ as defined in \Cref{subs:qec}.
\\We run the small-set-flip decoding algorithm (\Cref{algo decodage qec}) of parameter $\beta$ on input $(E,D)$ where $E \subseteq V$ represents a qubit error and $D \subseteq C_X$ represents a syndrome error, we denote by $\hat{E}$ the output of the algorithm, by $f$ the number of steps and by $U = E \cup F_0 \cup \ldots \cup F_{f-1}$ the execution support.

\smallskip

We also define the constants:
\begin{align*}
  r := d_A/d_B,
  && \gamma_0 = \frac{r^2}{\sqrt{1+r^2}} \gamma,
  && \beta_0 = \beta_0(\delta) := 1 - 8 \delta,
  && \beta_1 = \beta_1(\delta) := 1 - 16 \delta,
\end{align*}
\begin{align*}
  c_0 = c_0(\delta, \beta) := \frac{4}{d_A (\beta_1 - \beta)}, 
  && c_1 = c_1(\delta, \beta) := \frac{\beta_1 - \beta}{\beta_0(1 - \beta)},
\end{align*}
\begin{align*}
  c_2 = c_2(\delta, \beta) := \frac{2 \beta_0}{\beta_1 - \beta},
  && c_3 = c_3(\beta, c) := \frac{2(1 + c)}{\beta d_A},
\end{align*}
\begin{align*}
  \chi := (d_B(d_A - 1) + 1)(d_A(d_B - 1) + 1),
\end{align*}
\begin{align*}
  \alpha_0 = \alpha_0(\beta) := \frac{r \beta}{4 + 2 r \beta},
  && \eta = \eta(\delta, \beta, c) := 1 - \frac{\beta c_1 (c - 1 - c_2)}{d_A d_B \chi c}.
\end{align*}
We also define the function ($s$  will represents the size of the input syndrome for the parallel algorithm \Cref{algo decodage qec parallel}):
\begin{align*}
  f_0 = f_0(\delta, \beta, c) : s \in \bN \mapsto \left\lceil \chi \log_{1/\eta}(s) \right\rceil.
\end{align*}
Note that:
\begin{itemize}
\item If $\delta < 1/8$ then $\beta_0 > 0$.
\item If $\delta < 1/16$ then $\beta_1 > 0$.
\item If $\delta < 1/16$ and $0 < \beta < \beta_1$ then $c_0, c_1, c_2, c_3 > 0$ and $0 < \alpha_0 \leq 1$.
\item If $\delta < 1/16$, $0 < \beta < \beta_1$ and $c > c_2 + 1$ then $\eta < 1$ and $f_0(s) = \Theta(\log(s))$.
\end{itemize}

\subsection{Statements of the theorems}

As discussed previously, the noiseless case $D = \varnothing$ was studied in \cite{fawzi2018efficient}:

\begin{theorem}[\cite{fawzi2018efficient}]\label{thm:stoc}
  We use the notations of \Cref{subsec:notations} with $\delta < 1/8$.
  \\With probability at least $1 - \reducedProba$ on the choice of the local stochastic error $E$, the small-set-flip algorithm (\Cref{algo decodage qec}) with parameter $\beta_0$ outputs some error $\hat{E}$ equivalent to $E$ under the assumption $D = \varnothing$.
\end{theorem}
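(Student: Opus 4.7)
The plan is to reduce the stochastic setting with $D = \varnothing$ to the adversarial regime of \Cref{thm:LTZ} by combining a percolation argument with the locality of \Cref{algo decodage qec}, together with a quantitative strengthening of the critical-generator analysis of \cite{leverrier2015quantum} that tightens the multiplicative constant to $\beta_0 = 1-8\delta$ under the hypothesis $\delta < 1/8$.

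First I would introduce a qubit adjacency graph on $V$: declare two qubits adjacent whenever some element of $C_X \cup C_Z$ is incident to both. Because the code is LDPC, this graph has constant maximum degree bounded in terms of $d_A, d_B$, so a standard first-moment estimate on connected subgraphs shows that, when $E$ is a local stochastic error of sufficiently small parameter $p$, the probability that $E$ contains a connected subset of size at least $t$ is at most $|V|(c_\chi p)^t$ for some constant $c_\chi$. Taking $t = \Theta(\sqrt{n})$ with $t < w_0$ (the adversarial threshold of \Cref{thm:LTZ}) and applying a union bound, with probability $1 - \reducedProba$ every connected component of $E$ has size strictly below $w_0$.

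Next I would establish a locality lemma: the execution of \Cref{algo decodage qec} on input $(E, \varnothing)$ decomposes into independent executions on each connected component of $E$. Indeed, each flip $F_i \subseteq \Gamma_Z(g)$ changes the syndrome only on $\Gamma_X(F_i)$, so the partial correction stays confined to the component containing $g$; moreover, two previously disjoint components cannot be bridged by intermediate flips, since the merged component would then itself be witnessed by a too-large connected subset of $E$, ruled out by the percolation estimate. Restricted to any single component, the algorithm is therefore solving adversarial decoding for an error of weight below $w_0$, and since $\delta < 1/8 < 1/6$, \Cref{thm:LTZ} guarantees that the cluster is corrected and the restricted output is equivalent to the restricted error.

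The main obstacle I anticipate is to verify the \emph{multiplicative} loop condition $\Delta(\sigma_i, F_i) \geq \beta_0 |\sigma_X(F_i)|$ imposed by \Cref{algo decodage qec} with $\beta = \beta_0$, since \cite{leverrier2015quantum} only guarantees the weaker $\Delta(\sigma_i, F_i) > 0$. This is precisely where the hypothesis must be strengthened from $\delta < 1/6$ to $\delta < 1/8$, via a sharpened critical-generator lemma. For any nonzero intermediate syndrome $\sigma_i$ produced inside a cluster of weight below $w_0$, left-expansion applied to the residual error, combined with the restriction to $\cF$ (which enforces $|\sigma_X(F)| \geq \tfrac{d_A}{2} |F|$), would be used to exhibit a candidate $F \in \cF$ whose syndrome shares only a $4\delta$-fraction of bits with the complement of $\sigma_i$. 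A careful count of syndrome bits created versus destroyed by flipping $F$ then yields $\Delta(\sigma_i, F) \geq (1 - 8\delta) |\sigma_X(F)| = \beta_0 |\sigma_X(F)|$. Once this refinement is in place, the algorithm proceeds at every step within each cluster until the cluster syndrome vanishes, the per-cluster outputs concatenate, and the global $\hat{E}$ is equivalent to $E$, concluding the proof.
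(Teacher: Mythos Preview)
Your overall strategy---percolation, locality, and a sharpened critical-generator bound---is the right one, and your third paragraph correctly identifies the needed refinement $\Delta \geq \beta_0\,|\sigma_X(F)|$ (this is exactly \Cref{existance of a critical generator and error}\,\ref{existance of a critical generator and error i}+\ref{existance of a critical generator and error ii} in the paper). However, your locality argument has a genuine gap. You claim the execution decomposes along the connected components of $E$, arguing that two such components ``cannot be bridged by intermediate flips, since the merged component would then itself be witnessed by a too-large connected subset of $E$''. This does not follow: \Cref{algo decodage qec} may pick flips $F_i$ lying entirely outside the current error (the loop condition only requires $\Delta(\sigma_i,F_i)\geq\beta_0\,|\sigma_X(F_i)|$, not $F_i\subseteq E_i$), and a chain of such flips can link two components $E_1,E_2$ of $E$ inside the execution support $U=E\cup F_0\cup\cdots\cup F_{f-1}$. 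The resulting connected component of $U$ contains $E_1\cup E_2$, which is \emph{not} a connected subset of $E$, so a percolation bound on connected subsets of $E$ gives no control over its size, and the reduction to \Cref{thm:LTZ} on a single component breaks down.

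The missing ingredient is the $\alpha$-subset notion of \Cref{def:alpha}. Because each admissible flip decreases the syndrome weight by at least $\tfrac{\beta_0 d_A}{2}|F_i|$, one gets $|U|\leq\tfrac{1}{2\alpha_0}|E|$; combining this with \Cref{lem:local} on each connected component $K$ of $U$ yields $|E\cap K|\geq 2\alpha_0\,|K|$ (this is \Cref{linear number of flips synd error} with $D=\varnothing$). Consequently the percolation estimate must be run not on connected subsets of $E$ but on connected $\alpha_0$-subsets of $E$ (\Cref{lem:perco}); that is what bounds $|K|$, hence the restricted error $|E\cap K|$, below the adversarial threshold. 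With this correction in place, your sharpened critical-generator lemma forces each restricted execution to terminate with residual error equivalent to zero, and the pieces assemble as you intended.
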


In this section, we are going to prove \Cref{thm:correction}, a generalized version of \Cref{thm:stoc} that we can apply in the case where the syndrome error $D \subseteq C_X$ is not empty.

\begin{theorem}\label{thm:correction}
  We use the notations of \Cref{subsec:notations} with $\delta < 1/16$ and $\beta < \beta_1$.
  \\There exist constants $p_0 > 0, p_1 > 0$ such that the following holds. Suppose the pair $(E, D)$ satisfies a local stochastic noise model with parameter $(p_{\mathrm{phys}}, p_{\mathrm{synd}})$ where $p_{\mathrm{phys}} < p_0$ and $p_{\mathrm{synd}} < p_1$. Then there exists an event $\mathsf{succ}$ that has probability $1 - \reducedProba$ and a random variable $E_{\textrm{ls}}$ that is equivalent to $E \oplus \hat{E}$ such that conditioned on $\mathsf{succ}$, $E_{\textrm{ls}}$ has a local stochastic distribution with parameter $K p_{\mathrm{synd}}^{1/c_0}$ where $K$ is a constant independent of $p_{\mathrm{synd}}$.
\end{theorem}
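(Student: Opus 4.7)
The plan is to follow the two-step strategy outlined in the introduction: first establish a deterministic ``low-weight'' statement bounding $|E\oplus\hat E|$ linearly in $|D|$ whenever the qubit error is below a threshold close to the minimum distance, and then lift this to stochastic linear-weight errors via percolation on a suitable adjacency graph, finally using a witness argument to upgrade the residual bound into a local-stochastic guarantee. I would therefore structure the proof as Proposition~\ref{prop:error_qubits_syn} (deterministic low-weight) $\rightarrow$ Lemma~\ref{lem:local} (locality of small-set-flip) $\rightarrow$ Lemma~\ref{lem:perco} (percolation on $\alpha$-subsets) $\rightarrow$ Definition~\ref{def:witness} and a probability bound on witnesses $\rightarrow$ assembly into Theorem~\ref{thm:correction}.

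First I would prove the deterministic statement: if $|E|\le w_0$ (i.e.\ below the scale guaranteed by Theorem~\ref{thm:LTZ}) and $D$ is arbitrary, then $|E\oplus\hat E|\le c_0|D|$. The key technical input here, and the main departure from the noiseless analyses of~\cite{leverrier2015quantum,fawzi2018efficient}, is to upgrade the ``one critical generator'' argument to a ``many critical generators'' statement. Concretely, imagine running the noiseless sequential algorithm on $(E,\varnothing)$ to obtain a decomposition $E=\bigoplus_i F_i$ with each $F_i\in\cF$ producing a near-maximal syndrome decrease $\Delta(\sigma_i,F_i)\ge \beta_0|\sigma_X(F_i)|$. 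Using $(\gamma,\delta)$-expansion of $G_X$ twice (with $\delta<1/16$), one shows that the union of syndromes $\bigsqcup_i\sigma_X(F_i)$ has small pairwise overlap, so that for a \emph{linear} fraction of the $F_i$'s the quantity $\Delta(\sigma_i,F_i)\ge \beta|\sigma_X(F_i)|$ still holds even after adding any syndrome perturbation $D$; the gap $\beta_1-\beta$ absorbs the losses caused by $D$ and yields the constant $c_0$. This bounds how many $F_i$'s can be ``killed'' by $D$ and thus the size of the leftover error.

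Next I would prove locality. Consider the syndrome adjacency graph on $C_X$ where two checks are adjacent iff they share a qubit. A cluster $K$ of $E\cup D$ (a connected component of the $\alpha$-subset defined in Definition~\ref{def:alpha}) is then surrounded by a ``boundary'' of uncorrupted checks that the algorithm never flips, because any small-set-flip move $F$ inside the algorithm only affects checks in a constant-radius neighborhood of the qubits it touches. This lets one argue, by induction on the execution, that the algorithm effectively runs independently on each cluster, so the deterministic Proposition~\ref{prop:error_qubits_syn} can be applied cluster-by-cluster, provided each cluster has qubit-weight below $w_0$. To guarantee the latter I would invoke the percolation lemma: for $p_\mathrm{phys}$ and $p_\mathrm{synd}$ below explicit thresholds $p_0,p_1$, except with probability $\reducedProba$ every connected $\alpha$-subset of $E\cup D$ in the adjacency graph has size $O(\sqrt n)$, well below $w_0$. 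This defines the success event $\mathsf{succ}$.

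Finally, to turn the cluster-wise bound $|E\oplus\hat E|_K \le c_0 |D\cap K|$ into a local stochastic statement, I would introduce witnesses. Given a candidate set $S\subseteq V$, a \emph{witness} $W$ for $S\subseteq E\oplus\hat E$ is a minimal set of qubit and check errors sufficient to force $S$ into the residual, namely a subset of $E\cup D$ that is $\alpha$-connected, contains $S$, and whose $D$-part has size at least $|S|/c_0$. By Proposition~\ref{prop:error_qubits_syn} and locality, every $S\subseteq E\oplus\hat E$ admits such a witness on $\mathsf{succ}$. The number of $\alpha$-connected sets of size $m$ containing a fixed vertex is at most $\chi^{m}$, so a union bound combined with the local stochastic bound $\mathbb{P}[F\subseteq E,G\subseteq D]\le p_\mathrm{phys}^{|F|}p_\mathrm{synd}^{|G|}$ gives $\mathbb{P}[S\subseteq E\oplus\hat E,\mathsf{succ}]\le (K p_\mathrm{synd}^{1/c_0})^{|S|}$ after summing a geometric series in $|W|$ with the $|D\cap W|\ge |S|/c_0$ constraint extracting the $1/c_0$ exponent. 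Setting $E_\mathrm{ls}$ to be $E\oplus\hat E$ on $\mathsf{succ}$ and an equivalent padded set off $\mathsf{succ}$ completes the proof. I expect the main obstacle to be the ``many critical generators'' step: one must simultaneously track overlaps of syndromes, preservation of the $\cF$-membership condition $|\sigma_X(F)|\ge (d_A/2)|F|$, and the $\beta$-vs-$\beta_1$ slack, which is where the stricter expansion condition $\delta<1/16$ becomes essential.
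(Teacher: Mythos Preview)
Your overall strategy---deterministic low-weight bound, locality, percolation, then a witness argument with a union bound---matches the paper's. The genuine gap is in your witness step. You define a witness for $S\subseteq E\oplus\hat E$ as a subset $W\subseteq E\cup D$ containing $S$; but $S$ is a set of qubits, so $S\subseteq W\subseteq E\cup D$ forces $S\subseteq E$, while $S$ may contain qubits in $\hat E\setminus E$ that the decoder erroneously flips (the paper explicitly notes this obstruction just before \Cref{def:witness}). The paper's witness is therefore \emph{not} a subset of $E\cup D$: it is a set $W\subseteq V$ lying in $\mathcal M(S)$ (every connected component of $W$ meets $S$) with $|W|\le c_0\,|D\cap\Gamma_X(W)|$; in \Cref{lem:technical lemma noisy synd component} one takes $W$ to be a union of connected components of $E\oplus\hat E$. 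The probability bound then uses only the $D$-part via $\mathbb P[T\subseteq D]\le p_{\mathrm{synd}}^{|T|}$, and the enumeration uses Gottesman's estimate $|\{W\in\mathcal M(S):|W|=t\}|\le (ed)^t/(ed^{|S|})$, whose $d^{-|S|}$ factor is precisely what makes the final bound scale like $(Kp_{\mathrm{synd}}^{1/c_0})^{|S|}$. Your count of ``$\alpha$-connected sets containing a fixed vertex'' does not produce this factor.

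A second gap: you set $E_{\mathrm{ls}}=E\oplus\hat E$ on $\mathsf{succ}$, but the deterministic input (\Cref{coro:error_qubits_syn}) requires the residual error to be \emph{reduced}, and $E\oplus\hat E$ need not be. The paper instead defines $E_{\mathrm{ls}}$ as a minimal-weight error with the same syndrome as $E\oplus\hat E$ and then proves, using locality plus the $\mathrm{MaxConn}_{\alpha_0}$ hypothesis, that $E_{\mathrm{ls}}$ is actually equivalent to $E\oplus\hat E$ (\Cref{lem:technical lemma noisy synd}); only after passing to this reduced representative can \Cref{coro:error_qubits_syn} be applied cluster-by-cluster to manufacture witnesses.
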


\subsection{Small adversarial errors}
\label{subsec: small}

The first step to prove \Cref{thm:correction} is to study the case where the qubit error $E$ can be adversarial but where $E \oplus\hat{E}$ is supposed to be reduced with $|E \oplus\hat{E}| \leq \gamma_0 \sqrt{n}$. Here ``reduced'' means that $E \oplus\hat{E}$ has the smallest Hamming weight among all errors equivalent to $E$. The result in that case is summarized in \Cref{coro:error_qubits_syn}: it is possible to use expansion-based arguments to find an upper bound on $|E \oplus\hat{E}|$ which grows linearly with $|D \cap \sigma_X(E \oplus\hat{E})|$. \Cref{coro:error_qubits_syn} is a consequence of \Cref{prop:error_qubits_syn} and \Cref{lem:robust} that we state now but only prove at the end of this section.

\begin{proposition}\label{prop:error_qubits_syn}
  We use the notations of \Cref{subsec:notations} with $\delta < 1/16$ and $\beta \in (0; \beta_1)$.
  \\If $|E| \leq \gamma_0 \sqrt{n}$ and $|\sigma_X(E)| > c_2 |D \cap \sigma_X(E)|$ then there exists at least one valid $F \in \cF$ for \Cref{algo decodage qec} with parameter $\beta$.
  \\More precisely, let $\sigma = \sigma_X(E) \oplus D$ then the set $G := \left\{F \in \cF: \Delta(\sigma, F) \geq \beta |\sigma_X(F)|\right\}$ satisfies:
  \begin{align*}
    \sum_{F \in G} |\sigma_X(F)|
    \geq c_1 \left[\left|\sigma_X(E)\right| - c_2 |D \cap \sigma_X(E)|\right]
    > 0.
  \end{align*}
\end{proposition}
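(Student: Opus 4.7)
My plan is to run the noiseless sequential small-set-flip algorithm (\Cref{algo decodage qec0}) on $\sigma_X(E)$ as a thought experiment to obtain a certifying decomposition of $E$, and then to show that enough of the resulting flips survive as elements of $G$. Since $|E| \leq \gamma_0 \sqrt{n}$ lies well below the correction radius $w_0$ of \Cref{thm:LTZ}, running \Cref{algo decodage qec0} on $\sigma_X(E)$ returns an error equivalent to $E$; since $|E|$ is below half the minimum distance, this output equals $E$ itself. Let $F_0^\ast, \ldots, F_{m-1}^\ast \in \cF$ be the sequence of flips, so that $E = F_0^\ast \oplus \cdots \oplus F_{m-1}^\ast$. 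Using the strengthened LTZ guarantee $\Delta(\sigma_i^\ast, F_i^\ast) \geq \beta_0 |\sigma_X(F_i^\ast)|$ at every step (valid under $\delta < 1/8$, hence under our hypothesis) and telescoping the $\Delta$'s yields $\sum_i |\sigma_X(F_i^\ast)| \leq |\sigma_X(E)|/\beta_0$.

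I would then transfer this bound to the \emph{initial} syndrome and then to the noisy one. Write $X_i := \sigma_X(F_i^\ast) \cap \sigma_X(E)$ and $Y_i := \sigma_X(F_i^\ast) \setminus \sigma_X(E)$. The identity $\bigoplus_i \sigma_X(F_i^\ast) = \sigma_X(E)$ forces every $c \in \sigma_X(E)$ to lie in $\sigma_X(F_i^\ast)$ for an odd (hence positive) number of indices, while every $c \notin \sigma_X(E)$ lies in an even number. Combining this parity with $\sum_i (|X_i| + |Y_i|) \leq |\sigma_X(E)|/\beta_0$ gives $\sum_i (|X_i| - |Y_i|) \geq (\beta_1/\beta_0) |\sigma_X(E)|$, which is where the hypothesis $\delta < 1/16$ (equivalently $\beta_1 > 0$, i.e.\ $\beta_0 > 1/2$) becomes essential. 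The direct set identity
\begin{align*}
\Delta(\sigma, F_i^\ast) = (|X_i| - |Y_i|) + 2\bigl(|D \cap Y_i| - |D \cap X_i|\bigr)
\end{align*}
then lets me sum over $i$, and the same parity/covering bookkeeping controls the syndrome-noise terms in terms of $|D \cap \sigma_X(E)|$, yielding a lower bound of the form $\sum_i \Delta(\sigma, F_i^\ast) \geq (\beta_1/\beta_0) |\sigma_X(E)| - 2 |D \cap \sigma_X(E)|$.

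To extract $c_1$ and $c_2$, I split the indices into those with $F_i^\ast \in G$ (for which I use the trivial $\Delta(\sigma, F_i^\ast) \leq |\sigma_X(F_i^\ast)|$) and those with $F_i^\ast \notin G$ (for which $\Delta(\sigma, F_i^\ast) < \beta |\sigma_X(F_i^\ast)|$ by definition). Combining these two bounds with $\sum_i |\sigma_X(F_i^\ast)| \leq |\sigma_X(E)|/\beta_0$ and rearranging gives
\begin{align*}
\sum_{F \in G} |\sigma_X(F)| \;\geq\; \sum_{i\,:\,F_i^\ast \in G} |\sigma_X(F_i^\ast)| \;\geq\; \frac{\beta_1 - \beta}{\beta_0(1-\beta)} |\sigma_X(E)| \,-\, \frac{2}{1-\beta} |D \cap \sigma_X(E)|,
\end{align*}
which is precisely $c_1 [|\sigma_X(E)| - c_2 |D \cap \sigma_X(E)|]$, and is strictly positive under the second hypothesis $|\sigma_X(E)| > c_2 |D \cap \sigma_X(E)|$.

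The main obstacle is the middle step: the LTZ iteration naturally guarantees a syndrome decrease relative to the \emph{intermediate} syndromes $\sigma_i^\ast$, whereas the quantity $\Delta(\sigma, F)$ defining $G$ is relative to the initial noisy syndrome $\sigma = \sigma_X(E) \oplus D$. The parity/covering argument bridges this gap and produces the factor $\beta_1/\beta_0 = (2\beta_0 - 1)/\beta_0$; it is exactly the positivity of this factor that demands the sharper hypothesis $\delta < 1/16$ (rather than merely $\delta < 1/8$), and the resulting slack is what absorbs the syndrome-noise correction into the clean form $-2|D \cap \sigma_X(E)|$. Matching the constants $c_1, c_2$ of \Cref{subsec:notations} exactly is then a careful but essentially algebraic step.
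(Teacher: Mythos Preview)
Your overall strategy --- decompose (an equivalent of) $E$ as $\bigoplus_i F_i$, establish $\sum_i |\sigma_X(F_i)| \leq |\sigma_X(E)|/\beta_0$, then split the sum according to whether $F_i \in G$ --- matches the paper's, and your final algebra extracting $c_1, c_2$ is correct.

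The genuine gap is in how you obtain the decomposition and its key bound. First, the premise that $\gamma_0\sqrt{n}$ lies ``well below $w_0$'' is false: one computes $\gamma_0\sqrt{n} = (d_A/d_B)\gamma n_B$, which is much larger than $w_0 = \gamma n_B / (3(1+d_B))$, so \Cref{thm:LTZ} does not even guarantee that \Cref{algo decodage qec0} terminates with zero syndrome. Second, and more fundamentally, the ``strengthened LTZ guarantee'' $\Delta(\sigma_i^\ast, F_i^\ast) \geq \beta_0 |\sigma_X(F_i^\ast)|$ is \emph{not} a property of the argmax rule of \Cref{algo decodage qec0} for an arbitrary $E$ with $|E| \leq \gamma_0\sqrt{n}$. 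It is a property of the specific critical-generator flip of \Cref{existance of a critical generator and error}, and that lemma requires the running error to be $\|\cdot\|$-reduced. The argmax flips of \Cref{algo decodage qec0} need not lie inside the current error, so the running error need not stay reduced, and the argmax of $\Delta/|F|$ need not achieve $\Delta \geq \beta_0|\sigma_X(F)|$ even when some other flip does.

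The paper closes this gap by first replacing $E$ with its $\|\cdot\|$-reduced equivalent $E_0$ (same syndrome), then iterating \Cref{existance of a critical generator and error} directly: at each step one takes the critical-generator flip $F_i \subseteq E_i$, so $E_{i+1} = E_i \setminus F_i \subseteq E_0$ stays $\|\cdot\|$-reduced by \Cref{lem:subset reduced}, the $F_i$ are pairwise disjoint with $E_0 = \biguplus_i F_i$, and each step genuinely satisfies $\Delta(\sigma_X(E_i), F_i) \geq \beta_0 |\sigma_X(F_i)|$. Telescoping then gives the bound you need. Once you substitute this iteration for your Algorithm-1 thought experiment, the rest of your argument works --- with one caveat: bounding $\sum_i(|X_i|-|Y_i|)$ and $2\sum_i(|D\cap Y_i|-|D\cap X_i|)$ \emph{separately} fails (an element of $D\cap\sigma_X(E)$ covered by three of the $\sigma_X(F_i)$ breaks the second bound). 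The paper instead bounds $S=\sum_i|\sigma\cap\sigma_X(F_i)|\geq|\sigma\cap\sigma_X(E)|=|\sigma_X(E)|-|D\cap\sigma_X(E)|$ in one stroke, which via $\Delta=2|\sigma\cap\sigma_X(F)|-|\sigma_X(F)|$ is exactly your target inequality.
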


\begin{lemma}[Robustness]\label{lem:robust}
  We use the notations of \Cref{subsec:notations} with $\delta < 1/8$.
  \\If $E_\textrm{R} \subseteq V$ is a reduced error with $|E_\textrm{R}| \leq \gamma_0 \sqrt{n}$ then:
  \begin{align*}
    |\sigma_X(E_\textrm{R})|
    \geq \frac{\beta_0 d_A}{2} |E_\textrm{R}|.
  \end{align*}
\end{lemma}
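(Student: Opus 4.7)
The plan is to decompose $E_\textrm{R} = E_A \sqcup E_B$ with $E_A \subseteq A^2$ and $E_B \subseteq B^2$, lower bound $|\sigma_X(E_A)|$ and $|\sigma_X(E_B)|$ row-by-row and column-by-column via the classical expansion of $G$, and finally control the cancellation between the two contributions using the reducedness hypothesis.

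First, I would translate the size hypothesis into usable row/column bounds. Using $\sqrt{n} = n_A \sqrt{1 + r^2}$ together with $n_B = r\,n_A$ (where $r = d_A/d_B \le 1$), one gets $\gamma_0 \sqrt{n} = \gamma r^2 n_A \le \min(\gamma n_A,\,\gamma n_B)$. In particular, every row $R^A_\alpha := \{a \in A : (\alpha, a) \in E_A\}$ satisfies $|R^A_\alpha| \le \gamma n_A$, and every column $C^B_\beta := \{b \in B : (b, \beta) \in E_B\}$ satisfies $|C^B_\beta| \le \gamma n_B$. The block structure of $H_X$ then shows that the check $(\alpha, \beta) \in A \times B$ sees $E_A$ only through the classical parity of $R^A_\alpha$ at check $\beta$, and sees $E_B$ only through the classical parity of $C^B_\beta$ at check $\alpha$. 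Hence
\begin{align*}
\sigma_X(E_A) = \bigsqcup_{\alpha \in A} \{\alpha\} \times \sigma(R^A_\alpha), \qquad \sigma_X(E_B) = \bigsqcup_{\beta \in B} \sigma(C^B_\beta) \times \{\beta\},
\end{align*}
and the classical unique-neighbour bound derived from $(\gamma, \delta)$-expansion of $G$, applied row- and column-wise, summing the contributions, yields
\begin{align*}
|\sigma_X(E_A)| \;\ge\; (1 - 2\delta)\, d_A\, |E_A|, \qquad |\sigma_X(E_B)| \;\ge\; (1 - 2\delta)\, d_B\, |E_B|.
\end{align*}

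The crux is to bound the cancellation $\xi := |\sigma_X(E_A) \cap \sigma_X(E_B)|$ in the identity $|\sigma_X(E_\textrm{R})| = |\sigma_X(E_A)| + |\sigma_X(E_B)| - 2\xi$. A check $(\alpha, \beta)$ in the intersection must receive an odd number of incidences from both $E_A$ and $E_B$; in the minimal situation where both counts equal $1$, there is a unique error $(\alpha, a^\star) \in E_A$ and a unique error $(b^\star, \beta) \in E_B$ adjacent to that check, and direct verification shows that both qubits lie in the support of the single $X$-type generator $X_{b^\star, a^\star}$. The reducedness of $E_\textrm{R}$ forces $|E_\textrm{R} \cap \mathrm{supp}(X_{b^\star, a^\star})| \le (d_A + d_B)/2$ for every such generator; a double-counting argument over all $X$-generators, combined with a second use of the expansion of $G$ to control checks receiving multiple incidences from the same side, then yields an upper bound on $\xi$ of the form $(\tfrac14 + 2\delta)(d_A |E_A| + d_B |E_B|)$. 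Plugging this into the identity and using $d_A |E_A| + d_B |E_B| \ge d_A |E_\textrm{R}|$ produces the target bound $|\sigma_X(E_\textrm{R})| \ge \tfrac{(1 - 8\delta) d_A}{2} |E_\textrm{R}|$.

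The principal obstacle is this cancellation estimate. A naive bound $\xi \le \min(|\sigma_X(E_A)|, |\sigma_X(E_B)|)$ is useless whenever $d_A |E_A| \approx d_B |E_B|$, since the two contributions could in principle cancel almost completely; reducedness is the only thing that prevents the syndrome of $E_A$ from being systematically mimicked by $E_B$. The factor $\beta_0 = 1 - 8\delta$ (rather than the single-sided $1 - 2\delta$) reflects four independent losses of $2\delta$ each --- two from the unique-neighbour bounds for $E_A$ and $E_B$, and two more from the reducedness-plus-expansion analysis of the cancellation --- which is exactly why the hypothesis $\delta < 1/8$ is required to keep $\beta_0$ positive.
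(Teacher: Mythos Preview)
Your approach is genuinely different from the paper's, but the crucial step is only asserted, not proved, and the numerics do not close.

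\textbf{What the paper actually does.} The paper does not bound the cancellation $\xi = |\sigma_X(E_A)\cap\sigma_X(E_B)|$ directly. Instead it passes to the $\|\cdot\|$-reduced representative $E_0$ of $E_{\mathrm R}$ and applies the critical-generator lemma (\Cref{existance of a critical generator and error}) iteratively to peel off subsets $F_i\subseteq E_i$ with $F_i\in\cF$, obtaining $E_0=\biguplus_i F_i$. Summing the decrements $\Delta(\sigma_X(E_i),F_i)\ge |\sigma_X(F_i)|-4\delta d_Ad_B\|F_i\|$ together with $|\sigma_X(F_i)|\ge\tfrac12 d_Ad_B\|F_i\|$ gives $|\sigma_X(E_0)|\ge\tfrac{\beta_0 d_Ad_B}{2}\|E_0\|$, and then $d_B\|E_0\|\ge|E_0|\ge|E_{\mathrm R}|$ finishes. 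The $A^2/B^2$ interaction is handled implicitly inside the critical-generator structure rather than by a global cancellation estimate.

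\textbf{The gap in your argument.} Your preliminary bounds $|\sigma_X(E_A)|\ge(1-2\delta)d_A|E_A|$ and $|\sigma_X(E_B)|\ge(1-2\delta)d_B|E_B|$ are correct (row/column expansion plus unique neighbours). The whole content of the lemma, however, is the bound on $\xi$, and here you only offer ``a double-counting argument over all $X$-generators, combined with a second use of expansion''. Concretely: the per-generator constraint $m_g:=|E_{\mathrm R}\cap\Gamma_Z(g)|\le(d_A+d_B)/2$ does not translate in any obvious way into a bound on $\xi$. If one tries the natural double count---associating to a doubly-unique-neighbour check $(\alpha,\beta)$ the generator $(b^\star,a^\star)$ you describe---one is led to $\sum_g m_A^g m_B^g$, and even with $m_A^g+m_B^g\le(d_A+d_B)/2$ this sum is bounded only by $\tfrac{d_A+d_B}{8}(d_A|E_A|+d_B|E_B|)$, far too weak. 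No sharper identity is stated, and the ``second use of expansion'' for multiply-covered checks is left entirely unspecified. Moreover, even granting your claimed bound $\xi\le(\tfrac14+2\delta)(d_A|E_A|+d_B|E_B|)$, one gets
\[
|\sigma_X(E_{\mathrm R})|\;\ge\;\bigl[(1-2\delta)-2(\tfrac14+2\delta)\bigr](d_A|E_A|+d_B|E_B|)\;=\;\tfrac{1-12\delta}{2}\,d_A|E_{\mathrm R}|,
\]
which is $(1-12\delta)/2$, not $(1-8\delta)/2=\beta_0/2$; to hit $\beta_0$ you would need $\xi\le(\tfrac14+\delta)(\cdots)$, so your ``four independent losses of $2\delta$'' accounting is also inconsistent with your own $\xi$-bound. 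In short, the cancellation estimate is the entire lemma, and your sketch neither proves it nor gets the constant right; the paper's critical-generator decomposition is precisely the tool that controls this interaction.
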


Together, \Cref{prop:error_qubits_syn} and \Cref{lem:robust} imply the following:
\begin{corollaire}\label{coro:error_qubits_syn}
  We use the notations of \Cref{subsec:notations} with $\delta < 1/16$ and $\beta \in (0; \beta_1)$.
  \\Suppose that $E \oplus \hat{E}$ is reduced with $|E \oplus \hat{E}| \leq \gamma_0 \sqrt{n}$ then
  \begin{align*}
    |E \oplus \hat{E}|
    \leq c_0 |D \cap \sigma_X(E \oplus \hat{E})|.
  \end{align*}
\end{corollaire}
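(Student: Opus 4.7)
The corollary is essentially a one-line consequence of combining Proposition~\ref{prop:error_qubits_syn} (applied contrapositively) with Lemma~\ref{lem:robust} (robustness), once we identify the right error to feed into them. The plan is to set $E' := E \oplus \hat{E}$ and to analyse the state of the decoder at termination.

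First I would observe that when Algorithm~\ref{algo decodage qec} halts after $f$ steps, the current syndrome is $\sigma_f = \sigma_X(E \oplus \hat{E}_f) \oplus D = \sigma_X(E') \oplus D$, and the halting condition means that no $F \in \cF$ satisfies $\Delta(\sigma_f, F) \geq \beta \, |\sigma_X(F)|$. Equivalently, the set $G$ that appears in the conclusion of Proposition~\ref{prop:error_qubits_syn}, when applied to the error $E'$, is empty. Since by hypothesis $|E'| \leq \gamma_0 \sqrt{n}$, the only way this can be consistent with Proposition~\ref{prop:error_qubits_syn} is that its second premise fails for $E'$, that is,
\[
  |\sigma_X(E')| \;\leq\; c_2 \, |D \cap \sigma_X(E')|.
\]

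Next, I would use Lemma~\ref{lem:robust} on $E'$, which is allowed because $E'$ is assumed reduced and of size at most $\gamma_0 \sqrt{n}$. This yields
\[
  |\sigma_X(E')| \;\geq\; \frac{\beta_0 d_A}{2}\, |E'|.
\]
Chaining the two inequalities gives $|E'| \leq \frac{2 c_2}{\beta_0 d_A}\, |D \cap \sigma_X(E')|$. A direct substitution with the definitions in Section~\ref{subsec:notations}, namely $c_2 = \frac{2 \beta_0}{\beta_1 - \beta}$ and $c_0 = \frac{4}{d_A(\beta_1 - \beta)}$, shows $\frac{2 c_2}{\beta_0 d_A} = c_0$, which is exactly the claimed bound.

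There is no real obstacle here: the whole content of the corollary lies in Proposition~\ref{prop:error_qubits_syn} and Lemma~\ref{lem:robust}, and the only subtlety is identifying $E'$ (the residual error at termination) as the natural object to which both results apply. The two hypotheses on $E'$ needed (reduced and $|E'| \leq \gamma_0 \sqrt{n}$) are precisely the assumptions of the corollary, and the contrapositive reading of Proposition~\ref{prop:error_qubits_syn} is legitimate because the algorithm, by construction, only terminates once no valid flip exists.
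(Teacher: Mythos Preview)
Your proposal is correct and follows essentially the same approach as the paper's proof: both apply the contrapositive of Proposition~\ref{prop:error_qubits_syn} to the residual error $E'=E\oplus\hat{E}$ (using that the algorithm halts with no valid flip, so the set $G$ is empty), then invoke Lemma~\ref{lem:robust} and combine the two bounds via the identity $\tfrac{2c_2}{\beta_0 d_A}=c_0$. The paper's version is slightly terser but there is no substantive difference.
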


\begin{proof}
  Using the notations $\sigma_i$ from the body of \Cref{algo decodage qec}, the value of the syndrome $\sigma_f$ at the end of the algorithm is $\sigma_f = \sigma_X(E \oplus \hat{E}) \oplus D$. Since the while loop condition is not satisfied for $\sigma_f$, the contraposition of \Cref{prop:error_qubits_syn} ensures that $|\sigma_X(E \oplus \hat{E})| \leq c_2 |D \cap \sigma_X(E \oplus \hat{E})|$. By \Cref{lem:robust}, $|\sigma_X(E \oplus \hat{E})| \geq \frac{\beta_0 d_A}{2} |E \oplus \hat{E}|$ which concludes the proof.
\end{proof}

The rest of \Cref{subsec: small} is devoted to prove \Cref{prop:error_qubits_syn} and \Cref{lem:robust}.
\\We will study the sets $F \in G$ ($G$ is defined in \Cref{prop:error_qubits_syn}) which would have been flipped during the small-set-flip algorithm with input $E$ and $D = \varnothing$, \textit{i.e.}, without syndrome error.
For a  given set $E \subseteq V = A^2 \uplus B^2$ (where $ \uplus$ stands for disjoint union), we introduce a notation for a normalized Hamming weight:
\begin{align*}
  \|E\| := \frac{|E \cap A^2|}{d_B} + \frac{|E \cap B^2|}{d_A}.
\end{align*}
$\|\cdot\|$ shares a couple of properties with the usual the cardinality $|\cdot|$. In particular it is straightforward to check that for $E, E_1, E_2 \subseteq V$:
\begin{align*}
  & \|E\| = 0 \Leftrightarrow E = \varnothing,
  && d_A \|E\| \leq |E| \leq d_B \|E\|,\\
  & |\sigma_X(E)| \leq d_A d_B \|E\|,
  && \|E_1 \cup E_2\| \leq \|E_1\| + \|E_2\|,\\
  & \|E_1 \uplus E_2\| = \|E_1\| + \|E_2\|,
  && \|E_1 \oplus E_2\| = \|E_1\| + \|E_2\| - 2 \|E_1 \cap E_2\|.
\end{align*}
We will say that a qubit error $E \subseteq V$ is $\|\ \|$-reduced when $\|E\|$ is minimal over $E + {\cC_Z}^{\bot}$.
All along this section we will use the handy property of \Cref{lem:subset reduced}:
\begin{lemma}\label{lem:subset reduced}
  Let $E_1 \subseteq E_2 \subseteq V$ be two errors. If $E_2$ is reduced (resp. $\|\ \|$-reduced) then $E_1$ is reduced (resp. $\|\ \|$-reduced). 
\end{lemma}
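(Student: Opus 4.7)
The plan is to argue by contrapositive, using the identity $|E \oplus S| = |E| + |S| - 2|E \cap S|$ (and its $\|\cdot\|$-analog, both listed just before the lemma) together with the obvious monotonicity $E_1 \subseteq E_2 \Rightarrow E_1 \cap S \subseteq E_2 \cap S$.

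Concretely, suppose $E_1$ is not reduced. Then by definition of ``reduced'' there exists some $S \in \cC_Z^\perp$ (a sum of $X$-type generators, i.e.\ $S = \bigoplus_{x \in X} x$ for some $X \subseteq \cX$) with $|E_1 \oplus S| < |E_1|$. Expanding via the identity above, this inequality is equivalent to the purely combinatorial statement $|S| < 2|E_1 \cap S|$. Because $E_1 \subseteq E_2$, we have $E_1 \cap S \subseteq E_2 \cap S$ and therefore $|S| < 2|E_1 \cap S| \leq 2|E_2 \cap S|$. Rewriting using the same identity gives $|E_2 \oplus S| < |E_2|$, so $E_2$ was not reduced either, a contradiction. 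This proves the ordinary-Hamming-weight version.

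For the $\|\cdot\|$-version, the argument is literally the same: the listed identity $\|E \oplus S\| = \|E\| + \|S\| - 2\|E \cap S\|$ shows that $\|E_1 \oplus S\| < \|E_1\|$ is equivalent to $\|S\| < 2\|E_1 \cap S\|$, and $\|\cdot\|$ is also monotone under set inclusion (it is just a positive-coefficient linear combination of the cardinalities of $E \cap A^2$ and $E \cap B^2$, each of which is monotone). Hence $E_1 \subseteq E_2$ again yields $\|E_1 \cap S\| \leq \|E_2 \cap S\|$ and the same contradiction is derived.

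I do not expect any real obstacle here — the whole content of the lemma is the elementary observation that the expression $|E \oplus S| - |E| = |S| - 2|E \cap S|$ is nonincreasing in $E$ (for fixed $S$), so if some $S$ already witnesses that $E_1$ can be reduced, the same $S$ witnesses it for any superset $E_2 \supseteq E_1$. The only thing worth being careful about is writing the contrapositive in the right direction (we assume $E_2$ reduced and deduce $E_1$ reduced, so we start from $E_1$ not reduced and contradict reducedness of $E_2$).
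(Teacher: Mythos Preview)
Your proof is correct and essentially identical to the paper's: both arguments hinge on the identity $|E \oplus S| - |E| = |S| - 2|E \cap S|$ together with the monotonicity $|E_1 \cap S| \leq |E_2 \cap S|$. The only cosmetic difference is that the paper writes the chain of inequalities directly ($0 \leq |E_2 \oplus S| - |E_2| \leq |E_1 \oplus S| - |E_1|$) rather than phrasing it as a contrapositive.
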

\begin{proof}
  Let's prove that for any $E \in \cC_Z^{\perp}, |E_1| \leq |E_1 \oplus E|$. Using the fact that $| \cdot |$ is additive for disjoint unions and non-negative, we have
  \begin{align*}
    0
    \leq |E_2 \oplus E| - |E_2| 
    = | E - E_2 | + |E_2 - E| - | E_2 \cap E | - |E_2 - E| 
    &= |E| - 2 |E_2 \cap E| \\
    \leq |E| - 2 |E_1 \cap E|
    = |E_1 \oplus E| - |E_1|.
  \end{align*}
  As $\| \cdot\|$ is also additive for disjoint unions and non-negative, the same proof also works when replacing $\|\cdot \|$ with $|\cdot |$.
\end{proof}

First of all, we need to study the case where the syndrome is noiseless ($D = \varnothing$). In that case and when the initial graph $G$ is sufficiently expanding, there exists at least one $X$-type stabilizer generator called a ``critical generator'' (this notion was introduced in \cite{leverrier2015quantum}) whose support contains some $F \in \cF$ that decreases the syndrome weight when flipped.

\begin{lemma}[Lemma 8 of \cite{leverrier2015quantum} revisited]\label{existance of a critical generator and error}
  Let $E \subseteq V$ be a $\|\ \|$-reduced error such that $0 < \|E\| \leq \gamma_0 \sqrt{n}/d_A$, then there exists $F \in \cF$ with $F \subseteq E$ and:
  \begin{enumerate}[label=(\roman*)]
  \item  \label{existance of a critical generator and error i}
    $|\sigma_X(F)| \geq \frac{1}{2} d_A d_B \|F\|$,
  \item  \label{existance of a critical generator and error ii}
    $\Delta(\sigma_X(E), F) \geq |\sigma_X(F)| - 4 \delta d_A d_B \|F\|$.
  \end{enumerate}
\end{lemma}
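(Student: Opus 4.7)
The plan is to adapt the critical-generator argument of Leverrier--Tillich--Z\'emor (\Cref{thm:LTZ}) to the refined weighted norm $\|\cdot\|$. The central idea is to identify an $X$-type generator $g \in C_Z$ whose support $\Gamma_Z(g)$ meets the error $E$ in a subset $F$ that is both ``mostly unique-neighbored'' (for (i)) and ``mostly disjoint, in syndrome, from $E \setminus F$'' (for (ii)). A direct computation gives
\begin{align*}
\Delta(\sigma_X(E), F) \;=\; |\sigma_X(F)| - 2\,|\sigma_X(F) \cap \sigma_X(E \setminus F)|,
\end{align*}
so (ii) is equivalent to $|\sigma_X(F) \cap \sigma_X(E\setminus F)| \le 2\delta\, d_A d_B \|F\|$, i.e.\ a bound on the ``external collisions'' of $F$.

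Step 1 (global collision budget via expansion). Since $|E| \le d_B \|E\| \le d_B \gamma_0 \sqrt{n}/d_A \le \gamma |V|$ by the definition of $\gamma_0$, the left-expansion of $G_X$ inherited from $G$ (for both $A^2$ and $B^2$ blocks, using left-degrees $d_A$ and $d_B$ respectively) applies to $E$. In particular, $|\Gamma_X(E)| \ge (1-\delta)\,d_A d_B \|E\|$, so the total number of ``collision edges'' of $E$ (edges to syndrome bits reached by more than one qubit of $E$) is at most $\delta\, d_A d_B \|E\|$.

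Step 2 (locating a good generator). Average the weighted edge count and the collision count over the generators $g \in C_Z$. The product structure $V = A^2 \cup B^2$ gives that each qubit belongs to $d_A + d_B$ generators in $C_Z$, which lets one rewrite $\sum_{g \in C_Z} \|\Gamma_Z(g) \cap E\|$ in closed form proportional to $\|E\|$, and similarly express the sum of external collisions over generators. An averaging / pigeonhole argument then produces a generator $g_0$ for which, setting $F := \Gamma_Z(g_0) \cap E$, the collision share contributed by $F$ is at most a $\delta$-fraction of $d_A d_B \|F\|$. Doubling to account for both endpoints of each colliding edge yields (ii).

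Step 3 ((i) from $\|\cdot\|$-reduced\-ness). To establish (i), I would argue by contradiction: suppose $|\sigma_X(F)| < \tfrac12 d_A d_B \|F\|$. Since $F \subseteq \Gamma_Z(g_0)$ and the full support $\Gamma_Z(g_0)$ is an element of $\cC_Z^\perp$, the set $F' := \Gamma_Z(g_0) \setminus F$ differs from $F$ by a stabilizer, so replacing $F$ by $F'$ in $E$ yields an equivalent error $E' := E \oplus \Gamma_Z(g_0)$. Using $\|\Gamma_Z(g_0)\| = 2$ and $\|F'\| = 2 - \|F\|$, I would then combine the hypothesized low syndrome weight of $F$ with the lower bound on $|\sigma_X(F')|$ coming from expansion applied locally to $F'$, to show $\|E'\| < \|E\|$; this contradicts $\|\cdot\|$-reduced\-ness of $E$ (via \Cref{lem:subset reduced} to ensure $F$ itself is reduced inside $E$), and (i) follows. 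The condition $F \in \cF$ of \eqref{eq:notations2} is then automatic since $\tfrac12 d_A d_B \|F\| \ge \tfrac{d_A}{2}|F|$.

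The main obstacle will be carrying out Steps 2--3 in a way that produces a \emph{single} $g_0$ for which $F = \Gamma_Z(g_0) \cap E$ satisfies (i) and (ii) simultaneously, while keeping the constants sharp enough to obtain exactly $\tfrac12$ in (i) and $4\delta$ in (ii). The asymmetric roles of the two blocks $A^2$ and $B^2$ (with degrees $d_A$ vs.\ $d_B$) and the corresponding different coefficients in $\|\cdot\|$ require careful bookkeeping so that the bounds match on both sides; the weighted norm is exactly what balances this accounting.
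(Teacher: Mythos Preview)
Your proposal diverges substantially from the paper's proof, and as written it has real gaps.

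\textbf{How the paper proceeds.} The paper does not do an averaging argument. It invokes Lemma~7 of \cite{leverrier2015quantum} as a black box: this lemma produces a specific \emph{critical generator} $g$ whose support decomposes as $\Gamma_Z(g)=x_a\uplus\overline{x}_a\uplus\chi_a\uplus x_b\uplus\overline{x}_b\uplus\chi_b$ with $x_a\subseteq A^2\cap E$, $\overline{x}_a\subseteq A^2\setminus E$, $\|\chi_a\|,\|\chi_b\|\le 2\delta$, and the crucial structural property that check-nodes adjacent to $x_a\cup\overline{x}_a\cup x_b\cup\overline{x}_b$ see no other qubit of $E$. The paper then sets $F:=x_a\cup x_b$ (which is \emph{not} $\Gamma_Z(g)\cap E$ in general, since the $\chi$ parts may meet $E$), and uses the product structure to get the exact formula
\[
|\sigma_X(F)|=d_Ad_B\bigl(\|x_a\|(1-\|x_b\|)+\|x_b\|(1-\|x_a\|)\bigr).
\]
Item~(i) follows because $\|\cdot\|$-reducedness forces $\|x_a\|+\|x_b\|\le 1$ (otherwise $E\oplus\Gamma_Z(g)$ has smaller norm), and then elementary function analysis of $1-\tfrac{2uv}{u+v}$ on $\{u+v\le 1\}$ gives $\ge\tfrac12$. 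Item~(ii) is read off directly from the critical-generator structural property, the $\chi$ parts being the only possible source of collisions.

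\textbf{Gaps in your plan.} First, your Step~2 is where the actual work lies, and it is not clear a plain averaging argument delivers the sharp constant $4\delta$ in~(ii): the structural property above (a \emph{unique-neighbour} type statement, not merely a collision count) is what makes the paper's computation go through exactly, and Lemma~7 of \cite{leverrier2015quantum} is itself a nontrivial expansion argument, not a pigeonhole. Second, your choice $F=\Gamma_Z(g_0)\cap E$ differs from the paper's $F=x_a\cup x_b$; by including the $\chi$-parts you lose control over both (i) and (ii). Third, your Step~3 is confused as stated: the inequality $\|E'\|<\|E\|$ is equivalent to $\|F\|>1$ and has nothing to do with syndrome weights or expansion applied to $F'$; moreover $\sigma_X(F')=\sigma_X(F)$ since they differ by a stabilizer, so invoking a separate lower bound on $|\sigma_X(F')|$ cannot help. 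What actually works (and what the paper implicitly uses) is the exact formula $|\sigma_X(F)|=d_Ad_B(u+v-2uv)$: if this is $<\tfrac12 d_Ad_B(u+v)$ then $4uv>u+v$, forcing $u+v>1$. But this formula comes from the product structure of the code, not from expansion.

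In short: the exact combinatorial identity for $|\sigma_X(F)|$ when $F\subseteq\Gamma_Z(g)$, together with the critical-generator lemma of \cite{leverrier2015quantum}, are the two ingredients you are missing; without them the constants $\tfrac12$ and $4\delta$ do not fall out of an averaging scheme.
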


\begin{proof}
  We have:
  \begin{align*}
    |E| \leq d_B \|E\|
    \leq \gamma_0 \sqrt{n} / r
    = \gamma \frac{r}{\sqrt{1 + r^2}} \sqrt{n}
    = \gamma \frac{r}{\sqrt{1 + r^2}} \sqrt{n_A^2 + n_B^2}
    = \gamma n_B
    = \min(\gamma n_A, \gamma n_B).
  \end{align*}
  Since $|E| \leq \min(\gamma n_A, \gamma n_B)$, Lemma 7 of \cite{leverrier2015quantum} states that there exists $g \in C_Z$ called a ``critical generator'' whose neighborhood satisfies $\Gamma_Z(g) = x_a \uplus \overline{x}_a \uplus \chi_a \uplus x_b \uplus \overline{x}_b \uplus \chi_b$ with:
  \begin{align*}
    & x_a \subseteq A^2 \cap E,
    && x_b \subseteq B^2 \cap E,
    && x_a \cup x_b \neq \varnothing,\\
    & \overline{x}_a \subseteq A^2 \setminus E,
    && \overline{x}_b \subseteq B^2 \setminus E,\\
    & \chi_a \subseteq A^2,
    && \chi_b \subseteq B^2,
    && \|\chi_a\| \leq 2 \delta,
    && \|\chi_b\| \leq 2 \delta,
  \end{align*}
  and such that the $Z$-type generators of $C_X$ involving qubits from $x_a \uplus \overline{x}_a \uplus x_b \uplus \overline{x}_b$ do not involve other qubits from $E$. Formally, for $v_a \in x_a, v_b \in x_b, \overline{v}_a \in \overline{x}_a$ and $\overline{v}_b \in \overline{x}_b$:
  \begin{align*}
    & E \cap \Gamma_X[\Gamma_X(v_a) \cap \Gamma_X(v_b)] = \{v_a, v_b\},
    && E \cap \Gamma_X[\Gamma_X(\overline{v}_a) \cap \Gamma_X(\overline{v}_b)] = \varnothing, &&\\
    & E \cap \Gamma_X[\Gamma_X(v_a) \cap \Gamma_X(\overline{v}_b)] = \{v_a\},
    && E \cap \Gamma_X[\Gamma_X(\overline{v}_a) \cap \Gamma_X(v_b)] = \{v_b\}.
  \end{align*}
  Take $F = x_a \uplus x_b$ with the purpose to get $F \subseteq E$. We have:
  \begin{align*}
    |\sigma_X(F)|
    & = |x_a|(|\overline{x}_b| + |\chi_b|) + |x_b|(|\overline{x}_a| + |\chi_a|)\\ &
    = d_A d_B \mathlarger{(}\|x_a\|(\|\overline{x}_b\| + \|\chi_b\|) + \|x_b\|(\|\overline{x}_a\| + \|\chi_a\|)\mathlarger{)}\\ &
    = d_A d_B \mathlarger{(}\|x_a\|(1 - \|x_b\|) + \|x_b\|(1 - \|x_a\|)\mathlarger{)}.
  \end{align*}
  But $\|F\| = \|x_a\| + \|x_b\|$, thus:
  \begin{align}\label{eq:func analysis}
    \frac{|\sigma_X(F)|}{d_A d_B \|F\|} = 1 - \frac{2 \|x_a\|\|x_b\|}{\|x_a\| + \|x_b\|}.
  \end{align}
  Note that we have $0 \leq \|x_a\|,\|x_b\| \leq 1$ by definition, we have $0 < \|x_a\| + \|x_b\|$ because $x_a \cup x_b \neq \varnothing$ and we have $\|x_a\| + \|x_b\| \leq 1$ because $E$ is $\|\ \|$-reduced. With these constraints on $\|x_a\|$ and $\|x_b\|$, a function analysis of the right hand side of \cref{eq:func analysis} gives \cref{existance of a critical generator and error i} and $F \in \cF$.
  \\For \cref{existance of a critical generator and error ii}, we lower bound $\Delta(\sigma_X(E), F)$ using the inequalities $\|\chi_a\| \leq 2 \delta$ and $\|\chi_b\| \leq 2 \delta$:
  \begin{align*}
    \Delta(\sigma_X(E), F)
    & \geq |x_a||\overline{x}_b| + |x_b||\overline{x}_a| - |x_a||\chi_b| - |x_b||\chi_a|\\
    & = |\sigma_X(F)| - 2 d_A d_B (\|x_a\|\|\chi_b\| + \|x_b\|\|\chi_a\|)\\
    & \geq |\sigma_X(F)| - 4 \delta d_A d_B (\|x_a\| + \|x_b\|)\\
    & = |\sigma_X(F)| - 4 \delta d_A d_B \|F\|.
  \end{align*}
\end{proof}

\begin{proof}[Proof of \Cref{prop:error_qubits_syn} and \Cref{lem:robust}.]

  Both proofs begin in the same way: we set $E_0$ to be the $\|\ \|$-reduced error equivalent to $E$ (or equivalent to $E_\textrm{R}$ in the case of \Cref{lem:robust}), we apply \Cref{existance of a critical generator and error} to $E_0$ which provides some $F_0 \subseteq E_0$ and we define $E_1 := E_0 \oplus F_0 = E_0 \setminus F_0$. More generally, we set by induction $E_{i+1} = E_i \oplus F_i = E_i \setminus F_i$ where $F_i$ is obtained by applying \Cref{existance of a critical generator and error} to $E_i$. This construction is licit (\textit{i.e.}, we can apply \Cref{existance of a critical generator and error} to $E_i$) because $E_i$ is $\|\ \|$-reduced as a subset of the $\|\ \|$-reduced error $E_0$ (see \Cref{lem:subset reduced}), and $\|E_i\| \leq \|E_0\| \leq \|E\| \leq |E|/d_A \leq \gamma_0 \sqrt{n}/d_A$. Let $f'$ be the last step of this procedure then $\|E_{f'}\| = 0$ and thus:
  \begin{align}\label{eq:E eq uplus F}
    E_0 = \biguplus_{i=0}^{f'-1} F_i.
  \end{align}
  Since the sets $F_i$ are those given in \Cref{existance of a critical generator and error}, we can use \Cref{existance of a critical generator and error} \cref{existance of a critical generator and error ii,existance of a critical generator and error i} to lower bound $|\sigma_X(E_0)|$:
  \begin{align}\label{eq:rewrite sigma}
    |\sigma_X(E_0)|
    = \sum_{i = 0}^{f'-1}  \Delta(\sigma_X(E_i), F_i)
    \geq \sum_{i=0}^{f'-1} |\sigma_X(F_i)| - \sum_{i=0}^{f'-1} 4 \delta d_A d_B \|F_i\|
    \geq \frac{\beta_0 d_A d_B}{2} \sum_{i = 0}^{f'-1} \|F_i\|.
  \end{align}
  Because $E_0 = \biguplus_i F_i$ (\cref{eq:E eq uplus F}), we have:
  \begin{align}\label{eq:robust}
    |\sigma_X(E_0)| \geq \frac{\beta_0 d_A d_B}{2} \|E_0\|.
  \end{align}
  \\The above arguments hold for \Cref{prop:error_qubits_syn} as well as for \Cref{lem:robust}. Proving \Cref{lem:robust} is now direct:
  \begin{align*}
    |\sigma_X(E_\textrm{R})|
    = |\sigma_X(E_0)|
    \geq \frac{\beta_0 d_A d_B}{2} \|E_0\|
    \geq \frac{\beta_0 d_A}{2} |E_0|
    \geq \frac{\beta_0 d_A}{2} |E_\textrm{R}|.
  \end{align*}

  For \Cref{prop:error_qubits_syn}, let us start by providing an overview of how we will proceed. Note that a union bound yields $|\sigma_X(E)| = |\sigma_X(E_0)| \leq \sum_{i=0}^{f'-1} |\sigma_X(F_i)|$. In fact,  we will prove in \cref{eq:disjoint} below that that this upper bound is nearly tight:  $|\sigma_X(E)| \geq \beta_0 \sum_{i=0}^{f'-1} |\sigma_X(F_i)|$ and $\beta_0$ is arbitrarily close to $1$ when $\delta$ is small. Intuitively, this means that the intersection of the sets $\sigma_X(F_i)$ is small and thus $|\sigma_X(E) \cap \sigma_X(F_i)|$ is generally large. This is still true if the size of the syndrome error $D$ is small, \textit{i.e.}, it holds that $|\sigma \cap \sigma_X(F_i)|$ is generally large. Hence we will obtain \Cref{prop:error_qubits_syn} by computing the average of the quantity $|\sigma \cap \sigma_X(F_i)|$ over the sets $F_i$. We now provide the details:
  \begin{align*}
    |\sigma_X(E)|
    = |\sigma_X(E_0)|
    & \geq \sum_{i=0}^{f'-1} |\sigma_X(F_i)| - \sum_{i=0}^{f'-1} 4 \delta d_A d_B \|F_i\| && \text{by \cref{eq:rewrite sigma}}\\
    & = \sum_{i=0}^{f'-1} |\sigma_X(F_i)| - 4 \delta d_A d_B \|E_0\| && \text{by \cref{eq:E eq uplus F}}\\
    & \geq \sum_{i=0}^{f'-1} |\sigma_X(F_i)| - \frac{8 \delta}{\beta_0} |\sigma_X(E)| && \text{by \cref{eq:robust}}.
  \end{align*}
  Hence we have:
  \begin{align}\label{eq:disjoint}
    \left(1 + \frac{8 \delta}{\beta_0}\right) |\sigma_X(E)|
    \geq \sum_{i=0}^{f'-1} |\sigma_X(F_i)|.
  \end{align}
  The relation between $\Delta(\sigma, F)$ and $\left|\sigma \cap \sigma_X(F) \right|$ is given by:
  \begin{align*}
    \Delta(\sigma, F)
    = |\sigma| - |\sigma \oplus \sigma_X(F)|
    = 2 \left|\sigma \cap \sigma_X(F) \right| - |\sigma_X(F)|
  \end{align*}
  where we have used the equality $|A_1 \oplus A_2| = |A_1| + |A_2| - 2|A_1 \cap A_2|$.
  \\In particular, when $F \notin G$:
  \begin{align}\label{ineg2}
    \left|\sigma \cap \sigma_X(F) \right|
    \leq \frac{1 + \beta}{2} |\sigma_X(F)|.
  \end{align}
  \\On the one hand, \cref{ineg2,eq:disjoint} give an upper bound on the sum $S := \sum_{i=0}^{f'-1} \left|\sigma \cap \sigma_X(F_i) \right|$:
  \begin{align*}
    S
    & = \sum_{F_i \in G} \left|\sigma \cap \sigma_X(F_i) \right| + \sum_{F_i \notin G} \left|\sigma \cap \sigma_X(F_i) \right|\\
    & \leq \sum_{F_i \in G} |\sigma_X(F_i)| + \frac{1 + \beta}{2} \sum_{F_i \notin G} |\sigma_X(F_i)| && \text{by \cref{ineg2}}\\
    & = \frac{1 - \beta}{2} \sum_{F_i \in G} |\sigma_X(F_i)| + \frac{1 + \beta}{2} \sum_{i=0}^{f'-1} |\sigma_X(F_i)|\\
    & \leq \frac{1 - \beta}{2} \sum_{F_i \in G} |\sigma_X(F_i)| + \frac{1 + \beta}{2\beta_0} |\sigma_X(E)| && \text{by \cref{eq:disjoint}.}
  \end{align*}
  On the other hand, $E_0 = \biguplus_i F_i$ (\cref{eq:E eq uplus F}) implies that $\sigma_X(E) = \sigma_X(E_0) = \bigoplus_{i = 0}^{f'-1} \sigma_X(F_i)$ and thus $S$ is lower bounded by:
  \begin{align*}
    S
    & \geq \left|\sigma \cap \sigma_X(E) \right|\\
    & = \left|(\sigma_X(E) \oplus D) \cap \sigma_X(E)\right|\\
    & = \left|\sigma_X(E) \oplus (D \cap \sigma_X(E))\right|\\
    & = \left|\sigma_X(E)\right| - |D \cap \sigma_X(E)|.
  \end{align*}
  Combining both inequalities we get \Cref{prop:error_qubits_syn}.
\end{proof}

\subsection{Random errors of linear size}
\label{subsec:cst}

The upper bound given by \Cref{prop:error_qubits_syn} can be applied for qubit errors of size up to $t = \cO(\sqrt{n})$. In the case of a local stochastic noise, the errors have a typical size of $\Theta(n)$. The relationship between the two frameworks is given by percolation arguments close to the arguments used in \cite{fawzi2018efficient}.

Percolation arguments will allow us to decompose a random error $E$ as a disjoint union of small error sets, each of which has size upper bounded by $t$. The study of these small errors has been done in \Cref{subsec: small} and a consequence of \Cref{coro:error_qubits_syn} is that when we use \Cref{algo decodage qec} to correct it, the remaining error is local stochastic. Moreover, \Cref{algo decodage qec} is intuitively local in the sense that two qubit errors far away in the factor graph of the code will not interact during the decoding procedure. Under the assumption that a local stochastic error is produced for each small error set, the locality property will allow us to conclude that when we correct the initial error $E$ with \Cref{algo decodage qec}, the remaining error has a local stochastic distribution.
\\In order to formalize the notion of locality, we define $\cG$ called the syndrome adjacency graph of the code in the following way: $\cG$ is equal to $G_X$ (as defined in \Cref{subs:qec}) with additional edges between the qubits which share an $X$-type or a $Z$-type generator. In other words, the set of vertices of $\cG$ is indexed by $\cV := V \cup C_X$ the set of qubits and the set of $Z$-type generators, a $Z$-type generator is incident to the qubits in its support and two qubits are linked when they are both in the support of the same generator. Note that the degree of $\cG$ is upper bounded by $d := \maxDeg$. Using the execution support $U = E \cup F_0 \cup \ldots \cup F_{f-1}$, it is easy to decompose the error into small error sets: each connected component $K$ of $U$ provides one error set $K \cap E$.
\begin{lemma}[Locality of \Cref{algo decodage qec}, \cite{fawzi2018efficient}]\label{lem:local}
  We use the notations of \Cref{subsec:notations}.
  \\For any set $K \subseteq V$ with $\Gamma(K) \cap \Gamma(U \setminus K) = \varnothing$ in $\cG$, there is a valid execution of \Cref{algo decodage qec} on the input $(E \cap K, D \cap \Gamma_X(K))$ whose output is $\hat{E} \cap K$ and whose support is $U \cap K$.
\end{lemma}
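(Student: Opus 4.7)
The plan is to extract the restricted execution directly from the original run of \Cref{algo decodage qec} on $(E,D)$, by keeping exactly those steps whose $F_i$ is contained in $K$. The first step is to argue that each $F_i$ (and the error $E$) splits cleanly along $K \cup (V \setminus K)$. Since $F_i \subseteq \Gamma_Z(g_i)$ for some $g_i \in C_Z$, the added qubit-qubit edges of $\cG$ make all qubits of $F_i$ pairwise $\cG$-adjacent; if $F_i$ straddled $K$ and $U \setminus K$ we would get a $\cG$-edge between $K$ and $U \setminus K$, contradicting $\Gamma(K) \cap \Gamma(U \setminus K) = \varnothing$. Applying the same argument with $E \subseteq U$ in place of $F_i$ gives $E = (E \cap K) \uplus (E \setminus K)$ with $E \setminus K \subseteq U \setminus K$.

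Next I would establish the key commutation identity. Because $G_X$ is a subgraph of $\cG$, the hypothesis also yields $\Gamma_X(K) \cap \Gamma_X(U \setminus K) = \varnothing$. Consequently $\sigma_X(F_j) \subseteq \Gamma_X(K)$ whenever $F_j \subseteq K$, $\sigma_X(F_j) \cap \Gamma_X(K) = \varnothing$ whenever $F_j \subseteq U \setminus K$, and similarly $\sigma_X(E \setminus K) \cap \Gamma_X(K) = \varnothing$. Writing $\sigma_i = \sigma_X(E) \oplus D \oplus \bigoplus_{j < i} \sigma_X(F_j)$ and intersecting with $\Gamma_X(K)$ then gives
\begin{equation*}
  \sigma_i \cap \Gamma_X(K) \;=\; \sigma_X(E \cap K) \oplus (D \cap \Gamma_X(K)) \oplus \bigoplus_{j < i,\ F_j \subseteq K} \sigma_X(F_j),
\end{equation*}
which is precisely the syndrome the restricted algorithm on input $(E \cap K, D \cap \Gamma_X(K))$ would carry after applying the sub-sequence of $F_j$'s contained in $K$.

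With this in hand, let $j_0 < j_1 < \cdots < j_{m-1}$ enumerate the indices with $F_{j_k} \subseteq K$, and consider the candidate restricted run making the choices $F_k' := F_{j_k}$. Validity of the while-loop condition at step $k$ is immediate: $F_k' \in \cF$ is inherited from the original execution, and since $\sigma_X(F_k') \subseteq \Gamma_X(K)$ the commutation identity gives $\Delta(\sigma_k^{\text{restr}}, F_k') = \Delta(\sigma_{j_k}, F_k') \geq \beta |\sigma_X(F_k')|$. For termination after step $m$, the final restricted syndrome equals $\sigma_f \cap \Gamma_X(K)$, so for every $F \in \cF$ we have $|\sigma_f \cap \Gamma_X(K) \cap \sigma_X(F)| \leq |\sigma_f \cap \sigma_X(F)|$ and hence $\Delta(\sigma_f \cap \Gamma_X(K), F) \leq \Delta(\sigma_f, F) < \beta |\sigma_X(F)|$, the last step by termination of the original run. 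Finally, $\hat{E} = \bigoplus_i F_i$ splits as $\bigl(\bigoplus_{F_i \subseteq K} F_i\bigr) \oplus \bigl(\bigoplus_{F_i \subseteq U \setminus K} F_i\bigr)$ with summands living in $K$ and $V \setminus K$ respectively, so $\hat{E} \cap K = \bigoplus_k F_k'$; the support identity $U \cap K = (E \cap K) \cup \bigcup_k F_k'$ follows by the same decomposition.

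The main obstacle is really the derivation of the commutation identity in the second paragraph: once the $\cG$-separation is translated into the clean splitting of the $F_j$'s and of the syndrome pieces along $\Gamma_X(K)$ versus $\Gamma_X(U \setminus K)$, the validity check, the termination check via the inclusion $\sigma_f \cap \Gamma_X(K) \subseteq \sigma_f$, and the matching of output and support all reduce to one- or two-line verifications.
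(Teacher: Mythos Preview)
The paper does not actually supply a proof of this lemma; it is imported verbatim from \cite{fawzi2018efficient}, so there is no in-paper argument to compare against. Your approach---restricting the original run to the flips contained in $K$ and checking validity/termination via the ``commutation'' identity $\sigma_i \cap \Gamma_X(K) = \sigma_X(E\cap K)\oplus(D\cap\Gamma_X(K))\oplus\bigoplus_{j<i,\,F_j\subseteq K}\sigma_X(F_j)$---is exactly the natural one, and every step (the splitting of $\sigma_X$ along $\Gamma_X(K)$ vs.\ $\Gamma_X(U\setminus K)$, the equality $\Delta(\sigma_k^{\text{restr}},F_k')=\Delta(\sigma_{j_k},F_k')$, the termination bound via $\sigma_f\cap\Gamma_X(K)\subseteq\sigma_f$, and the output/support identifications) is correct.

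One minor imprecision worth tightening: you argue that a straddling $F_i$ produces a $\cG$-edge between $K$ and $U\setminus K$ and say this ``contradicts $\Gamma(K)\cap\Gamma(U\setminus K)=\varnothing$''. With open neighbourhoods an edge alone need not exhibit a vertex in the intersection. The fix is one line: since $|\Gamma_Z(g_i)|=d_A+d_B\geq 3$, any third qubit $v_3\in\Gamma_Z(g_i)$ is $\cG$-adjacent to both your $v_1\in K$ and $v_2\in U\setminus K$, hence $v_3\in\Gamma(K)\cap\Gamma(U\setminus K)$ (and if the shared generator is some $c\in C_X$ rather than $g_i\in C_Z$, then $c$ itself is the common neighbour). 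With that remark your proof is complete.
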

What is the size of the remaining error $(E \oplus \hat{E}) \cap K$? We will show that this size is small enough to apply \Cref{coro:error_qubits_syn}. The key point is to note that among the vertices of $X := K \cup (D \cap \Gamma_X(K)) \subseteq \cV$, there is at least a fraction $2 \alpha_0$ of these vertices which belong to $E \cup D$ ($\alpha_0$ is defined in \Cref{subsec:notations}). We will say that $X$ is a $2 \alpha_0$-subset of $E \cup D$ (see \Cref{def:alpha}) and percolation arguments (see \Cref{lem:perco}) will show that with high probability, any connected $\alpha$-subset of a random error $E \cup D$ must be small enough to apply \Cref{coro:error_qubits_syn}.
\begin{definition}[\cite{fawzi2018efficient}]\label{def:alpha}
  Let $\cG = (\cV, \cE)$ be a graph, let $\alpha \in (0;1]$ and let $E,X \subseteq \cV$. $X$ is said to be an \emph{$\alpha$-subset of $E$} if $|X \cap E| \geq \alpha |X|$.
    We also define the integer $\textrm{MaxConn}_{\alpha}(E)$ by:
    $$
    \textrm{MaxConn}_{\alpha}(E) = \max \{ |X|: X \text{ is a connected in $\cG$ and is an $\alpha$-subset of $E$}\}
    $$
\end{definition}
 This notion of $\alpha$-subset is relevant because if we run the small-set-flip decoding algorithm of parameter $\beta > 0$ and set $U = E \cup F_0 \cup \ldots \cup F_{f-1}$ to be the execution support then $U \cup D$ is a $2 \alpha_0$-subset of $E \cup D$.

  Later we will need the following technical lemma in order to reduce to the case where the remaining error $E \oplus \hat{E}$ is reduced:
\begin{lemma}\label{lem:cup alpha subset}
  Let $E, X_1, X_2 \subseteq \cV$ with $|X_2| \leq |X_1|$. If $X_1$ is an $\alpha$-subset of $E$ then $X_1 \cup X_2$ is an $\frac{\alpha}{2}$-subset of $E$.
\end{lemma}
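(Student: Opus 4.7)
The plan is to prove this lemma by a direct two-line calculation, using only the defining inequality of an $\alpha$-subset together with the elementary bound $|X_1 \cup X_2| \leq |X_1| + |X_2|$. No graph structure or percolation input is needed; this is a purely set-theoretic fact about the normalized density $|X \cap E| / |X|$.

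First I would use the hypothesis that $X_1$ is an $\alpha$-subset of $E$ to get $|X_1 \cap E| \geq \alpha |X_1|$, and observe that since $X_1 \cap E \subseteq (X_1 \cup X_2) \cap E$, this yields the lower bound
\[
|(X_1 \cup X_2) \cap E| \geq |X_1 \cap E| \geq \alpha |X_1|.
\]
Next I would use the assumption $|X_2| \leq |X_1|$ together with subadditivity of cardinality to obtain $|X_1 \cup X_2| \leq |X_1| + |X_2| \leq 2 |X_1|$, which rearranges to $|X_1| \geq \tfrac{1}{2}|X_1 \cup X_2|$. Combining the two inequalities,
\[
|(X_1 \cup X_2) \cap E| \geq \alpha |X_1| \geq \frac{\alpha}{2} |X_1 \cup X_2|,
\]
which is exactly the statement that $X_1 \cup X_2$ is an $\frac{\alpha}{2}$-subset of $E$.

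There is no real obstacle here; the lemma is a bookkeeping tool that says adding a set no larger than $X_1$ can dilute the density of $E$-mass by at most a factor of two. The only thing to be slightly careful about is that the definition \Cref{def:alpha} of an $\alpha$-subset does not require $X$ to be connected, even though $\textrm{MaxConn}_\alpha$ does; this lemma applies at the level of $\alpha$-subsets and does not need any connectivity hypothesis on $X_1 \cup X_2$, matching the way it will be invoked later when passing from a reduced to a possibly larger non-reduced remaining error.
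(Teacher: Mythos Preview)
Your proof is correct and is essentially identical to the paper's: both use $|X_1 \cup X_2| \leq 2|X_1|$, the $\alpha$-subset hypothesis $|X_1 \cap E| \geq \alpha |X_1|$, and the inclusion $X_1 \cap E \subseteq (X_1 \cup X_2) \cap E$, chained together. The paper just writes the chain in the reverse direction, bounding $|X_1 \cup X_2|$ from above rather than $|(X_1 \cup X_2) \cap E|$ from below.
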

\begin{proof}
  By assumption $|X_1| \leq \frac{1}{\alpha} |X_1 \cap E|$ thus:
  \begin{align*}
    |X_1 \cup X_2|
    \leq 2 |X_1|
    \leq \frac{2}{\alpha} |X_1 \cap E|
    \leq \frac{2}{\alpha} |(X_1 \cup X_2) \cap E|.
  \end{align*}
\end{proof}

\begin{proposition}\label{linear number of flips synd error}
  Using the notations of \Cref{subsec:notations}, $U \cup D$ is a $2 \alpha_0$-subset of $E \cup D$.
\end{proposition}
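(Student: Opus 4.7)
The target inequality is $|(U\cup D)\cap(E\cup D)|\ge 2\alpha_0\,|U\cup D|$. Since $E\subseteq U$ by definition of the execution support, and since $V\supseteq U$ and $C_X\supseteq D$ are the two disjoint parts of $\cV$, one has $(U\cup D)\cap(E\cup D)=E\cup D$, with $|E\cup D|=|E|+|D|$ and $|U\cup D|=|U|+|D|$. The proposition therefore reduces to showing
\[|E|+|D|\ \ge\ 2\alpha_0\bigl(|U|+|D|\bigr),\]
and the plan is to bound $|U|$ in terms of $|E|$ and $|D|$ through the two built-in guarantees of \Cref{algo decodage qec}.

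First I would bound $\sum_i |F_i|$ by the initial syndrome weight. The definition of $\cF$ gives $|\sigma_X(F_i)|\ge \tfrac{d_A}{2}|F_i|$, and the while-loop condition gives $\Delta(\sigma_i,F_i)\ge \beta|\sigma_X(F_i)|$. Multiplying and telescoping yields
\[\tfrac{\beta d_A}{2}\sum_{i=0}^{f-1}|F_i|\ \le\ \sum_{i=0}^{f-1}\Delta(\sigma_i,F_i)\ =\ |\sigma_0|-|\sigma_f|\ \le\ |\sigma_0|.\]
Since each qubit vertex of $G_X$ has degree at most $d_B$ (the $B^2$-qubits have degree $d_B$, the $A^2$-qubits only $d_A\le d_B$), we have $|\sigma_0|=|\sigma_X(E)\oplus D|\le d_B|E|+|D|$. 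Combining,
\[|U|\ \le\ |E|+\sum_{i=0}^{f-1}|F_i|\ \le\ \Bigl(1+\tfrac{2d_B}{\beta d_A}\Bigr)|E|+\tfrac{2}{\beta d_A}|D|.\]

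What remains is an algebraic check. Writing $a=1+\tfrac{2d_B}{\beta d_A}=1+\tfrac{2}{r\beta}$ and $b=1+\tfrac{2}{\beta d_A}$, the required inequality becomes $(1-2\alpha_0 a)|E|+(1-2\alpha_0 b)|D|\ge 0$. Substituting $\alpha_0=\tfrac{r\beta}{4+2r\beta}$ gives $2\alpha_0 a=1$ exactly---this is precisely how $\alpha_0$ is tuned---and since $d_B\ge 1$ forces $a\ge b$, one also has $2\alpha_0 b\le 1$. Both coefficients are non-negative, and the inequality holds for any $|E|,|D|\ge 0$.

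I do not expect any genuine obstacle: the statement is essentially a bookkeeping exercise exploiting the algorithmic invariants. The two points that require a little care are (i) using membership in $\cF$, rather than in the larger $\cF_0$ from \Cref{algo decodage qec0}, which is what supplies the linear lower bound $|\sigma_X(F_i)|\ge \tfrac{d_A}{2}|F_i|$, and (ii) identifying $d_B$ as the correct maximum degree on the qubit side of $G_X$.
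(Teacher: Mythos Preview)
Your proof is correct and follows essentially the same approach as the paper: telescope $\sum_i \Delta(\sigma_i,F_i)\le|\sigma_0|$, combine with $|\sigma_X(F_i)|\ge\tfrac{d_A}{2}|F_i|$ from the definition of $\cF$, and bound $|\sigma_0|\le d_B|E|+|D|$. The only cosmetic difference is that the paper immediately coarsens $d_B|E|+|D|\le d_B|E\cup D|$ and works with $|E\cup D|$ as a single quantity, whereas you track $|E|$ and $|D|$ separately until the final algebraic check; both routes collapse to the identity $2\alpha_0\bigl(1+\tfrac{2}{r\beta}\bigr)=1$.
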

\begin{proof}
  Using the notations $\sigma_i$ and $F_i$ from the body of \Cref{algo decodage qec}, we have $|\sigma_i| - |\sigma_{i+1}| \geq \beta |\sigma_X(F_i)| \geq \frac{\beta d_A}{2} |F_i|$ thus we can lower bound $|\sigma_X(E) \oplus D| = |\sigma_0|$ by:
  \begin{align*}
    |\sigma_X(E) \oplus D|
    \geq |\sigma_0| - |\sigma_f|
    = \sum_{i=0}^{f-1} |\sigma_i| - |\sigma_{i+1}|
    \geq \frac{\beta d_A}{2} \sum_{i=0}^{f-1} |F_i|.
  \end{align*}
  But $|\sigma_X(E) \oplus D| \leq d_B |E| + |D| \leq d_B |E \cup D|$ thus $\sum_i |F_i| \leq \frac{2}{\beta r} |E \cup D|$ and:
  \begin{align*}
    \left|U \cup D\right|
    \leq |E \cup D| + \sum_{i=0}^{f-1} |F_i|
    \leq \frac{1}{2 \alpha_0} |E \cup D|
    = \frac{1}{2 \alpha_0} |(U \cup D) \cap (E \cup D)|
    .
  \end{align*}
\end{proof}

In order to prove \Cref{thm:correction}, we would like to show that with probability at least $1 - \reducedProba$, there is a reduced error $E_\textrm{ls}$ equivalent to $E \oplus \hat{E}$ which is local stochastic.

  Recall from \Cref{model erreur lc} that an error $E_\textrm{ls}$ is local stochastic with parameter $p$ if we can upper bound the probability $\mathbb{P}[S \subseteq E_\textrm{ls}]$ by $p^{|S|}$ for all $S \subseteq V$. The reason why the probability $p$ provided by \Cref{thm:correction} depends on $p_{\mathrm{synd}}$ is that we will use the hypothesis that $D$ is local stochastic: $\mathbb{P}[T \subseteq D] \leq p_{\mathrm{synd}}^{|T|}$ for all $T \subseteq C_X$. In order to establish \Cref{thm:correction}, it would be sufficient to prove that for all $S \subseteq E_\textrm{ls}$, the size of $D \cap \Gamma_X(S)$ is lower bounded by a linear function of $|S|$. The particular case where $S = E_\textrm{ls} = E \oplus \hat{E}$ has already been proven in \Cref{coro:error_qubits_syn} but unfortunately it is not possible to prove this property for all $S \subseteq E_\textrm{ls}$. Indeed, for an initial error $E = \varnothing$ and $S = \hat{E} \setminus \Gamma_X(D)$, such a property would imply that $S = \varnothing$, \textit{i.e.}, $\hat{E} \subseteq \Gamma_X(D)$. The last statement is clearly false because on input $(E, D)$ with $E = \varnothing$, \Cref{algo decodage qec} could flip some qubit which is not in $\Gamma_X(D)$ (note, however, that such a flip would necessarily happen after several steps).

Therefore, instead of lower bounding $|D \cap \Gamma_X(S)|$ linearly in $|S|$, we will lower bound $|D \cap \Gamma_X(W)|$ linearly in $|W|$ where $W$ contains $S$ and is such that we can apply the locality property of \Cref{lem:local} to prove that \Cref{algo decodage qec} cannot flip any set on input $(W, D \cap \Gamma_X(W))$. Applying \Cref{coro:error_qubits_syn} for the execution of \Cref{algo decodage qec} on input $(W, D \cap \Gamma_X(W))$ will provide the desired lower bound. We will call such a $W$ a \emph{witness} for $S$ (see \Cref{def:witness}).

  In the proof of \Cref{thm:correction}, we will show that if we have witnesses for any $S \subseteq E_\textrm{ls}$ then $E_\textrm{ls}$ is local stochastic. This will require to be able to control the number of possible witnesses for a given $S$ and that is why we will need an additional condition $W \in \mathcal{M}(S)$ in the definition of a witness (see \Cref{def:M(S)} for $\mathcal{M}(S)$ and \Cref{def:witness} for witness).

\begin{definition}[\cite{gottesman2014fault}]\label{def:M(S)}
  Let $\cG = (V \cup C_X, \cE)$ be the syndrome adjacency graph and $d :=  \maxDeg$ an upper bound on the degrees in $\cG$ (as defined above \Cref{lem:local}). For $S \subseteq V$, we define $\mathcal{M}(S) \subseteq \cP(V)$ as the set of all subsets $W \subseteq V$ such that $S \subseteq W$ and such that any connected component $W'$ of $W$ in $\cG$ satisfies $W' \cap S \neq \varnothing$.
 \end{definition}
In this definition, a set $W \in \mathcal{M}(S)$ is any superset of $S$ such that $S$ intersects every connected component of $W$.
Lemma 2 of Ref.~\cite{gottesman2014fault} provides an upper bound on the number of sets $W \in \mathcal{M}(S)$ of a given size:
  \begin{align}\label{Ineq M gottesman}
    \left|\left\{W \in \mathcal{M}(S): |W| = t\right\}\right| \leq \frac{(ed)^t}{ed^{|S|}}.
  \end{align}

\begin{definition}\label{def:witness}\ 
  Let $D \subseteq C_X$ be a syndrome error and $c$ be a constant. For $S \subseteq V$, we say that $W \subseteq V$ is a \emph{$c$-witness for $(S,D)$} if $W \in \mathcal{M}(S)$ and $|W| \leq c |D \cap \Gamma_X(W)|$.
\end{definition}

  The proofs are organized as follow: first we show that we can easily find a witness for any $S \subseteq E \oplus \hat{E}$ under the assumptions that $|E \oplus \hat{E}| \leq \gamma_0 \sqrt{n}$ and that $E \oplus \hat{E}$ is reduced (\Cref{lem:technical lemma noisy synd component}), second we use \Cref{lem:technical lemma noisy synd component} in order to construct witnesses for $S \subseteq E_\textrm{ls}$ under the assumption that $|\textrm{MaxConn}_{\alpha_0}(E)| \leq \gamma_0 \sqrt{n}$ (\Cref{lem:technical lemma noisy synd}), third we show using percolation arguments that $|\textrm{MaxConn}_{\alpha_0}(E)| \leq \gamma_0 \sqrt{n}$ holds with probability $1 - \reducedProba$ when $E$ and $D$ are local stochastic (\Cref{lem:perco}) and finally we conclude the proof of \Cref{thm:correction} by showing that if there exist witnesses for all $S \subseteq E_\textrm{ls}$ then $E_\textrm{ls}$ is local stochastic.

\begin{lemma}\label{lem:technical lemma noisy synd component}
  We use the notations of \Cref{subsec:notations} with $\delta < 1/16$ and $\beta \in (0; \beta_1)$.
  \\If the remaining error $E \oplus \hat{E}$ is reduced and $|E \oplus\hat{E}| \leq \gamma_0 \sqrt{n}$ then for all $S \subseteq E \oplus\hat{E}$, there is a $c_0$-witness $W$ for $(S,D)$ with the additional constraint $W \subseteq E \oplus \hat{E}$.
\end{lemma}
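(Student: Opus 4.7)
The plan is to take $W$ to be the union of those connected components of $E \oplus \hat{E}$ — viewed as a subgraph of the syndrome adjacency graph $\cG$ — that meet $S$, and then to establish the witness inequality $|W| \leq c_0\,|D \cap \Gamma_X(W)|$ by combining the contrapositive of \Cref{prop:error_qubits_syn} with the robustness bound of \Cref{lem:robust}. Write $E' := E \oplus \hat{E}$. By construction $S \subseteq W \subseteq E'$ and every $\cG$-component of $W$ meets $S$, so $W \in \mathcal{M}(S)$ and the additional constraint $W \subseteq E'$ is free.

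The crucial structural observation is a decoupling identity. Because $W$ is a union of full $\cG$-components of $E'$, no $\cG$-edge crosses between $W$ and $E' \setminus W$; since $\cG$ connects two qubits as soon as they share a stabilizer generator, this forces
\begin{align*}
\Gamma_X(W) \cap \Gamma_X(E' \setminus W) = \varnothing,
\end{align*}
and therefore $\sigma_X(E') = \sigma_X(W) \sqcup \sigma_X(E' \setminus W)$. Let $\sigma_f := \sigma_X(E') \oplus D$ denote the final syndrome produced by \Cref{algo decodage qec}, and let $\sigma_W := \sigma_X(W) \oplus D$ be the syndrome one would observe when running the algorithm on the pair $(W, D)$. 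For any $F \subseteq W$ one has $\sigma_X(F) \subseteq \Gamma_X(W)$, and by the decoupling above the two syndromes $\sigma_f$ and $\sigma_W$ agree on $\Gamma_X(W)$; the identity $\Delta(\sigma, F) = 2|\sigma \cap \sigma_X(F)| - |\sigma_X(F)|$ then yields the key invariance $\Delta(\sigma_f, F) = \Delta(\sigma_W, F)$.

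The witness inequality is then a short contradiction argument. Assume $|\sigma_X(W)| > c_2\,|D \cap \sigma_X(W)|$. \Cref{lem:subset reduced} transfers reducedness from $E'$ to $W$, and $|W| \leq |E'| \leq \gamma_0\sqrt{n}$, so \Cref{prop:error_qubits_syn} applied to $(W, D)$ produces some $F \in \cF$ contained in $W$ with $\Delta(\sigma_W, F) \geq \beta\,|\sigma_X(F)|$. The invariance above then upgrades this to $\Delta(\sigma_f, F) \geq \beta\,|\sigma_X(F)|$, contradicting the halting of \Cref{algo decodage qec}. Therefore $|\sigma_X(W)| \leq c_2\,|D \cap \sigma_X(W)|$, which combined with the robustness bound $|\sigma_X(W)| \geq \tfrac{\beta_0 d_A}{2}|W|$ from \Cref{lem:robust} and the numerical identity $\tfrac{2 c_2}{\beta_0 d_A} = c_0$ gives $|W| \leq c_0\,|D \cap \sigma_X(W)| \leq c_0\,|D \cap \Gamma_X(W)|$.

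The delicate point is the contradiction step: one needs the flip $F$ returned by \Cref{prop:error_qubits_syn} on input $(W,D)$ to lie inside $W$, so that $\sigma_X(F) \subseteq \Gamma_X(W)$ and the decoupling identity can be brought to bear. This is where the assumption that $E \oplus \hat{E}$ is reduced is used: the construction in the proof of \Cref{prop:error_qubits_syn} places its flip inside a suitably reduced representative of the input error, and \Cref{lem:subset reduced} ensures that this reducedness is inherited by the subset $W$ of $E \oplus \hat{E}$.
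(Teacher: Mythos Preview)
Your overall strategy matches the paper's exactly: take $W$ to be the union of the $\cG$-components of $E' := E \oplus \hat{E}$ meeting $S$, then combine the contrapositive of \Cref{prop:error_qubits_syn} with \Cref{lem:robust}. The decoupling identity you prove is essentially an unpacking of the locality lemma (\Cref{lem:local}) that the paper invokes, so the two arguments are morally the same.

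There is, however, a genuine gap in your contradiction step. You apply \Cref{prop:error_qubits_syn} to the pair $(W,D)$ and then need the resulting flip $F$ to satisfy $\sigma_X(F) \subseteq \Gamma_X(W)$ so that your invariance $\Delta(\sigma_f,F)=\Delta(\sigma_W,F)$ applies. You justify this by saying the proof of \Cref{prop:error_qubits_syn} places $F$ inside a ``suitably reduced representative'' of $W$, and that \Cref{lem:subset reduced} passes reducedness from $E'$ down to $W$. But the proof of \Cref{prop:error_qubits_syn} works with the $\|\cdot\|$-reduced representative $E_0$ of $W$, whereas the hypothesis of the lemma only says $E'$ is $|\cdot|$-reduced; \Cref{lem:subset reduced} then only yields that $W$ is $|\cdot|$-reduced. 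These two notions do not coincide, so there is no reason for $E_0$ to equal (or be contained in) $W$, and hence no reason for the flip $F$ to live inside $W$.

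The fix is precisely what the paper does: restrict the syndrome error to $D' := D \cap \Gamma_X(W)$ rather than keeping the full $D$. Then the local syndrome $\sigma' := \sigma_X(W)\oplus D'$ equals $\sigma_f \cap \Gamma_X(W)$, and for \emph{every} $F \in \cF$ one has $|\sigma' \cap \sigma_X(F)| \leq |\sigma_f \cap \sigma_X(F)|$, hence $\Delta(\sigma',F) \leq \Delta(\sigma_f,F) < \beta|\sigma_X(F)|$ by the halting condition. This shows directly that no valid flip exists for $(W,D')$, with no need to locate $F$ inside $W$; the contrapositive of \Cref{prop:error_qubits_syn} then gives $|\sigma_X(W)| \leq c_2|D' \cap \sigma_X(W)| = c_2|D \cap \sigma_X(W)|$ (the last equality because $\sigma_X(W)\subseteq\Gamma_X(W)$), and the rest of your argument goes through unchanged.
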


\begin{proof}
  We set $E_\textrm{R} = E \oplus \hat{E}$ and define $W$ to be all the connected components of $E_\textrm{R}$ in $\cG$ that contain at least one element of $S$. It is clear that $W \in \mathcal{M}(S)$ and that $W \subseteq E \oplus \hat{E}$ hold, and it remains to prove $|W| \leq c_0 |D \cap \Gamma_X(W)|$.
  \\By locality (\Cref{lem:local} applied with $K = W$), no flip is done by \Cref{algo decodage qec} on the input $E_\textrm{R} \cap W, D \cap \Gamma_X(W)$. Moreover, the remaining error $(E \oplus \hat{E}) \cap W$ is reduced as a subset of the reduced error $E \oplus \hat{E}$ (\Cref{lem:subset reduced}) and satisfies $|(E \oplus \hat{E}) \cap W| \leq \gamma_0 \sqrt{n}$. Hence \Cref{coro:error_qubits_syn} states that:
  \begin{align*}
    |W|
    = |(E \oplus \hat{E}) \cap W|
    \leq c_0 |D \cap \Gamma_X(W) \cap \sigma_X((E \oplus \hat{E}) \cap W)|
    \leq c_0 |D \cap \Gamma_X(W)|.
  \end{align*}
\end{proof}

\begin{lemma}\label{lem:technical lemma noisy synd}
  We use the notations of \Cref{subsec:notations} with $\delta < 1/16$ and $\beta \in (0; \beta_1)$.
  \\If $|\textrm{MaxConn}_{\alpha_0}(E \cup D)| \leq \gamma_0 \sqrt{n}$ then there is a reduced error $E_\textrm{ls}$ equivalent to the remaining error $E \oplus \hat{E}$ such that for all $S \subseteq E_\textrm{ls}$ there is a $c_0$-witness for $(S,D)$.
\end{lemma}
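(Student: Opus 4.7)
The plan is to combine the locality of \Cref{algo decodage qec} (\Cref{lem:local}) with the small-error witness construction of \Cref{lem:technical lemma noisy synd component}, by decomposing the execution into connected pieces of size at most $\gamma_0 \sqrt{n}$. I would partition $U$ into the equivalence classes $K_1, \dots, K_m$ of the relation on $U$ that identifies two qubits sharing a $\cG$-edge or a common $\cG$-neighbor; by construction this partition satisfies $\Gamma(K_i) \cap \Gamma(U \setminus K_i) = \varnothing$ in $\cG$. Setting $D_i := D \cap \Gamma_X(K_i)$ and $X_i := K_i \cup D_i$, \Cref{lem:local} gives that the algorithm restricted to input $(E \cap K_i, D_i)$ has output $\hat{E} \cap K_i$ and support $K_i$. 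Rerunning the computation from the proof of \Cref{linear number of flips synd error} on this restricted execution yields $|X_i| \leq \frac{1}{2\alpha_0}|X_i \cap (E \cup D)|$, so $X_i$ is a connected $\alpha_0$-subset of $E \cup D$; the hypothesis then gives $|X_i| \leq \gamma_0 \sqrt{n}$, and in particular $|(E \oplus \hat{E}) \cap K_i| \leq \gamma_0 \sqrt{n}$.

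Next, for each $i$ I would pick $E_\textrm{ls}^{(i)}$ to be a reduced error equivalent to $(E \oplus \hat{E}) \cap K_i$, and write $E_\textrm{ls}^{(i)} = ((E \oplus \hat{E}) \cap K_i) \oplus g_i$ for some $g_i \in \cC_Z^\perp$. Because $\sigma_X(g_i) = 0$, \Cref{algo decodage qec} admits an execution on the equivalent input $((E \cap K_i) \oplus g_i, D_i)$ with exactly the same syndrome trajectory, the same output $\hat{E} \cap K_i$ and the same support $K_i$; the remaining error of this equivalent execution is $E_\textrm{ls}^{(i)}$, which is reduced and of size at most $\gamma_0 \sqrt{n}$. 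Applying \Cref{lem:technical lemma noisy synd component} to this equivalent execution yields, for every $S_i \subseteq E_\textrm{ls}^{(i)}$, a $c_0$-witness $W_i \subseteq E_\textrm{ls}^{(i)}$ for $(S_i, D_i)$.

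Finally, I would set $E_\textrm{ls} := \bigoplus_i E_\textrm{ls}^{(i)}$; it is equivalent to $E \oplus \hat{E}$ since every $g_i \in \cC_Z^\perp$. For $S \subseteq E_\textrm{ls}$ I decompose $S = \bigsqcup_i S_i$ with $S_i := S \cap E_\textrm{ls}^{(i)}$ and set $W := \bigsqcup_i W_i$. Then $S \subseteq W$; each $\cG$-connected component of $W$ lies inside a single $E_\textrm{ls}^{(i)}$ and hence meets $S_i \subseteq S$, giving $W \in \mathcal{M}(S)$. Summing the per-component bounds gives $|W| = \sum_i |W_i| \leq c_0 \sum_i |D_i \cap \Gamma_X(W_i)| = c_0 |D \cap \Gamma_X(W)|$, using the pairwise disjointness of the $D_i$ and the inclusion $D \cap \Gamma_X(W_i) \subseteq D_i$ that follows from the locality condition on the $K_i$.

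The principal obstacle is ensuring that the $E_\textrm{ls}^{(i)}$ can actually be taken with pairwise disjoint supports sitting in distinct $\cG$-components, so that $E_\textrm{ls}$ is genuinely reduced and the assembly above is legitimate. Since $|(E \oplus \hat{E}) \cap K_i|$ is well below the minimum distance (by choosing $\gamma_0$ small), the reduction $g_i$ can be chosen supported in a neighborhood of $K_i$ by a critical-generator argument analogous to the one in \Cref{existance of a critical generator and error}; the separation between different $K_i$ imposed by the locality condition then prevents cross-interference and makes $E_\textrm{ls}$ the disjoint union of reduced local pieces.
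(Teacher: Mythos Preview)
Your overall plan—locality plus \Cref{lem:technical lemma noisy synd component} on separated pieces—is the paper's as well, and your first two paragraphs are essentially correct. The gap is precisely the obstacle you name in the last paragraph, and the fix you sketch does not close it. \Cref{existance of a critical generator and error} exhibits a small flip \emph{inside} a given $\|\cdot\|$-reduced error; it says nothing about where a minimum-weight coset representative of $(E\oplus\hat{E})\cap K_i$ must live. From $\sigma_X\bigl(E_\textrm{ls}^{(i)}\bigr)\subseteq\Gamma_X(K_i)$ one can only deduce that every $\cG$-component of $E_\textrm{ls}^{(i)}$ is adjacent to $K_i$, not that it stays in a bounded neighborhood of $K_i$; such a component may well stretch into the vicinity of some other $K_j$. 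Consequently you have no guarantee that the $E_\textrm{ls}^{(i)}$ are pairwise disjoint (so $\bigoplus_i E_\textrm{ls}^{(i)}$ need not be reduced and the decomposition $S=\bigsqcup_i S_i$ may fail), nor that a connected component of $W$ sits inside a single $E_\textrm{ls}^{(i)}$, nor that $D\cap\Gamma_X(W_i)\subseteq D_i$ (this last inclusion would need $\Gamma_X(W_i)\subseteq\Gamma_X(K_i)$, but $W_i\subseteq E_\textrm{ls}^{(i)}\not\subseteq K_i$ in general).

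The paper avoids all of this by reversing the order of the two choices: it first fixes $E_\textrm{ls}$ \emph{globally} as a minimum-weight error with the same syndrome as $E\oplus\hat{E}$ (so reducedness is free), and only afterwards takes connected components $K$ of $U\cup E_\textrm{ls}$ rather than of $U$ alone. Because $|E_\textrm{ls}\cap K|\le|(E\oplus\hat{E})\cap K|\le|U\cap K|$, \Cref{lem:cup alpha subset} upgrades the $2\alpha_0$-subset property of $(U\cap K)\cup(D\cap\Gamma_X(K))$ (from \Cref{linear number of flips synd error}) to an $\alpha_0$-subset property of the enlarged $K\cup(D\cap\Gamma_X(K))$, so the hypothesis still yields $|K|\le\gamma_0\sqrt{n}$; the small weight of $(E_\textrm{ls}\cap K)\oplus\bigl((E\oplus\hat{E})\cap K\bigr)$ then forces it into $\cC_Z^{\perp}$, giving equivalence on each $K$ and hence globally. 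One then applies locality to the execution with the equivalent input $E'=E_\textrm{ls}\oplus\hat{E}$, for which the remaining error on $K$ is exactly $E_\textrm{ls}\cap K$ — reduced by \Cref{lem:subset reduced}, and with the witness from \Cref{lem:technical lemma noisy synd component} automatically contained in $K$. Disjointness and the assembly $W=\bigsqcup_K W_K$ are then immediate.
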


\begin{proof}
  We define $E_\textrm{ls}$ as one of the minimal weight errors whose syndrome is the same than the syndrome of the error $E \oplus \hat{E}$. We will show that $E_\textrm{ls}$ and $E \oplus \hat{E}$ are equivalent.
  \\Let $K$ be a connected component of $U \cup E_\textrm{ls}$. By locality (\Cref{lem:local}), there is a valid execution of \Cref{algo decodage qec} on input $(E \cap K, D \cap \Gamma_X(K))$ whose output is $\hat{E} \cap K$ and whose support is $U \cap K$. 
  Using \Cref{linear number of flips synd error} we know that $(U \cap K) \cup (D \cap \Gamma_X(K))$ is a $2 \alpha_0$-subset of $(E \cap K) \cup (D \cap \Gamma_X(K))$. By \Cref{lem:cup alpha subset} and the fact that $|E_\textrm{ls} \cap K| \leq |(E \oplus \hat{E}) \cap K| \leq |U \cap K|$, we know that $K \cup (D \cap \Gamma_X(K))$ is an $\alpha_0$-subset of $(E \cap K) \cup (D \cap \Gamma_X(K))$. Hence $K \cup (D \cap \Gamma_X(K))$ is an $\alpha_0$-subset of $E \cup D$. From the hypothesis $|\textrm{MaxConn}_{\alpha_0}(E \cup D)| \leq \gamma_0 \sqrt{n}$, we conclude that $|K| \leq \gamma_0 \sqrt{n}$. In particular, $E \oplus \hat{E}$ and $E_\textrm{ls}$ have the same syndrome and the weight of $E \oplus \hat{E} \oplus E_\textrm{ls}$ is smaller than the minimal distance thus $E_\textrm{ls} \cap K$ is equivalent to $(E \oplus \hat{E}) \cap K$. Since this is true for all $K$ then $E_\textrm{ls}$ is equivalent to $E \oplus \hat{E}$.
  \\Let $S \subseteq E_\textrm{ls}$, keeping the same notations we will prove that for each $K$, there is a witness $W_K$ for $S_K := S \cap K$ with the additional constraint $W_K \subseteq E_\textrm{ls} \cap K$:
  \\Let $E' = E_\textrm{ls} \oplus \hat{E}$ and let's prove that if we run \Cref{algo decodage qec} on input $(E' \cap K, D \cap \Gamma_X(K))$ then the output $\hat{E}'$ satisfies the hypothesis of \Cref{lem:technical lemma noisy synd component}. The errors $E$ and $E'$ have same syndrome thus the behavior of \Cref{algo decodage qec} is the same on input $(E, D)$ and on input $(E', D)$. In other words there is a valid execution of \Cref{algo decodage qec} on input $(E', D)$ whose output is $\hat{E}' = \hat{E}$ and whose support is $E' \cup F_0 \cup \ldots \cup F_{f-1}$. By locality (\Cref{lem:local}), there is also a valid execution of \Cref{algo decodage qec} on input $(E' \cap K, D \cap \Gamma_X(K))$ whose output is $\hat{E} \cap K$ and whose support is $(E' \cup F_0 \cup \ldots \cup F_{f-1}) \cap K$. The remaining error of the last execution is $(E' \oplus \hat{E}) \cap K = E_\textrm{ls} \cap K$ which is reduced and is of weight smaller than $\gamma_0 \sqrt{n}$. By \Cref{lem:technical lemma noisy synd component}, there is a $c_0$-witness $W_K$ for $S_K$ with $W_K \subseteq E_\textrm{ls} \cap K$. Finally, $W = \biguplus_K W_K$ is a $c_0$-witness for $S$.

\end{proof}

  The last ingredient before proving \Cref{thm:correction} is provided by \Cref{lem:perco} below: the condition $\textrm{MaxConn}_{\alpha_0}(E \cup D) \leq \gamma_0 \sqrt{n}$ needed in \Cref{lem:technical lemma noisy synd} is verified with high probability for local stochastic errors.

\begin{lemma}[$\alpha$-percolation, \cite{fawzi2018efficient}]\label{lem:perco}
  Let $\cG = (V \cup C_X, \cE)$ be the syndrome adjacency graph and $d :=  \maxDeg$ an upper bound on the degrees in $\cG$ (as defined above \Cref{lem:local}). Then for any $\alpha \in (0,1]$, there exists a threshold $p_{\textrm{th}} = p_{\textrm{th}}(\alpha, d) > 0$ such that for any $t \in \bN^*$ and $p < p_{\textrm{th}}$:
    \begin{align*}
      \mathbb{P}[\textrm{MaxConn}_{\alpha}(E) \geq t] \leq \fullProbaii,
    \end{align*}
    where $C = C(\alpha, p, p_{\textrm{th}})$ is a constant and $E \subseteq \cV$ is a random set chosen accordingly to a local stochastic noise of parameter $p$.
\end{lemma}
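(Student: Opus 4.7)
The plan is to establish the bound via a standard Peierls/percolation union bound: any connected $\alpha$-subset of $E$ of size $s$ contains a combinatorial witness consisting of a connected subgraph of $\cG$ of size $s$ together with an identified subset of $\lceil \alpha s \rceil$ of its vertices lying in $E$, and for $p$ below an appropriate threshold the expected number of such witnesses decays geometrically in $s$, so summing over $s \geq t$ produces the claimed tail.

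First I would reduce the event $\{\textrm{MaxConn}_\alpha(E) \geq t\}$ to the existence of a connected subgraph $X \subseteq \cV$ with $|X| = s$ for some $s \geq t$ such that $|X \cap E| \geq \alpha s$, and bound its probability by a union bound over $s \geq t$, over connected $X$ of size $s$, and over a witness subset $T \subseteq X$ with $|T| = \lceil \alpha s \rceil$. The probability that such a fixed $T$ lies in $E$ is controlled directly by the local stochastic hypothesis:
\begin{align*}
\mathbb{P}[T \subseteq E] \leq p^{|T|} \leq p^{\alpha s}.
\end{align*}
The number of choices for $T$ inside $X$ is at most $\binom{s}{\lceil \alpha s \rceil} \leq 2^s$.

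The next ingredient, which I expect to be the main combinatorial step, is a ``lattice animal'' count for $\cG$. Since $\cG$ has maximum degree at most $d = \maxDeg$, a standard bound (provable for instance by encoding each connected subgraph via a depth-first traversal of a spanning tree rooted at a distinguished vertex) shows that the number of connected subgraphs of $\cG$ of size $s$ containing a fixed vertex $v$ is at most $(ed)^s$. Summing over the at most $|\cV|$ possible choices of $v$ gives at most $|\cV| (ed)^s$ connected subgraphs of size $s$, with a harmless $s$-fold overcount absorbed into the constant.

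Combining the three ingredients yields
\begin{align*}
\mathbb{P}[\textrm{MaxConn}_\alpha(E) \geq t] \leq |\cV| \sum_{s \geq t} (ed)^s\, 2^s\, p^{\alpha s} = |\cV| \sum_{s \geq t} \left((2ed)^{1/\alpha} p\right)^{\alpha s}.
\end{align*}
Setting $p_{\textrm{th}} := (2ed)^{-1/\alpha}$ makes the common ratio equal $(p/p_{\textrm{th}})^\alpha < 1$ whenever $p < p_{\textrm{th}}$, so the geometric tail evaluates to
\begin{align*}
\frac{(p/p_{\textrm{th}})^{\alpha t}}{1 - (p/p_{\textrm{th}})^\alpha},
\end{align*}
producing the claimed bound $C |\cV| (p/p_{\textrm{th}})^{\alpha t}$ with $C = (1 - (p/p_{\textrm{th}})^\alpha)^{-1}$. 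The only delicate point is the lattice animal count; if the sharpest constant $ed$ turns out awkward, any bound of the form $c(d)^s$ will suffice at the price of redefining $p_{\textrm{th}}$ accordingly, and the resulting $C$ and $p_{\textrm{th}}$ depend only on $\alpha$ and $d$ as required.
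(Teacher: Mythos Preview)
The paper does not prove this lemma; it is quoted from~\cite{fawzi2018efficient} and stated without proof. Your argument is the standard Peierls/union-bound percolation argument and is correct, and it is essentially the same approach as in the cited reference: union bound over connected subgraphs $X$ of size $s \geq t$, count connected size-$s$ subgraphs via the lattice-animal bound $|\cV|(ed)^s$, and bound $\mathbb{P}[|X\cap E|\geq \alpha s]$ via a further union over size-$\lceil \alpha s\rceil$ witnesses using the local-stochastic hypothesis. The only cosmetic difference one might see in~\cite{fawzi2018efficient} is a sharper entropy bound $\binom{s}{\lceil \alpha s\rceil}\leq 2^{s h(\alpha)}$ in place of your crude $2^s$, which yields a slightly larger $p_{\textrm{th}}$ but changes nothing structurally.
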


\begin{proof}[Proof of \Cref{thm:correction}]

   We define the event $\mathsf{succ}$ to be $\textrm{MaxConn}_{\alpha_0}(E \cup D) \leq \gamma_0 \sqrt{n}$. When we choose the error $E$ and $D$ according to a local stochastic noise, \Cref{lem:perco} ensures that $\mathsf{succ}$ holds with probability at least $1-\reducedProba$. We define $E_\textrm{ls}$ as in \Cref{lem:technical lemma noisy synd}. We know that provided $\textrm{MaxConn}_{\alpha_0}(E \cup D) \leq \gamma_0 \sqrt{n}$, there is a $c_0$-witness $W$ for $(S,D)$ for any $S \subseteq E_\textrm{ls}$ and to conclude we will upper bound the probability that such a witness exists.
  \\First of all, for a fixed $W \subseteq V$, $W$ is a witness for $(S,D)$ implies that there is a set $T \subseteq \Gamma_X(W)$ such that $|W| \leq c_0 |T|$ and $T \subseteq D$. Thus the probability that $W$ is a witness for $(S,D)$ is upper bounded by:
  \begin{align*}
    \mathbb{P}[\text{$W$ is a witness for $(S,D)$}]
    \leq \sum_{\substack{T \subseteq \Gamma_X(W):\\|W| \leq c_0 |T|}} \mathbb{P}[T \subseteq D]
    \leq \sum_{\substack{T \subseteq \Gamma_X(W):\\|W| \leq c_0 |T|}} p_{\mathrm{synd}}^{|T|}.
  \end{align*}
  Since the cardinality of $\Gamma_X(W)$ is upper bounded by $d_B |W|$, we have:
  \begin{align*}
    \mathbb{P}[\text{$W$ is a witness for $(S,D)$}]
    \leq \sum_{|T| \geq |W|/c_0} \binom{d_B |W|}{|T|} p_{\mathrm{synd}}^{|T|}.
  \end{align*}
  For $k, n$ integers, we have the well-known upper bound on the binomial coefficient:
  \begin{align*}
    \binom{n}{k} \leq 2^{n h(k/n)}
  \end{align*}
  where $h(x) = - x \log_2(x) - (1-x) \log_2(1-x)$ is the binary entropy function. Since $n \mapsto n h(k/n)$ is decreasing, with $k = |T|$, $n = d_B |W|$ and under the condition $|T| \geq |W|/c_0$, we have:
  \begin{align*}
    \binom{d_B |W|}{|T|}
    \leq \left(\frac{p_1^{c_0}}{p_{\mathrm{synd}}}\right)^{|T|}
    \text{ where }
    p_1 := 2^{d_B h(1/(d_B c_0))} p_{\mathrm{synd}}^{1/c_0}.
  \end{align*}
  Under the condition that $p_{\mathrm{synd}}$ is sufficiently small to guarantee $p_1 < 1$, we get:
  \begin{align*}
    \mathbb{P}[\text{$W$ is a witness for $(S,D)$}]
    \leq \sum_{|T| \geq |W|/c_0} p_1^{c_0 |T|}
    \leq \frac{p_1^{|W|}}{1 - p_1^{c_0}}.
  \end{align*}
  By conditioning on the event $\textrm{MaxConn}_{\alpha_0}(E \cup D) \leq \gamma_0 \sqrt{n}$, we can now bound the probability of $S \subseteq E_\textrm{ls}$ as follows: 
  \begin{align*}
    \mathbb{P}[S \subseteq E_\textrm{ls} | \textrm{MaxConn}_{\alpha_0}(E \cup D) \leq \gamma_0 \sqrt{n}]
    &= \frac{\mathbb{P}[S \subseteq E_\textrm{ls} , \textrm{MaxConn}_{\alpha_0}(E \cup D) \leq \gamma_0 \sqrt{n}]}{\mathbb{P}[\textrm{MaxConn}_{\alpha_0}(E \cup D) \leq \gamma_0 \sqrt{n}]}\\
    &= 2 \cdot \mathbb{P}[S \subseteq E_\textrm{ls} , \textrm{MaxConn}_{\alpha_0}(E \cup D) \leq \gamma_0 \sqrt{n}],
  \end{align*}
  where we used $\mathbb{P}[\textrm{MaxConn}_{\alpha_0}(E \cup D) \leq \gamma_0 \sqrt{n}] \geq 1 - \reducedProba \geq \frac{1}{2}$ (\Cref{lem:perco}). Now using~\Cref{lem:technical lemma noisy synd} and under the condition that $p_{\mathrm{synd}}$ is sufficiently small to guarantee $p_1 e d < 1$ we get
  \begin{align*}
    \mathbb{P}[S \subseteq E_\textrm{ls} | \textrm{MaxConn}_{\alpha_0}(E \cup D) \leq \gamma_0 \sqrt{n}]
    & \leq 2 \cdot \mathbb{P}[\text{$\exists W: W$ is a witness for $(S,D)$}]\\
    & \leq 2\sum_{\substack{W \in \mathcal{M}(S)\\|W| \geq |S|}} \mathbb{P}[\text{$W$ is a witness for $(S,D)$}]\\
    & \leq 2\sum_{\substack{W \in \mathcal{M}(S)\\|W| \geq |S|}} \frac{p_1^{|W|}}{1 - p_1^{c_0}}\\
    & \leq 2\sum_{|W| \geq |S|} \frac{\left(p_1 e d\right)^{|W|}}{d^{|S|} e \left(1 - p_1^{c_0}\right)}\\
    & \leq 2\frac{\left(p_1 e\right)^{|S|}}{e \left(1 - p_1 e d\right) \left(1 - p_1^{c_0}\right)}.
  \end{align*}
  If $p_{\mathrm{synd}}$ is sufficiently small, then we can make this $\leq \left(K p_{\mathrm{synd}}^{1/c_0}\right)^{|S|}$.

\end{proof}

\section{Parallel version for \Cref{algo decodage qec}}
\label{sec:parallel}

At each step in \Cref{algo decodage qec}, we flip a subset of $F_g \subseteq \Gamma_X(g)$ for some $X$-type generator $g$.  We can parallelize this procedure by flipping at each step the subsets of $F_g$ for a large number of $X$-type generators. However, we have to pay attention to the fact that the $\sigma_X(F_g)$ could intersect. In order to avoid that, we introduce a coloring of the $X$-type generators: if $g_1$ and $g_2$ have the same color then for any $F_1 \subseteq \Gamma_X(g_1)$ and $F_2 \subseteq \Gamma_X(g_2)$: $\sigma_X(F_1) \cap \sigma_X(F_2) = \varnothing$.

\begin{lemma}\label{partition}
  The set of $X$-type generators can be partitioned in less than $\chi$ sets of generators ($\chi$ from \Cref{subsec:notations}) such that each subset can be decoded in parallel.
  \\In other words, there is a partition $C_Z = \biguplus_{k=1}^{\chi} C_Z^k$ of the $X$-type generators such that for any $v_1, v_2 \in V$ and for any $c_1, c_2 \in C_Z$, if $v_1$ and $v_2$ are in the support of the same $Z$-type generator, $v_1$ is in the support of $c_1$ and $v_2$ is in the support of $c_2$ then $c_1$ and $c_2$ cannot be in same subset $C_Z^k$.
\end{lemma}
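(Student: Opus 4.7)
The plan is to recast the partitioning problem as a graph coloring problem on the conflict graph $H$ with vertex set $C_Z$, where we put an edge between $c_1 \neq c_2$ whenever there exist $v_1 \in \Gamma_Z(c_1)$, $v_2 \in \Gamma_Z(c_2)$ and some $c' \in C_X$ with $v_1, v_2 \in \Gamma_X(c')$. A proper coloring of $H$ directly produces the desired partition $C_Z = \biguplus_{k} C_Z^k$. Since a trivial greedy argument shows that the chromatic number of any graph is at most $\Delta+1$, it is enough to prove $\Delta(H) \leq \chi - 1$.

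To bound $\Delta(H)$, I would exploit the explicit hypergraph product structure recalled in \Cref{subs:qec}. Write $c_1 = (b_1, a_1)$ and $c_2 = (b_2, a_2)$ in $C_Z = B \times A$, and a potential common $Z$-type generator as $c' = (\alpha, \beta) \in A \times B$. Using the descriptions of $\Gamma_Z$ and $\Gamma_X$, I would go through the four cases determined by whether $v_1, v_2 \in A^2$ or $B^2$; in each case the existence of $v_1 \in \Gamma_Z(c_1) \cap \Gamma_X(c')$ and $v_2 \in \Gamma_Z(c_2) \cap \Gamma_X(c')$ reduces to the combined condition $\alpha \in \Gamma_G(b_1) \cap \Gamma_G(b_2)$ and $\beta \in \Gamma_G(a_1) \cap \Gamma_G(a_2)$. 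Conversely, given any such $\alpha, \beta$, the triple $c' = (\alpha,\beta)$, $v_1 = (\alpha, a_1)$, $v_2 = (\alpha, a_2)$ witnesses a conflict. Thus $c_1$ and $c_2$ are adjacent in $H$ iff $b_1, b_2$ share a neighbor in $G$ \emph{and} $a_1, a_2$ share a neighbor in $G$.

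It then remains to count. The number of $b_2 \in B$ such that $\Gamma_G(b_1) \cap \Gamma_G(b_2) \neq \varnothing$ is at most $d_B(d_A-1) + 1$: each of the $d_B$ neighbors $\alpha$ of $b_1$ in $A$ has $d_A$ neighbors in $B$, one of which is $b_1$ itself, contributing at most $d_B(d_A-1)$ vertices $b_2 \neq b_1$, to which one adds the case $b_2 = b_1$. By the symmetric count, the number of admissible $a_2 \in A$ is at most $d_A(d_B-1)+1$. Multiplying gives at most $\chi$ valid pairs $(b_2, a_2)$, one of which is $(b_1, a_1) = c_1$, so $\Delta(H) \leq \chi - 1$ and a greedy coloring of $H$ produces the claimed partition.

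I do not expect any real obstacle in this argument: the only slightly tedious step is the four-case analysis characterizing conflicts, which is essentially forced by the bipartite description of $G_X$ and $G_Z$, and the counting is then completely elementary.
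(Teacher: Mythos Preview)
Your approach is essentially identical to the paper's: build the conflict graph on $C_Z$ with the adjacency $\Gamma_X(\Gamma_Z(c_1))\cap\Gamma_X(\Gamma_Z(c_2))\neq\varnothing$, then invoke the greedy bound $\chi(H)\le\Delta(H)+1$. The paper's proof simply asserts the degree bound, whereas you supply the explicit verification via the product structure and the count $\Delta(H)\le (d_B(d_A-1)+1)(d_A(d_B-1)+1)-1=\chi-1$; this extra detail is correct and fills in exactly what the paper leaves implicit.
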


\begin{proof} 
  We explicitly construct a partition of the $X$-type generators consisting of a constant number of sets, each containing generators with non-intersecting syndromes. From the two factor graphs $G_X$ and $G_Z$, we construct the graph $G_0$ whose vertex set is $C_Z$ the set of $X$-type generators and whose incidence relation is given for $c_1 \in C_Z$ by:
  \begin{align*}
    \Gamma(c_1) = \left\{c_2 \in C_Z: \Gamma_X(\Gamma_Z(c_1)) \cap \Gamma_X(\Gamma_Z(c_2)) \neq \varnothing\right\}.
  \end{align*}
  In words, two $X$-type generators $c_1$ and $c_2$ are adjacent in $G_0$ when we can find a qubit $v_1$ in the support of $c_1$ and a qubit $v_2$ in the support of $c_2$ which are in the support of the same $Z$-type generator. The sets $C_Z^1, \ldots, C_Z^{\chi}$ are defined as a coloring of the graph $G_0$. The chromatic number is upper bounded by the degree of the graph (plus 1) thus it is upper bonded by $\chi$ from \Cref{subsec:notations}.
\end{proof}

This leads to \Cref{algo decodage qec parallel} below which is a parallelized version of \Cref{algo decodage qec}. It is important to note a difference with \Cref{algo decodage qec} though: instead of running until no flips reduce the syndrome weight, \Cref{algo decodage qec parallel} runs for a fixed number of steps and may have some (final) steps that do not reduce the syndrome weight.

\begin{algorithm}[H]
  \caption{: Parallel small-set-flip decoding algorithm for quantum expander codes of parameter $\beta \in (0; 1]$ and with $f \in \bN$ steps.
  }\label{algo decodage qec parallel}
    {\bf INPUT:} $\sigma \subseteq {C}_X$ a syndrome where $\sigma = \sigma_X(E) \oplus D$ with $E \subseteq {V}$ an error on qubits and $D \subseteq C_X$ an error on the syndrome 
    \\{\bf OUTPUT:} $\hat{E}\subseteq {V}$, a guess for the error pattern (alternatively, a set of qubits to correct)

    \hrule
    \begin{algorithmic}
      \State{$\displaystyle \hat{E}_0 = 0$ ; $\displaystyle \sigma_0 = \sigma$}
      \For{$i \in \llbracket 0; f - 1 \rrbracket$}
      \State{$k = i \mod{\chi}$ \text{\qquad // Current color}}
      \ForP{$\displaystyle g \in C_Z^k$}
      \If{$\displaystyle \cF_g := \left\{F \in \cF: F \subseteq \Gamma_Z(g) \And \Delta(\sigma_i, F) \geq \beta |\sigma_X(F)|\right\} \neq \varnothing$}
      \State{$\displaystyle F_g =$ any element of $\cF_g$}
      \Else{\State{$\displaystyle F_g = \varnothing$}}
      \EndIf
      \EndForP
      \State{$\displaystyle F_i = \bigoplus_{g \in C_{Z}^k} F_g$}
      \State{$\displaystyle \hat{E}_{i+1} = \hat{E}_i \oplus F_i$}
      \State{$\displaystyle \sigma_{i+1} = \sigma_i \oplus  \sigma_X (F_i)$ \qquad // $\displaystyle \sigma_{i+1} = \sigma_X (E \oplus \hat{E}_{i+1}) \oplus D$}
      \EndFor

    \end{algorithmic}
\end{algorithm}

\begin{remark}
  \label{rq:eq algo}
  The sequence of flips performed by \Cref{algo decodage qec parallel} could have been performed by \Cref{algo decodage qec}. As a consequence, if we define $U = E \cup F_0 \cup \ldots \cup F_{f-1}$ the execution support then \Cref{linear number of flips synd error} implies that the set $U \cup D$ is $2 \alpha_0$-subset of $E \cup D$.
\end{remark}

If we fix the number of steps of \Cref{algo decodage qec parallel} to be $f = f_0(|\sigma|)$ where $f_0$ is defined in \Cref{subsec:notations}, then it is clear that the complexity is logarithmic in the size of the input. The tricky part is to show that the output is local stochastic and the proof will follow the same scheme than the proof for \Cref{algo decodage qec}: \Cref{lem:np small witness} is the counterpart of \Cref{lem:technical lemma noisy synd component}, \Cref{lem:np witness} is the counterpart of \Cref{lem:technical lemma noisy synd} and \Cref{thm:np correction} is the counterpart of \Cref{thm:correction}.

\begin{theorem}\label{thm:np correction}
  We use the notations of \Cref{subsec:notations} with $\delta < 1/16$, $\beta \in (0; \beta_1)$ and $c > c_2 + 1$ running \Cref{algo decodage qec parallel} with $f = f_0(|\sigma|)$ steps instead of \Cref{algo decodage qec}.
  \\There exist constants $p_0 > 0, p_1 > 0$ such that the following holds. Suppose the pair $(E, D)$ satisfies a local stochastic noise model with parameter $(p_{\mathrm{phys}}, p_{\mathrm{synd}})$ where $p_{\mathrm{phys}} < p_0$ and $p_{\mathrm{synd}} < p_1$. Then there exists an event $\mathsf{succ}$ that has probability $1 - \reducedProba$ and a random variable $E_{\textrm{ls}}$ that is equivalent to $E \oplus \hat{E}$ such that conditioned on $\mathsf{succ}$, $E_{\textrm{ls}}$ has a local stochastic distribution with parameter $p_{\mathrm{ls}} = K p_{\mathrm{synd}}^{1/c_3}$ where $K$ is a constant independent of $p_{\mathrm{synd}}$.
\end{theorem}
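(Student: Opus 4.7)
The proof will closely follow the template of \Cref{thm:correction}. First I would set $E_{\mathrm{ls}}$ to be a minimum-weight representative of the class $E \oplus \hat{E} + \cC_Z^\perp$ and define the event $\mathsf{succ}$ to be $\textrm{MaxConn}_{\alpha_0}(E \cup D) \leq \gamma_0 \sqrt{n}$. Thanks to \Cref{rq:eq algo}, the conclusion of \Cref{linear number of flips synd error} ($U \cup D$ being a $2\alpha_0$-subset of $E \cup D$) still applies to \Cref{algo decodage qec parallel}, so \Cref{lem:perco} gives $\mathbb{P}[\mathsf{succ}] \geq 1 - \reducedProba$ whenever $p_{\mathrm{phys}}$ and $p_{\mathrm{synd}}$ are below suitable thresholds.

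Conditioned on $\mathsf{succ}$, the goal is to establish the parallel analogues of \Cref{lem:technical lemma noisy synd component} and \Cref{lem:technical lemma noisy synd} with $c_0$ replaced by $c_3$: for every $S \subseteq E_{\mathrm{ls}}$ there exists a $c_3$-witness for $(S,D)$. With this in hand, the final counting step at the end of the proof of \Cref{thm:correction} goes through verbatim --- combining the bound $|\{W \in \mathcal{M}(S) : |W|=t\}| \leq (ed)^t/(ed^{|S|})$ of \cite{gottesman2014fault} with the local stochastic bound $\mathbb{P}[T \subseteq D] \leq p_{\mathrm{synd}}^{|T|}$ and the union bound over $W \in \mathcal{M}(S)$ --- to show that $E_{\mathrm{ls}}$ is local stochastic with parameter $K p_{\mathrm{synd}}^{1/c_3}$.

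The technical heart of the argument is the parallel version of \Cref{lem:technical lemma noisy synd component}. Assuming $E \oplus \hat{E}$ is reduced with $|E \oplus \hat{E}| \leq \gamma_0 \sqrt{n}$, I would take $W$ to be the union of the connected components (in the syndrome adjacency graph $\cG$) of $E \oplus \hat{E}$ that meet $S$. By \Cref{rq:eq algo} together with the locality property of \Cref{lem:local}, the restriction of the parallel execution to $W$ is itself a valid parallel execution on the input $(E \cap W, D \cap \Gamma_X(W))$ whose remaining error is exactly $W$. To turn this into the witness inequality $|W| \leq c_3 |D \cap \Gamma_X(W)|$, I would combine two ingredients: (i) a cumulative-work estimate bounding the total size of flips performed on $W$ by $(2/(\beta d_A))$ times the local syndrome decrease, using $|F| \leq (2/d_A)|\sigma_X(F)|$ for $F \in \cF$ together with the decrease condition $\Delta(\sigma_i, F_i) \geq \beta |\sigma_X(F_i)|$; and (ii) a saturation argument showing that after $f_0(|\sigma|) = \Theta(\log |\sigma|)$ rounds, no valid flip supported inside $\Gamma_X(W)$ remains, so that \Cref{prop:error_qubits_syn} applies in the contrapositive on $W$ and, combined with the robustness bound $|\sigma_X(W)| \geq (\beta_0 d_A/2)|W|$ of \Cref{lem:robust}, yields a linear relation between $|W|$ and $|D \cap \Gamma_X(W)|$.

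The main obstacle is precisely the saturation step (ii): unlike the sequential algorithm, \Cref{algo decodage qec parallel} halts after a preassigned number of steps, so one cannot simply invoke ``no valid flip remains'' at the end. The key leverage is that $f_0(|\sigma|)$ is tuned through $\eta$ so that it dwarfs the number of rounds needed to exhaust all valid flips in the much smaller local region $\Gamma_X(W)$; combined with the fact that each round in which some flip is still available shrinks the (relevant) syndrome by at least a factor $\eta$, this forces the local dynamics on $W$ to have saturated. The explicit constant $c_3 = 2(1+c)/(\beta d_A)$ then arises from tracking the two contributions to $|W|$ --- the initial qubit error $|E \cap W|$ absorbed by the cumulative work bound, and the residual term governed by $|D \cap \Gamma_X(W)|$ --- in this combined argument, with the factor $(1+c)$ reflecting the slack built into the choice $c > c_2 + 1$.
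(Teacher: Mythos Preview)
Your overall strategy matches the paper's, but step (ii) has a genuine gap. The claim that after $f_0(|\sigma|)$ rounds ``no valid flip supported inside $\Gamma_X(W)$ remains'' does not follow from the geometric-decrease mechanism you invoke. The contraction lemma (\Cref{lem:pn synd decrease}) only guarantees $|\sigma_{i+\chi}| \leq \eta |\sigma_i|$ \emph{while} $|\sigma_i| \geq c |D|$; once the local syndrome drops below $c|D\cap\Gamma_X(W)|$, the shrinkage may stop even though valid flips persist. So you cannot conclude saturation and hence cannot invoke \Cref{prop:error_qubits_syn} in the contrapositive on your $W$. Relatedly, your $W$ (the components of $E\oplus\hat E$ meeting $S$) need not be a union of connected components of the full execution support $U$, so \Cref{lem:local} does not directly justify restricting the parallel run to $W$.

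The paper resolves this not by proving saturation but by changing the witness: it defines a \emph{time-indexed} family
\[
U_i := E_{\mathrm R} \cup F_i \cup \dots \cup F_{f-1}, \qquad W_i := \text{components of } U_i \text{ meeting } S, \qquad s_i := \sigma_i \cap \Gamma(W_i),
\]
and uses the contraction lemma to find some intermediate $i$ with $|s_i| < c\,|D\cap\Gamma(W_i)|$ (such an $i$ must exist since otherwise $|s_i|$ would shrink geometrically below $1$). At \emph{that} step $i$, one bounds $|W_i| \leq |E_{\mathrm R}\cap W_i| + \sum_{k\geq i}|F_k\cap W_i|$, controls the first term by robustness (\Cref{lem:robust}) and the second by the cumulative-work estimate from step $i$ onward, and combines with $|s_i| < c\,|D\cap\Gamma(W_i)|$ to get $|W_i| \leq c_3 |D\cap\Gamma(W_i)|$. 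The witness is $W_i$, not your $W = W_f$; this is exactly where the factor $(1+c)$ in $c_3$ appears. Once you make this change, the rest of your outline (local stochastic counting, reduction to the $\textrm{MaxConn}$ event) is correct and identical to the paper's.
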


About the parameter $c$, \Cref{algo decodage qec parallel} depends on it because of the number of steps $f = f_0(|\sigma|)$. When $c$ increases, $\eta$ decreases and the number of steps $f_0(|\sigma|)$ decreases but the drawback is that $c_3$ increases and thus the local stochastic parameter $p_{\mathrm{ls}}$ associated to the error $E_{\textrm{ls}}$ gets worse. On the other hand, when $c$ is chosen close to $c_2 + 1$, $p_{\mathrm{ls}}$ gets better but the number of steps grows.

The proof of \Cref{thm:np correction} proceeds in the same way as the proof of \Cref{thm:correction}. Indeed, it is sufficient to find a reduced error $E_\textrm{ls}$ equivalent to $E \oplus \hat{E}$ and such that for all $S \subseteq E_\textrm{ls}$ there is a $c_3$-witness for $(S,D)$ in the sense of \Cref{def:witness}. As for the sequential algorithm, we start by establishing the existence of a witness under the assumptions that the remaining error $E \oplus \hat{E}$ is reduced with weight smaller than $\gamma_0 \sqrt{n}$. This will be done in \Cref{lem:np small witness} below, but we will first establish two useful properties of \Cref{algo decodage qec parallel}: it is local (\Cref{lem:np locality}) and at any step $i$ of the algorithm, we either have $|\sigma_i| < c |D|$ or the syndrome weight will decrease rapidly in the following steps (\Cref{lem:pn synd decrease}).

\begin{lemma}[Locality for \Cref{algo decodage qec parallel}]\label{lem:np locality}
  We use the notations of \Cref{subsec:notations} running \Cref{algo decodage qec parallel} with $f$ steps instead of \Cref{algo decodage qec}.
  \\For any $K \subseteq U$ with $\Gamma(K) \cap \Gamma(U \setminus K) = \varnothing$ in $\cG$ (for the same $\cG$ as \Cref{lem:local}), there is a valid execution of \Cref{algo decodage qec parallel} with $f$ steps on input $(E \cap K, D \cap \Gamma(K))$ which flips $F_0 \cap K, \ldots, F_{f-1} \cap K$.
\end{lemma}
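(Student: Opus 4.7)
The plan is to mirror the proof of the sequential locality lemma \Cref{lem:local}, proceeding by induction on the step index $i \in \{0, \ldots, f\}$. Let $\sigma_i$ denote the syndrome after step $i$ of the full execution on input $(E,D)$ and let $\sigma_i'$ denote the syndrome after step $i$ of the restricted execution on input $(E \cap K, D \cap \Gamma(K))$ that I will construct. The induction hypothesis to maintain is $\sigma_i' = \sigma_i \cap \Gamma_X(K)$, together with $F_j^{\text{res}} = F_j \cap K$ for every previous step $j < i$.

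The core ingredient is two structural consequences of the separation hypothesis $\Gamma(K) \cap \Gamma(U \setminus K) = \varnothing$ in $\cG$, combined with $K \subseteq U$ and the fact that every $X$-type generator of $\cQ$ has weight $d_A + d_B \geq 3$: (a) for every $v \in U \setminus K$ one has $\Gamma_X(v) \cap \Gamma_X(K) = \varnothing$, because any common $Z$-type generator (viewed as a vertex of $\cG$) would lie in $\Gamma(K) \cap \Gamma(U \setminus K)$; and (b) for every $g \in C_Z$, the set $\Gamma_Z(g) \cap U$ is entirely contained in $K$ or entirely in $U \setminus K$, since if $u \in K$ and $v \in U \setminus K$ both belonged to $\Gamma_Z(g)$ then any third qubit of $\Gamma_Z(g)$ would be adjacent in $\cG$ to qubits on both sides, again violating the hypothesis. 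Property (a) immediately gives $\sigma_X(E) \cap \Gamma_X(K) = \sigma_X(E \cap K)$, which settles the base case $\sigma_0' = \sigma_0 \cap \Gamma_X(K)$.

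For the inductive step at color $k = i \bmod \chi$, I specify a valid execution generator by generator. For each $g \in C_Z^k$ with $\Gamma_Z(g) \cap K \neq \varnothing$, property (b) forces $F_g \subseteq K$ in the full execution; since $\sigma_X(F_g) \subseteq \Gamma_X(K)$, the induction hypothesis yields $\Delta(\sigma_i', F_g) = \Delta(\sigma_i, F_g) \geq \beta |\sigma_X(F_g)|$, so $F_g$ is a legitimate element of $\cF_g$ in the restricted execution and I pick it. For each $g \in C_Z^k$ with $\Gamma_Z(g) \cap K = \varnothing$, I use the freedom in the ``any element of $\cF_g$'' step to choose $F_g^{\text{res}} = \varnothing$ (which is always a member of $\cF_g$). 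Aggregating over the color yields $F_i^{\text{res}} = F_i \cap K$.

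To close the induction one more step, I use the identity $\sigma_X(F_i \cap K) = \sigma_X(F_i) \cap \Gamma_X(K)$, which follows from (a) applied to every $v \in F_i \setminus K \subseteq U \setminus K$. Then $\sigma_{i+1}' = \sigma_i' \oplus \sigma_X(F_i \cap K) = (\sigma_i \cap \Gamma_X(K)) \oplus (\sigma_X(F_i) \cap \Gamma_X(K)) = \sigma_{i+1} \cap \Gamma_X(K)$, completing the induction. The only conceptually nontrivial step is property (b); everything else is routine syndrome bookkeeping using (a), and the coloring from \Cref{partition} plays no role in the argument beyond guaranteeing that the simultaneous flips within a single parallel step act on disjoint syndrome bits so that aggregation commutes with restriction to $\Gamma_X(K)$.
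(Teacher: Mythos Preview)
The paper states \Cref{lem:np locality} without proof, treating it as the evident parallel analogue of the sequential locality lemma (\Cref{lem:local}), which is itself imported from~\cite{fawzi2018efficient}. Your write-up supplies exactly the expected argument: induction on the step index, maintaining $\sigma_i' = \sigma_i \cap \Gamma_X(K)$, with the two structural facts (a) and (b) doing the work. The reasoning is correct. Two small remarks: first, the coloring is not merely cosmetic here---it is what guarantees that the per-generator sets $F_g$ at a given step are pairwise disjoint, hence $F_g \subseteq F_i \subseteq U$, which you need in order to invoke (b) and conclude $F_g \subseteq K$; second, your appeal to ``$\varnothing$ is always a member of $\cF_g$'' is literally correct as $\cF$ is written (the inequality $|\sigma_X(\varnothing)| \geq \tfrac{d_A}{2}|\varnothing|$ holds trivially), and this cleanly handles generators with $\Gamma_Z(g)\cap K=\varnothing$ without any further case analysis.
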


\begin{lemma}\label{lem:pn synd decrease}
  We use the notations of \Cref{subsec:notations} with $\delta < 1/16$, $\beta \in (0; \beta_1)$ and $c > c_2 + 1$ running \Cref{algo decodage qec parallel} with $f$ steps instead of \Cref{algo decodage qec}.
  \\Suppose $|U| \leq \gamma_0 \sqrt{n}$ and that for some $i \in \llbracket 0; f-\chi \rrbracket$ we have $|\sigma_i| \geq c |D|$ then $|\sigma_{i + \chi}| \leq \eta |\sigma_i|$ where $\eta$ is defined in \Cref{subsec:notations}.
\end{lemma}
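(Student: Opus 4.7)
The plan is to combine \Cref{prop:error_qubits_syn} applied at step $i$ with a pigeonhole over the $\chi$ colour classes, then convert the resulting colour-wise potential into an actual syndrome-weight decrease at the step within $[i,i+\chi)$ where that colour is processed.

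First I verify the hypotheses of \Cref{prop:error_qubits_syn} at step $i$. The while-loop condition forces $|\sigma_i| \geq |\sigma_{i+1}| \geq \cdots \geq |\sigma_{i+\chi}|$, and $E_{i+k} \subseteq U$ for every $k \in [0,\chi]$, so $|E_{i+k}| \leq |U| \leq \gamma_0\sqrt{n}$. The hypothesis $|\sigma_i| \geq c|D|$ combined with $|\sigma_X(E_i)| \geq |\sigma_i| - |D|$ and $|D \cap \sigma_X(E_i)| \leq |D|$ yields $|\sigma_X(E_i)| - c_2|D \cap \sigma_X(E_i)| \geq \frac{c-1-c_2}{c}|\sigma_i|$, which is positive since $c > c_2+1$. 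Applying \Cref{prop:error_qubits_syn} then gives
\begin{equation*}
\sum_{F \in G_i} |\sigma_X(F)| \;\geq\; c_1 \frac{c-1-c_2}{c}\,|\sigma_i|,
\end{equation*}
where $G_i = \{F \in \cF : \Delta(\sigma_i,F) \geq \beta|\sigma_X(F)|\}$. Each $F \in G_i$ sits inside $\Gamma_Z(g)$ for some $g \in C_Z$, which belongs to a unique colour class $C_Z^{k'}$. Partitioning $G_i = \bigcup_{k'=0}^{\chi-1} G_i^{k'}$ by colour and pigeonholing, there is a colour $k^{*}$ such that $T_{k^{*}} := \sum_{F \in G_i^{k^{*}}} |\sigma_X(F)| \geq \frac{c_1(c-1-c_2)}{c\chi}|\sigma_i|$.

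Second, within $[i, i+\chi)$ the algorithm processes colour $k^{*}$ at a unique step $i+k^{\dagger}$, and I relate $T_{k^{*}}$ to the decrease at that step. Two ingredients drive this conversion. (i) The structural bound $|\sigma_X(F)| \leq d_A|F\cap A^2| + d_B|F \cap B^2| \leq 2 d_A d_B$ for any $F \subseteq \Gamma_Z(g)$, obtained from $|\Gamma_Z(g)\cap A^2|=d_B$, $|\Gamma_Z(g)\cap B^2|=d_A$ and $|\Gamma_X(v)| \leq d_B$. (ii) The colouring property of \Cref{partition}, which forces the sets $\sigma_X(F_g)$ for $g \in C_Z^{k^{*}}$ to be pairwise disjoint and hence makes the step-$(i+k^{\dagger})$ decrease equal to $\sum_g \Delta(\sigma_{i+k^{\dagger}}, F_g) \geq \beta\sum_g |\sigma_X(F_g)|$, the sums running over the $g \in C_Z^{k^{*}}$ with $\cF_g \neq \varnothing$ at step $i+k^{\dagger}$. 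Combining (i), (ii) and the definitional bound $|\sigma_X(F_g)| \geq d_A|F_g|/2 \geq d_A/2$ provided by $F_g \in \cF$ gives $|\sigma_{i+k^{\dagger}}| - |\sigma_{i+k^{\dagger}+1}| \geq \frac{\beta T_{k^{*}}}{d_A d_B}$, which together with the pigeonhole bound on $T_{k^{*}}$ and the monotonicity $|\sigma_{i+\chi}| \leq |\sigma_{i+k^{\dagger}+1}|$ produces exactly $|\sigma_{i+\chi}| \leq \eta|\sigma_i|$.

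The main obstacle is making the transfer from step $i$ to step $i+k^{\dagger}$ rigorous, because flips performed at intermediate steps can modify $\Delta(\sigma,F)$ on elements $F \in G_i^{k^{*}}$ and potentially remove them from $\cF_g$. I would handle this by a charging argument: any modification of $\sigma \cap \sigma_X(F)$ between steps $i$ and $i+k^{\dagger}$ that invalidates $F$ must come from an intermediate flip whose $X$-syndrome overlaps $\sigma_X(F)$, and the overlap itself already translates into an actual syndrome-weight drop at those intermediate steps, which is credited against the $(1-\eta)|\sigma_i|$ budget. Equivalently, I would split cases: either $|\sigma_{i+k^{\dagger}}|$ has already dropped below $\eta|\sigma_i|$ and monotonicity finishes the argument, or enough of the original potential $T_{k^{*}}$ survives to step $i+k^{\dagger}$ for the decrease estimate above to apply verbatim.
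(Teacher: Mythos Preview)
Your strategy is the paper's strategy: apply \Cref{prop:error_qubits_syn} at step $i$, pigeonhole over the $\chi$ colours to find a colour $k^*$ carrying potential $T_{k^*}\geq \frac{c_1(c-1-c_2)}{c\chi}|\sigma_i|$, then convert $T_{k^*}$ into an actual syndrome drop by the time colour $k^*$ is processed. The place where your write-up is not yet a proof is the conversion step.

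First, the inequality you state, $|\sigma_{i+k^{\dagger}}|-|\sigma_{i+k^{\dagger}+1}|\geq \beta T_{k^*}/(d_Ad_B)$, is the wrong target: the single-step drop need not be this large, since the intermediate steps may already have consumed most of the potential. What the paper proves (and what suffices) is the \emph{cumulative} bound $|\sigma_i|-|\sigma_j|\geq \beta T_{k^*}/(d_Ad_B)$, where $j-1$ is the step processing colour $k^*$. Your case-split afterthought is moving toward this, but you should aim for the cumulative inequality directly.

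Second, the charging argument needs one concrete combinatorial fact that you only hint at. By the colouring, the sets $\sigma_X(F)$ for $F\in G_i^{k^*}$ are pairwise disjoint; hence every check-node touched by an intermediate flip $F_1$ can change $\Delta(\sigma,\cdot)$ for at most one element of $G_i^{k^*}$. So the number of $F\in G_i^{k^*}$ whose $\Delta$ is altered between step $i$ and step $j-1$ is at most $\sum_{F_1}|\sigma_X(F_1)|\leq \frac{1}{\beta}(|\sigma_i|-|\sigma_{j-1}|)$. Every surviving $F$ witnesses $\cF_g\neq\varnothing$ at step $j-1$, and each such $g$ contributes decrease $\geq\beta$, giving $|\sigma_{j-1}|-|\sigma_j|\geq\beta\bigl(|G_i^{k^*}|-\sum_{F_1}|\sigma_X(F_1)|\bigr)$. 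Combining with $T_{k^*}\leq d_Ad_B\,|G_i^{k^*}|$ yields the cumulative bound.

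Finally, your bound $|\sigma_X(F)|\leq 2d_Ad_B$ is loose by a factor of two: for $g=(b,a)$ one has $\Gamma_X(\Gamma_Z(g))=\Gamma(b)\times\Gamma(a)$, of size exactly $d_Ad_B$, so $|\sigma_X(F)|\leq d_Ad_B$. With the loose bound you would obtain $1-\tfrac{1}{2}(1-\eta)$ rather than $\eta$.
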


\begin{proof}
  We define the error $E_i := E \oplus F_0 \oplus \ldots \oplus F_{i-1}$. We have:
  \begin{align}\label{eq1}
    |\sigma_X(E_i)|
    \geq |\sigma_i| - |D|
    \geq (c-1) |D|
    > c_2 |D \cap \sigma_X(E_i)|.
  \end{align}
Since  $|E_i| \leq |U| \leq \gamma_0 \sqrt{n}$, we can apply \Cref{prop:error_qubits_syn} to $E_i$: there exists a color $k \in \llbracket 0;\chi-1 \rrbracket$, there exists $G \subseteq \cF_k := \cF \cap \{\Gamma_X(g): g \in C_Z^k\}$ such that $\Delta(\sigma_i, F) \geq \beta |\sigma_X(F)|$ for all $F \in G$ and:
  \begin{align}\label{eq2}
    & \sum_{F \in G} |\sigma_X(F)|
    \geq \frac{c_1}{\chi}\left[\left|\sigma_X(E_i)\right| - c_2 |D \cap \sigma_X(E_i)|\right].
  \end{align}
  We set $j$ as the integer in $\llbracket i + 1; i + \chi \rrbracket$ such that $j-1$ is a step of the algorithm where we flip the color $k$. In other words, $k = (j-1) \mod \chi$.
  \\We define $G_1 \subseteq \cF$ the set of all flips done during the steps $i, \ldots, j-2$ and $G_2 \subseteq \cF_{k}$ the set of all flips done at step $j-1$. By the coloring property, $\sigma_X(F) \cap \sigma_X(F') = \varnothing$ for all $F, F' \in G$ thus for all $F_1 \in G_1$, each check-node of $\sigma_X(F_1)$ is in the syndrome of at most one element of $G$. In other words, for a given syndrome $\sigma$ and for $F_1 \in G_1$, the number of $F \in G$ such that $\Delta(\sigma, F) \neq \Delta(\sigma \oplus \sigma_X(F_1), F)$ is at most $|\sigma_X(F_1)|$. As a consequence:
  \begin{align}\label{eq3}
    |G_2| \geq |G| - \sum_{F_1 \in G_1} |\sigma_X(F_1)|.
  \end{align}
  Moreover:
  \begin{align}\label{eq4}
    |\sigma_{i}| - |\sigma_{j-1}|
    \geq \sum_{F_1 \in G_1} \beta |\sigma_X(F_1)|,
  \end{align}
 and
  \begin{align}\label{eq5}
    |\sigma_{j-1}| - |\sigma_{j}|
    \geq \sum_{F_2 \in G_2} \beta |\sigma_X(F_2)|
    \geq \beta |G_2|.
  \end{align}
  Combining \cref{eq3,eq4,eq5} we get:
  \begin{align*}
    \sum_{F \in G} |\sigma_X(F)|
    \leq \sum_{F \in G} d_A d_B
    = d_A d_B |G|
    \leq d_A d_B \left(|G_2| + \sum_{F_1 \in G_1} |\sigma_X(F_1)|\right)
    \leq \frac{d_A d_B}{\beta} (|\sigma_i| - |\sigma_j|).
  \end{align*}
  By \cref{eq2}:
  \begin{align*}
    |\sigma_{i}| - |\sigma_j|
    \geq \frac{\beta}{d_A d_B} \sum_{F \in G} |\sigma_X(F)|
    \geq \frac{\beta c_1}{d_A d_B \chi} \left[\left|\sigma_X(E_i)\right| - c_2 |D \cap \sigma_X(E_i)|\right].
  \end{align*}
  Using $|\sigma_X(E_i)| \geq (c-1) |D| \geq (c-1) |D \cap \sigma_X(E_i)|$ from \cref{eq1} and the hypothesis $|\sigma_i| \geq c |D|$:
  \begin{align*}
    |\sigma_{i}| - |\sigma_j|
    \geq \frac{c (1 - \eta)}{c - 1} \left|\sigma_X(E_i)\right|
    \geq \frac{c (1 - \eta)}{c - 1} (|\sigma_i| - |D|)
    \geq (1 - \eta) |\sigma_i|.
  \end{align*}
  Thus $|\sigma_{i + \chi}| \leq |\sigma_j| \leq \eta |\sigma_i|$.
\end{proof}

\begin{lemma}\label{lem:np small witness}
  We use the notations of \Cref{subsec:notations} with $\delta < 1/16$, $\beta \in (0; \beta_1)$ and $c > c_2 + 1$ running \Cref{algo decodage qec parallel} with $f = f_0(|\sigma|)$ steps instead of \Cref{algo decodage qec}.
  \\Suppose $|U \cup D| \leq \gamma_0 \sqrt{n}$ and the remaining error $E \oplus \hat{E}$ is reduced then for all $S \subseteq E \oplus \hat{E}$, there is a $c_3$-witness $W$ for $(S,D)$ with $W \subseteq U$.
\end{lemma}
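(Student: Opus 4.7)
The approach mirrors the sequential proof of \Cref{lem:technical lemma noisy synd component}, but replaces its ``no-valid-flip'' termination argument with the geometric syndrome contraction of \Cref{lem:pn synd decrease}. Set $E_r = E \oplus \hat{E}$ and take $W$ to be the union of the connected components of $E_r$ in $\cG$ that contain at least one element of $S$. Since $E_r \subseteq U$, this choice automatically gives $W \subseteq U$; by construction $S \subseteq W$ and every connected component of $W$ meets $S$, so $W \in \mathcal{M}(S)$. Moreover $W \subseteq E_r$ is reduced by \Cref{lem:subset reduced}, with $|W| \leq |U| \leq \gamma_0 \sqrt{n}$, so that \Cref{lem:robust} is available for $W$.

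Invoking locality via \Cref{lem:np locality} with $K = W$, the parallel algorithm acting on the restricted input $(E \cap W, \tilde{D})$, with $\tilde{D} := D \cap \Gamma_X(W)$, performs the flips $F_i \cap W$ across the same $f$ rounds as the global run, with execution support contained in $U \cap W$. Denote by $\sigma_i^W$ the syndrome of this restricted execution at step $i$; the remaining error of the restricted run is $E_r \cap W = W$, so $\sigma_f^W = \sigma_X(W) \oplus \tilde{D}$, and $|U \cap W| \leq \gamma_0 \sqrt{n}$ makes the hypothesis of \Cref{lem:pn synd decrease} available for the restricted execution.

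The crux is to iterate \Cref{lem:pn synd decrease} on the restricted run: as long as $|\sigma_i^W| \geq c|\tilde{D}|$, the restricted syndrome contracts by a factor $\eta$ every $\chi$ steps, so after at most $\chi \log_{1/\eta}(|\sigma_0^W|/(c|\tilde{D}|))$ steps the condition must fail, and monotonicity of the syndrome weight under flips then pins $|\sigma_f^W| < c|\tilde{D}|$. Locality also gives $|\sigma_0^W| \leq |\sigma|$, so the $f = f_0(|\sigma|) = \lceil \chi \log_{1/\eta}(|\sigma|) \rceil$ rounds actually run by the algorithm are sufficient to perform all these iterations. From $|\sigma_f^W| < c|\tilde{D}|$ the triangle inequality yields $|\sigma_X(W)| \leq |\sigma_f^W| + |\tilde{D}| < (c+1)|\tilde{D}|$, and applying \Cref{lem:robust} to the reduced $W$ of weight $\leq \gamma_0 \sqrt{n}$ gives $|W| \leq \tfrac{2}{\beta_0 d_A}|\sigma_X(W)| < \tfrac{2(c+1)}{\beta d_A}|\tilde{D}| = c_3 |\tilde{D}|$, using $\beta < \beta_0$.

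The main obstacle is the iteration step: one must track the floor/ceiling in $f_0$ carefully enough to guarantee that $\lceil \chi \log_{1/\eta}(|\sigma|) \rceil$ rounds really do drive $|\sigma_i^W|$ below the threshold $c|\tilde{D}|$, which is precisely where the assumption $c > c_2 + 1$ buys the constant slack needed to absorb the integer rounding. The case $\tilde{D} = \varnothing$ deserves separate attention, but there the same geometric contraction eventually pushes the integer-valued restricted syndrome to $0$, forcing $\sigma_X(W) = 0$ and hence $|W| = 0$ by \Cref{lem:robust}, so the witness inequality again holds trivially.
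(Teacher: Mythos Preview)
Your argument has a genuine gap in the application of \Cref{lem:np locality}. You take $W$ to be a union of connected components of $E_r = E \oplus \hat{E}$ and then invoke locality with $K = W$ for the \emph{full} $f$-step run. The precondition of \Cref{lem:np locality} is $\Gamma(W) \cap \Gamma(U \setminus W) = \varnothing$, where $U = E \cup F_0 \cup \dots \cup F_{f-1}$ is the execution support of that full run. Your choice of $W$ only guarantees separation from $E_r \setminus W$, not from $U \setminus W$: any qubit in $U \setminus E_r$ (for instance a qubit in $E \cap \hat{E}$, or a qubit touched by some $F_i$ and later flipped back) may very well share a generator with a qubit of $W$. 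In that case the restricted run you describe, with flips $F_i \cap W$, is simply not a valid execution of \Cref{algo decodage qec parallel} on $(E \cap W,\tilde D)$, and the chain of inequalities built on top of it (the use of \Cref{lem:pn synd decrease} for the restricted run, and even the identity $\sigma_0^W = \sigma \cap \Gamma_X(W)$) collapses.

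The paper avoids this by working with a \emph{moving} candidate witness rather than a fixed one. For each step $i$ it sets $U_i = (E \oplus F_0 \oplus \dots \oplus F_{i-1}) \cup F_i \cup \dots \cup F_{f-1}$, which is exactly the execution support of the partial run starting at step $i$, and lets $W_i$ be the union of the connected components of $U_i$ meeting $S$. Now the locality hypothesis \emph{is} satisfied (for the $(f-i)$-step run), so \Cref{lem:pn synd decrease} can be applied to the restricted syndrome $s_i = \sigma_i \cap \Gamma(W_i)$. Since $W_i$ shrinks with $i$, one obtains $|s_{i+\chi}| \le \eta |s_i|$ whenever $|s_i| \ge c|D \cap X_i|$, and after $f_0(|\sigma|)$ steps some index $i$ must satisfy $|s_i| \le c |D \cap X_i|$. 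The witness is that particular $W_i$, and its size is bounded by splitting $|W_i| \le |E_r \cap W_i| + \sum_{k \ge i} |F_k \cap W_i|$ and controlling both pieces via \Cref{lem:robust} and the per-step syndrome decrease. In short, your simpler choice $W = W_f$ is the right intuition for the endpoint, but one cannot reach it by a single application of locality to the full run; the intermediate sets $W_i$ are what make the locality hypothesis legitimate.
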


\begin{proof}
  For any $S \subseteq E_\textrm{R} := E \oplus \hat{E}$ and all $i \in \llbracket 0;f\rrbracket$ we define:
  \begin{itemize}
  \item $U_i := (E \oplus F_0 \oplus \ldots \oplus F_{i-1}) \cup F_i \cup \ldots \cup F_{f-1} = E_{\textrm{R}} \cup F_i \cup \ldots \cup F_{f-1}$, 
  \item $W_i$, the union of the connected components of $U_i$ which contains at least one element of $S$,
  \item $E_i := (E \oplus F_0 \oplus \ldots \oplus F_{i-1}) \cap W_i$,
  \item $X_i := \Gamma(W_i)$,
  \item $s_i := \sigma_i \cap X_i = \sigma_X(E_i) \oplus (D \cap X_i)$.
  \end{itemize}
  Note that $U_{i+1} \subseteq U_i, W_{i+1} \subseteq W_i$ and $X_{i+1} \subseteq X_i$.
  \\We are going to show that there is at least one $i \in \llbracket 0;f\rrbracket$ such that $W_i$ is a witness for $S$.
  Suppose that for some $i \in \llbracket 0;f-\chi\rrbracket$, we have $|s_i| \geq c |D \cap X_i|$. By locality (\Cref{lem:np locality} applied with $f - i$ steps, for $K = W_i$ and errors $E \oplus F_0 \oplus \ldots \oplus F_{i-1}, D$), there is a valid execution of \Cref{algo decodage qec parallel} with $f - i$ steps on input $(E_i, D \cap X_i)$ which flips $F_i \cap W_i, \ldots, F_{f-1} \cap W_i$. Note that the remaining error is $E_i \oplus (F_i \cap W_i) \oplus \ldots \oplus (F_{f-1} \cap W_i) = E_\textrm{R} \cap W_i$. Hence for $k \in \llbracket i;f-1 \rrbracket$:
  \begin{align}\label{eq6}
    \Delta(\sigma_k \cap X_i, F_k \cap W_i) \geq \frac{\beta d_A}{2} |F_k \cap W_i|.
  \end{align}
  And by \Cref{lem:pn synd decrease}:
  \begin{align*}
    |\sigma_{i + \chi} \cap X_i| \leq \eta |\sigma_{i} \cap X_i|.  
  \end{align*}
  Hence using $X_{i+\chi} \subseteq X_i$:
  \begin{align*}
    |s_{i+\chi}|
    = |\sigma_{i + \chi} \cap X_{i+\chi}|
    \leq |\sigma_{i + \chi} \cap X_{i}|
    \leq \eta |\sigma_{i} \cap X_i|
    = \eta |s_i|.
  \end{align*}
  The last inequality is true as soon as $|s_i| \geq c |D \cap X_i|$ but we run \Cref{algo decodage qec parallel} with $f = f_0(|\sigma|) \geq f_0(|s_0|)$ steps hence there is at least one $i \in \llbracket 0;f\rrbracket$ such that:
  \begin{align}\label{eq7}
    |s_i| \leq c |D \cap X_i|.
  \end{align}

  $E_\textrm{R} \cap W_i$ is reduced as a subset of the reduced error $E_\textrm{R}$ (\Cref{lem:subset reduced}) thus we apply \Cref{lem:robust} to get:
  \begin{align}\label{eq8}
    |\sigma_X(E_\textrm{R} \cap W_i)| \geq \frac{\beta_0 d_A}{2} |E_\textrm{R} \cap W_i|  
  \end{align}
  As a conclusion, \cref{eq6,eq7,eq8} imply that $W_i$ is a $c_3$-witness for $S$:
  \begin{align*}
    |W_i|
    & = |W_i \cap U_i|\\
    & \leq |E_\textrm{R} \cap W_i| + \sum_{k = i}^{f-1} |F_k \cap W_i|\\
    & \leq \frac{2}{\beta_0 d_A} |\sigma_X(E_\textrm{R} \cap W_i)| + \frac{2}{\beta d_A}\sum_{k = i}^{f-1} \Delta(\sigma_k \cap X_i, F_k \cap W_i) \text{\qquad by \cref{eq6,eq8}}\\
    & \leq \frac{2}{\beta d_A} \left(|\sigma_X(E_\textrm{R} \cap W_i)| + |\sigma_i \cap X_i| - |\sigma_{f} \cap X_i|\right)\\
    & = \frac{2}{\beta d_A} \left(|\sigma_X(E_\textrm{R} \cap W_i)| + |s_i| - |\sigma_X(E_\textrm{R} \cap W_i) \oplus (D \cap X_i)|\right)\\
    & \leq \frac{2}{\beta d_A} \left(|D \cap X_i| + |s_i|\right)\\
    & \leq c_3 |D \cap X_i| \text{\qquad by \cref{eq7}}
  \end{align*}
\end{proof}

\begin{lemma}\label{lem:np witness}
  We use the notations of \Cref{subsec:notations} with $\delta < 1/16$, $\beta \in (0; \beta_1)$ and $c > c_2 + 1$ running \Cref{algo decodage qec parallel} with $f = f_0(|\sigma|)$ steps instead of \Cref{algo decodage qec}.
  \\If $|\textrm{MaxConn}_{\alpha_0}(E \cup D)| \leq \gamma_0 \sqrt{n}$ then there is a reduced error $E_\textrm{ls}$ equivalent to the remaining error $E \oplus \hat{E}$ such that for all $S \subseteq E_\textrm{ls}$ there is a $c_3$-witness for $(S,D)$.
\end{lemma}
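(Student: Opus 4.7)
The plan is to closely mirror the proof of \Cref{lem:technical lemma noisy synd}, replacing each ingredient by its parallel counterpart: \Cref{lem:local} by \Cref{lem:np locality}, \Cref{linear number of flips synd error} by \Cref{rq:eq algo}, and \Cref{lem:technical lemma noisy synd component} by \Cref{lem:np small witness}. Let $E_\textrm{ls}$ be a minimum-weight error having the same syndrome as $E \oplus \hat{E}$; in particular $E_\textrm{ls}$ is reduced, and it lies in the coset $(E \oplus \hat{E}) + \cC_Z^\perp$ if and only if $E_\textrm{ls} \equiv E \oplus \hat{E}$, which is exactly what we must establish before constructing witnesses.

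First I would prove the equivalence $E_\textrm{ls} \equiv E \oplus \hat{E}$ component by component. Fix a connected component $K$ of $U \cup E_\textrm{ls}$ in $\cG$. By \Cref{lem:np locality} there is a valid execution of \Cref{algo decodage qec parallel} on input $(E \cap K, D \cap \Gamma_X(K))$ whose support is $U \cap K$ and whose flipped sets are $F_0 \cap K, \ldots, F_{f-1} \cap K$. \Cref{rq:eq algo} restricted to this component gives that $(U \cap K) \cup (D \cap \Gamma_X(K))$ is a $2\alpha_0$-subset of $(E \cap K) \cup (D \cap \Gamma_X(K))$. Using the minimality of $E_\textrm{ls}$ on $K$ to get $|E_\textrm{ls} \cap K| \le |(E \oplus \hat{E}) \cap K| \le |U \cap K|$, together with \Cref{lem:cup alpha subset}, the set $K \cup (D \cap \Gamma_X(K))$ is an $\alpha_0$-subset of $E \cup D$. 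The hypothesis $|\textrm{MaxConn}_{\alpha_0}(E \cup D)| \le \gamma_0 \sqrt{n}$ then forces $|K| \le \gamma_0 \sqrt{n}$, so the weight of $(E \oplus \hat{E} \oplus E_\textrm{ls}) \cap K$ is strictly below the minimum distance and it must be a stabilizer, giving $E_\textrm{ls} \cap K \equiv (E \oplus \hat{E}) \cap K$. Summing over components proves $E_\textrm{ls} \equiv E \oplus \hat{E}$.

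Next I would construct $c_3$-witnesses. Fix $S \subseteq E_\textrm{ls}$ and, for each component $K$ above, set $S_K = S \cap K$. The key trick, as in the sequential case, is to introduce $E' = E_\textrm{ls} \oplus \hat{E}$: since $E$ and $E'$ have identical syndromes, \Cref{algo decodage qec parallel} produces the same sequence $F_0, \ldots, F_{f-1}$ on input $(E', D)$, with remaining error $E_\textrm{ls}$. By \Cref{lem:np locality}, this yields a valid execution on $(E' \cap K, D \cap \Gamma_X(K))$ with remaining error $E_\textrm{ls} \cap K$, which is reduced (as a subset of the reduced $E_\textrm{ls}$, by \Cref{lem:subset reduced}) and of size at most $\gamma_0 \sqrt{n}$. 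I would then invoke \Cref{lem:np small witness} applied to this restricted execution to obtain a $c_3$-witness $W_K \subseteq K$ for $(S_K, D \cap \Gamma_X(K))$. Taking $W := \biguplus_K W_K$, the fact that qubits in distinct components $K_1 \ne K_2$ of $U \cup E_\textrm{ls}$ share no $Z$-type generator (else the direct qubit--qubit edge in $\cG$ would merge the components) gives $\Gamma_X(W_{K_1}) \cap \Gamma_X(W_{K_2}) = \varnothing$, so $|D \cap \Gamma_X(W)| = \sum_K |D \cap \Gamma_X(W_K)| \ge c_3^{-1} \sum_K |W_K| = c_3^{-1} |W|$, and $W \in \mathcal{M}(S)$ because the components of $W$ are contained in the components of the $W_K$'s, each of which meets $S_K$.

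The main obstacle I anticipate is the bookkeeping around the number of steps in \Cref{algo decodage qec parallel}: the algorithm runs for exactly $f = f_0(|\sigma|)$ steps on the global input, whereas the restricted executions would ``naturally'' call for $f_0$ of the much smaller restricted syndrome size. I would argue that running for \emph{more} steps is harmless for the conclusion of \Cref{lem:np small witness}: the proof locates some step $i \le f_0(|\sigma^{(K)}|)$ at which $|s_i| \le c|D \cap X_i|$, and subsequent steps can only further reduce or leave the remaining error untouched (empty $\cF_g$ gives $F_g = \varnothing$), so the same witness construction goes through. Making this extension rigorous, and verifying that the $\mathcal{M}(S)$ property is preserved under the disjoint-union step despite the graph $\cG$ possibly having edges traversing via check-nodes outside $W$, are the two details that will require the most care.
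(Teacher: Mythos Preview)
Your proposal is correct and follows precisely the route the paper takes: the paper's own proof of \Cref{lem:np witness} consists of the single sentence ``Using the locality property (\Cref{lem:np locality}), the proof is identical to the proof of \Cref{lem:technical lemma noisy synd},'' and you have faithfully unpacked that sentence, swapping in the parallel locality lemma, \Cref{rq:eq algo}, and \Cref{lem:np small witness} at the appropriate places. The two details you flag at the end are real but minor: for the step-count issue, note that the proof of \Cref{lem:np small witness} only uses $f \geq f_0(|s_0|)$, and since the restricted syndrome satisfies $|\sigma^{(K)}| = |\sigma \cap \Gamma_X(K)| \leq |\sigma|$ and $f_0$ is nondecreasing, running the global $f_0(|\sigma|)$ steps on the restricted instance is enough; for the $\mathcal{M}(S)$ and disjointness of $\Gamma_X(W_K)$, your argument via the qubit--qubit edges in $\cG$ is exactly what is needed.
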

\begin{proof}
Using the locality property (\Cref{lem:np locality}), the proof is identical to the proof of \Cref{lem:technical lemma noisy synd}.
\end{proof}

With \Cref{lem:np witness}, the proof of \Cref{thm:np correction} is the same as the proof \Cref{thm:correction}.

\subsection*{Acknowledgments}

  We would like to thank Benjamin Audoux, Alain Couvreur, Anirudh Krishna, Vivien Londe, Jean-Pierre Tillich and Gilles Z\'emor for many fruitful discussions on quantum codes as well as Daniel Gottesman for answering our questions about \cite{gottesman2014fault}. We would also like to thank the anonymous reviewers for their useful comments.
  AG and AL acknowledge support from the ANR through the QuantERA project QCDA.


\end{document}